\pgfplotsset{compat=1.14}
\let\Algorithm\algorithm
\renewcommand\algorithm[1][]{\Algorithm[#1]\setstretch{0.9}}
\providecommand{\lra}{LRA\xspace}
\providecommand{\rdl}{RDL\xspace}
\providecommand{\Tau}{\mathrm{T}\xspace}
\NewDocumentCommand\Rm{mg}{%
    \IfNoValueTF{#2}
    {\mathbb{R}_{\max}^{#1}}
    {\mathbb{R}_{\max}^{#1 \times #2}}
}
\newcommand{\Rmax}{\mathbb{R}_{\max}}
\newcommand*{\QEDB}{\hfill\ensuremath{\square}}
\newcommand{\tuple}[1]{\langle #1 \rangle}
\newcommand{\CommentX}[1]{\mbox{}\hfill~#1~}
\begin{document}
\title{Computation of the Transient in Max-Plus Linear Systems via SMT-Solving }
\titlerunning{Computation of Transient in MPL Systems via SMT-Solving }
% If the paper title is too long for the running head, you can set
% an abbreviated paper title here
%
\author{
Alessandro Abate\inst{1}%\orcidID{0000-0002-5627-9093} 
\and
Alessandro Cimatti\inst{2}%\orcidID{0000-0002-1315-6990} 
\and
Andrea Micheli \inst{2}%\orcidID{0000-0002-6370-1061} 
\and
Muhammad Syifa'ul Mufid \inst{1}%\orcidID{0000-0003-0817-1106} 
}
\authorrunning{Alessandro Abate et al.}
% First names are abbreviated in the running head.
% If there are more than two authors, 'et al.' is used.
%
\institute{Department Computer Science, University of Oxford, UK\\
\email{\{alessandro.abate,muhammad.syifaul.mufid\}@cs.ox.ac.uk\\}
\and
Fondazione Bruno Kessler, Italy\\
\email{\{cimatti,amicheli\}@fbk.eu}
}
\maketitle              % typeset the header of the contribution
\begin{abstract}
% This paper discusses a new approach to compute the transient, namely the steps  leading to the periodic regime, for Max-Plus Linear (MPL) systems. 
% The new procedure is based on Satisfiability Modulo Theories (SMT). 
% The performance of the developed SMT-based algorithm is compared with an existing procedure, hinging on linear algebra computations, in computational benchmarks, and displays promising scalability features for systems of large dimensions.

%Max-Plus Linear (MPL) systems are a class of discrete-event systems widely studied for the good balance between expressiveness and tractability. Such systems are used to model the timing of periodic, interdependent events such as the scheduling of trains on a railway network.

This paper proposes a new approach, grounded in Satisfiability Modulo Theories (SMT), to study the transient of a Max-Plus Linear (MPL) system, that is the number of steps leading to its periodic regime. Differently from state-of-the-art techniques, our approach allows the analysis of periodic behaviors for subsets of initial states, as well as the characterization of sets of initial states exhibiting the same specific periodic behavior and transient. Our experiments show that the proposed technique dramatically outperforms state-of-the-art methods based on max-plus algebra computations for systems of large dimensions. 
%\keywords{max-plus linear systems \and transient \and cycle-time vector \and satisfiability modulo theories \and difference logic.}
\end{abstract}
%
% \newpage
% \setcounter{tocdepth}{4}
% \tableofcontents
%
%
\setcounter{footnote}{0}
\section{Introduction}
Max-Plus Linear (MPL) systems are a class of discrete-event systems (DES) that are based on the max-plus algebra, an algebraic system using the two operations of maximisation and addition. MPL systems are employed to model applications with features of synchronization without concurrency, and as such are widely used for applications in transportation networks \cite{Baccelli}, manufacturing \cite{Heidergott} and biological systems \cite{Chris,Comet}. In MPL models, the states correspond to time instances related to discrete events. 

A fundamental and well-studied property of MPL systems is related to the periodic behavior of its states: from an initial vector, the trajectories of an MPL system are eventually periodic (in max-plus algebraic sense) starting from a specific event index called the \textit{transient}, and with a specific period called  \textit{cyclicity} \cite{Baccelli}. As explained in \cite[Section 3.1]{Heidergott}, the transient is closely related to the notion of \textit{cycle-time} vector, which governs the asymptotic behaviour of MPL systems. 

The transient is key to solve a number of fundamental problems of MPL systems, such as reachability analysis \cite{Mufid2020} and bounded model checking \cite{Mufid2019}:   
it plays a crucial role as the ``completeness threshold'' (namely, the maximum iteration that is sufficient for the termination of the algorithm) \cite{CT} for those two problems. 
The computation of the transient is an interesting problem, as it is in general not correlated to the dimension of the MPL system. Thus, it is possible for the resulting transient to be relatively large for a small-dimensional MPL system.  
There are
several known upper bounds \cite{Charron,Merlet,Nowak,Gerardo} for the transient, which are mostly computed via the corresponding precedence graph and are, in practice, much larger than the actual values.

This paper has two specific contributions. The first is to provide a novel procedure to compute the transient by means of Satisfiability Modulo Theory (SMT) solving~\cite{SMT}. 
The main idea underpinning the new method is to transform the problem instance into a formula in difference logic, and then passing the formula into an SMT solver, which outputs the transient. 
More precisely, in order to check the validity of the formula, we check the unsatisfiability of its negation. If the SMT solver reports ``satisfied'', then the original formula admits a counterexample, from which we can refine the formula. 
On the other hand, if SMT solver reports ``unsatisfied'', then from the formula we obtain the transient and the corresponding cyclicity. 
The second contribution of this work is to provide a procedure to synthesize the subset of the state space of an MPL system that corresponds to a specific transient/cyclicity pair.  
We show that one can partition the state space into sets corresponding to different  transient/cyclicity pairs.

The rest of the paper is structured as follows. 
Section 2 describes the basics of MPL systems, including the key notion of cycle-time vector. 
In Section 3, we provide the formal definition of transient over MPL systems and also a standard linear algebra procedure, based on matrix multiplication, to compute the transient 
(cf. Algorithm \ref{comp_cyc_trans}), 
which is later used as a benchmark. 
Section 4 is divided into four parts. The first part provides the background on SMT and including the underlying relevant theory. The translation of inequalities over max-plus algebra to formulae in difference logic is explained in the second part. In the third part, we provide SMT-based methods (cf. Algorithms \ref{trans_cone_smt} and \ref{comp_cyc_trans_smt}) 
to compute the transient. The spatial synthesis problem is discussed in the last part. 
The comparison of the performance of the novel algorithm against the standard linear algebra procedure is presented in Section 5. The paper is concluded with Section 6. The developed code and generated data can be found in \url{https://es.fbk.eu/people/amicheli/resources/formats20/}.

\section{Preliminaries}

\subsection{Max-Plus Linear Systems}

Max-plus algebra is a modification of linear algebra derived over the max-plus semiring  $(\Rmax,\oplus,\otimes)$ where
$\Rmax:=\mathbb{R}\cup\{\varepsilon:=-\infty\}$ and 
$a\oplus b := \max\{a,b\}, a\otimes b := a+b$,
for all $a,b\in \Rmax$. The zero and unit elements of $\Rmax$ are $\varepsilon$ and 0, respectively. 
%By $\Rm{m}{n}$ we define the set of $m\times n$ matrices over max-plus algebra. 
The max-plus algebraic operations can be extended to matrices and vectors in a natural way. For $A,B\in \Rm{m}{n}, C\in \Rm{n}{p}$ and $\alpha\in \Rmax$,
\begin{align}
\nonumber [\alpha\otimes A](i,j)&=\alpha + A(i,j),\\
\nonumber [A\oplus B](i,j)&=A(i,j) \oplus B(i,j),\\
\nonumber [A\otimes C](i,j)&=\bigoplus_{k=1}^n A(i,k)\otimes C(k,j).
\end{align}
Given $A\in\Rm{n}{n}$ and $t\in \mathbb{N}$, $A^{\otimes t}$ denotes $A\otimes\ldots\otimes A$ ($t$ times). For $t=0$, $A^{\otimes 0}$ is an $n$-dimensional max-plus identity matrix where all diagonal and non-diagonal elements are $0$ and $\varepsilon$, respectively.

Given $V=\{v_1,\ldots,v_p\}$ as a set of vectors in $\Rmax^n$, we use the same notation to denote a matrix where all columns are in $V$ i.e., $V(\cdot,i)=v_i$ for $1\leq i \leq p$. A vector $v\in \mathbb{R}^n$
is \textit{a max-plus linear combination} of $V$ if $v=\alpha_1\otimes v_1\oplus \ldots \oplus \alpha_p \otimes v_p$ for some scalars $\alpha_1,\ldots,\alpha_p\in \mathbb{R}$
or equivalently there exists $w\in \mathbb{R}^p$ such that $V\otimes w=v$. The set of all max-plus linear combinations of $V$ is called \textit{max-plus cone}\footnote{Unlike in \cite{Butkovic,Gaubert}, we require each max-plus cone to be a subset of $\mathbb{R}^n$.} and is denoted by $\mathsf{cone}(V)$ \cite{Butkovic}. It is formally expressed as 
\begin{equation}
\label{cone}
    \mathsf{cone}(V)=\{V\otimes w\mid w \in \mathbb{R}^p\}.
\end{equation}
Furthermore, we denote as $v_1,\ldots,v_p$ the basis of  $\mathsf{cone}(V)$. Notice that the max-plus cone is closed under the  operations $\oplus$ and $\otimes$: if $v,w$ are in $\mathsf{cone}(V)$, then so is $\alpha\otimes v\oplus \beta \otimes w$ for $\alpha,\beta\in \mathbb{R}$. Max-plus cones are the analogues of vector subspaces in classical linear algebra.

A dynamical system over the max-plus algebra is called a Max-Plus Linear (MPL) system and is defined as 
\begin{equation}
\textbf{x}(k+1)=A\otimes \textbf{x}(k), ~~k=0,1,\ldots
\label{mpl}
\end{equation}
where $A\in \Rm{n}{n}$ is the system matrix, and vector $\textbf{x}(k)=[x_1(k)~\ldots~ x_n(k)]^\top$ encodes the state variables \cite{Baccelli}. For example, $\textbf{x}$ can be used to represent the time stamps associated to the discrete events, while $k$ corresponds to the events counter. Hence, it is more convenient to consider $\mathbb{R}^n$ (instead of $\Rmax^n)$ as the state space. Applications of MPL systems are significantly found on systems where the time variable is essential, such as transportation networks \cite{Heidergott}, scheduling or  \cite{Alirezaei} manufacturing  \cite{Aleksey} problems, or biological systems \cite{Chris,Comet}. 

%Next, we introduce a few useful notions. 

\begin{definition}[Precedence Graph \cite{Baccelli}]
\normalfont
The precedence graph of $A\in\Rm{n}{n}$, denoted by $\mathcal{G}(A)$, is a weighted directed graph with nodes $1,\ldots,n$ and an edge from $j$ to $i$ with weight $A(i,j)$ for each $A(i,j)\neq \varepsilon$.\QEDB
\end{definition}

\begin{definition}[Regular Matrix \cite{Heidergott}]
\normalfont
A matrix $A\in \Rm{n}{n}$ is called \textit{regular} if $A$ contains at least one finite element in each row. \QEDB
\end{definition}

\begin{definition}[Irreducible Matrix \cite{Baccelli}]
\normalfont
A matrix $A\in \Rm{n}{n}$ is called \textit{irreducible} if $\mathcal{G}(A)$ is strongly connected.\QEDB
\end{definition}

Recall that a directed graph is strongly connected if, for any two different nodes $i,j$, there exists a path from $i$ to $j$. The weight of a path $p=i_1i_2\ldots i_k$ is equal to the sum of the edge weights in $p$. A circuit, namely a path that begins and ends at the same node, is called \textit{critical} if it has maximum average weight, which is the weight divided by the length of the path \cite{Baccelli}.

Each irreducible matrix $A\in\Rmax^{n\times n}$ admits a unique max-plus eigenvalue $\lambda\in\mathbb{R}$ and a corresponding max-plus eigenspace $E(A)=\{{x}\in\mathbb{R}^n\mid A\otimes {x}=\lambda\otimes {x}\}$\footnote{Because we regard  $\mathbb{R}^n$ to be the state space of the MPL system \eqref{mpl}, we only consider eigenvectors with finite  elements.}. The scalar $\lambda$ is equal to the average weight of critical circuits in $\mathcal{G}(A)$, and $E(A)$ can be computed from $A_\lambda^+=\bigoplus_{k=1}^n((-\lambda)\otimes A)^{\otimes k}$. More specifically, $E(A)$ is the max-plus linear combination of the $i^{th}$ column of $A_\lambda^+$, for $i$ such that $A_\lambda^+(i,i)=0$ \cite{Baccelli}. Thus, the eigenspace $E(A)$ is a max-plus cone. A reducible matrix may have multiple eigenvalues, where the maximum one equals to the average weight of critical circuits of $\mathcal{G}(A)$.
\begin{example}
Consider a two-dimensional MPL system $\textbf{x}(k+1)=A\otimes \textbf{x}(k)$,  with 
\[
A=\begin{bmatrix}
    2 &~5 \\
    3&~3
\end{bmatrix}, 
\]
which represents a simple railway network between two cities \cite[Sec. 0.1]{Baccelli}, as shown in \autoref{railway}. The dynamics w.r.t. \eqref{mpl} can be expressed as
\[
\begin{bmatrix}
    x_1(k+1)\\x_2(k+1)
\end{bmatrix}=
\begin{bmatrix}
    \max\{x_1(k)+2,x_2(k)+5\}\\
    \max\{x_1(k)+3,x_2(k)+3\}
\end{bmatrix}.
\]
For $1\leq i,j\leq 2$, the element $A(i,j)$ corresponds to the time taken to travel from station $S_j$ to $S_i$, while $x_i(k)$ is the time of the $k$-th departure at station $S_i$.
\begin{figure}[!ht]
\centering
\includegraphics{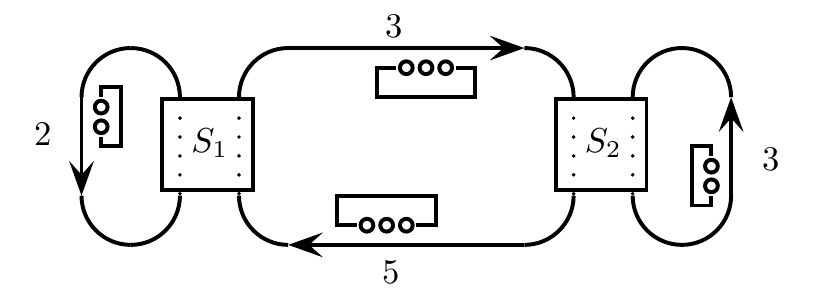}
\caption{A simple railway network represented by an MPL system.}
\label{railway}
\end{figure}\\
From an initial vector, say  $\textbf{x}(0)=[0~0]^\top$, one can compute vectors denoting the next departure times,  as follows
\[
\begin{bmatrix}
    5\\3
\end{bmatrix},
\begin{bmatrix}
    8\\8
\end{bmatrix},
\begin{bmatrix}
    13\\11
\end{bmatrix},
\begin{bmatrix}
    16\\16
\end{bmatrix},\ldots.
\]
Leaving the details aside, the matrix $A$ has eigenvalue $\lambda=4$ and eigenspace $E(A)=\{\textbf{x}\in \mathbb{R}^2\mid x_1-x_2=1\}$. \QEDB
\end{example}
\subsection{Cycle-Time Vector}
This section presents the definition of cycle-time vector of MPL systems. The computation of the cycle-time vector is indeed important, as it can shed light on the asymptotic behavior of MPL systems. In this section, we show its relationship with the eigenspace and eigenvalue of the underlying state matrix. Furthermore, as it will be clear in Section 3, the cycle-time vector can be used to determine whether the states of an MPL system are  eventually periodic.
\begin{definition}[Cycle-Time Vector \cite{Heidergott}]
\normalfont
\label{cycle_time}
Consider a regular MPL system \eqref{mpl}, and assume that for all $j\in \{1,\ldots,n\}$ the quantity $\eta_j$, defined by 
\[\displaystyle \eta_j=\lim_{k\rightarrow +\infty} (x_j(k)/k),\]
exists. Then the vector $\chi=[\eta_1~\ldots~\eta_n]^\top$ is called the the cycle-time vector of the given sequence $\textbf{x}(k)$ with respect to $A$.\QEDB
\end{definition}

It has been shown in \cite[Theorem 3.11]{Heidergott} that if the cycle-time vector of $A$ exists for at least one initial vector then it exists for any initial vector. Instead of computing the limit as in \autoref{cycle_time}, the cycle-time vector can be generated using a procedure \cite[Algorithm 31]{Fahim}.

\begin{theorem}[\cite{Fahim}]
\normalfont
\label{cycle_time_transient}
Suppose we have a regular MPL system \eqref{mpl}. For each $\textbf{x}(0)\in \mathbb{R}^n$ there exist natural numbers $p,q$ such that $\textbf{x}(k+q)=(q\times \chi) + \textbf{x}(k)$ for all $k\geq p$, where $\chi=[\eta_1~\ldots~\eta_n]^\top$ is the cycle-time vector of $A$ and the multiplication $q\times \chi$ is defined in the classical algebra. \QEDB
\end{theorem}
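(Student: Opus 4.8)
The plan is to deduce the trajectory identity from a cyclicity property of the matrix powers $A^{\otimes k}$, handling the general regular case by reduction to the irreducible one via the strongly connected component (SCC) decomposition of $\mathcal{G}(A)$.

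\emph{Irreducible case.} If $A$ is irreducible with max-plus eigenvalue $\lambda$, the cyclicity theorem for irreducible matrices~\cite{Baccelli} supplies a cyclicity $c\in\mathbb{N}$ and an index $p_0$ with $A^{\otimes(k+c)}=(c\lambda)\otimes A^{\otimes k}$ for all $k\geq p_0$. Multiplying on the right by $\textbf{x}(0)$ gives $\textbf{x}(k+c)=(c\lambda)\otimes\textbf{x}(k)$ for $k\geq p_0$; writing $k=p_0+r+mc$ with $0\leq r<c$ we get $x_j(p_0+r+mc)=mc\lambda+x_j(p_0+r)$, so $x_j(k)/k\to\lambda$ and $\chi=[\lambda~\ldots~\lambda]^\top$. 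Taking $q=c$ and $p=p_0$ settles this case, with $q$ uniform in $\textbf{x}(0)$ and only $p$ possibly depending on it.

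\emph{Reducible case.} Order the SCCs $C_1,\dots,C_r$ of $\mathcal{G}(A)$ topologically so that $A$ is block triangular and $\textbf{x}_{C_s}(k+1)=A_{ss}\otimes\textbf{x}_{C_s}(k)\oplus\bigoplus_{C_t\to C_s}A_{st}\otimes\textbf{x}_{C_t}(k)$, each diagonal block $A_{ss}$ being either irreducible (with eigenvalue $\lambda_s$) or a single node without self-loop which, by regularity, is then driven entirely by earlier blocks. I would then prove by induction on $s$ the following \emph{reducible cyclicity statement}: there exist $c\in\mathbb{N}$ and $p$ such that $A^{\otimes(k+c)}(i,j)=c\,\lambda_{ij}+A^{\otimes k}(i,j)$ for all $k\geq p$ and all $i,j$, where $\lambda_{ij}$ is the largest $\lambda_s$ over SCCs $C_s$ lying on some path from $j$ to $i$ in $\mathcal{G}(A)$ (both sides read $\varepsilon$ if $j$ cannot reach $i$). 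The source blocks are exactly the irreducible cyclicity theorem; the inductive step unrolls $\textbf{x}_{C_s}$ as $A_{ss}^{\otimes(k-p)}\otimes\textbf{x}_{C_s}(p)\oplus\bigoplus_{j=p}^{k-1}A_{ss}^{\otimes(k-1-j)}\otimes\textbf{v}(j)$ with $\textbf{v}(j)=\bigoplus_{C_t\to C_s}A_{st}\otimes\textbf{x}_{C_t}(j)$, and, using the cyclicity of $A_{ss}$ and the inductively known eventual periodicity of $\textbf{v}$, rewrites each entry for large $k$ as the maximum of finitely many affine-in-$k$ terms $(\text{const})+k\rho$, with slopes $\rho$ drawn from $\{\lambda_s\}$ together with the slopes inherited from earlier blocks, then records which terms dominate. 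Finally, multiplying the reducible cyclicity identity on the right by $\textbf{x}(0)$ and noting that for $k$ large every index $j$ realizing the maximum in $x_i(k)=\bigoplus_j A^{\otimes k}(i,j)+x_j(0)$ satisfies $\lambda_{ij}=\eta_i$, where $\eta_i=\max\{\lambda_s\mid C_s\text{ reaches }i\text{ in }\mathcal{G}(A)\}$ and one checks $\eta_i=\lim_k x_i(k)/k$, one obtains $\textbf{x}(k+c)=(c\times\chi)+\textbf{x}(k)$ for all sufficiently large $k$; take $q=c$.

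The main obstacle is the reducible cyclicity statement, and within it the domination analysis: one must pick $p$ large enough that every contribution of slope $\rho<\eta_i$ is overtaken by a dominant one, which is possible because the gap $k(\eta_i-\rho)$ grows linearly while the periodic fluctuation around each affine term is bounded — and it is bounded precisely because $\textbf{x}(0)$ is finite and $A$ is regular, so that no entry of $A^{\otimes k}$ or of $\textbf{x}(k)$ ever equals $\varepsilon$. Two secondary points of bookkeeping are: verifying the graph identity $\eta_i=\max\{\lambda_s\mid C_s\text{ reaches }i\}$, which reconciles the surviving slope with the limit definition of $\chi$ in Definition~\ref{cycle_time}; and combining the block cyclicities into a single common period $c$ (their least common multiple) for which the additive-constant relation holds simultaneously on every block. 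The remaining ingredients — the algebra of unrolling the recursion and the classical irreducible cyclicity theorem — are standard.
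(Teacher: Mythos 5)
The paper does not actually prove this theorem: it is imported from \cite{Fahim} (the surrounding text points to \cite[Theorem 3.11]{Heidergott} and to \cite[Algorithm 31]{Fahim}), so there is no in-paper argument to compare yours against. Judged on its own terms, your route is the standard one and is sound in outline: reduce everything to an ``ultimate behaviour'' (reducible cyclicity) theorem for the powers $A^{\otimes k}$ --- there exist $c$ and $p$ with $A^{\otimes(k+c)}(i,j)=c\,\lambda_{ij}+A^{\otimes k}(i,j)$ for all $k\geq p$, where $\lambda_{ij}$ is the largest cycle mean over SCCs met on $j\to i$ paths --- and then right-multiply by a finite $\textbf{x}(0)$, observing that regularity forces every node to be reachable from some nontrivial SCC, so each $\eta_i=\max_j\lambda_{ij}$ is finite, the sub-dominant columns are eventually beaten, and the limit in Definition~\ref{cycle_time} exists and equals $\eta_i$; taking $q=c$ gives the claim. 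This is essentially De Schutter's theorem on the ultimate behaviour of consecutive max-plus matrix powers, and your final ``keep only the maximal-rate columns'' step is correct.

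As submitted, however, this is a plan rather than a proof: virtually all of the difficulty sits inside the reducible cyclicity statement, which you only sketch. The induction over SCCs, the unrolled recursion, and the domination analysis (residue classes modulo $c$, finitely many affine envelopes per class, a uniform threshold $p$, and the lcm of block cyclicities) are precisely where the known proofs spend their effort; you would need either to carry this out in full or to cite it as a black box. Two smaller inaccuracies should also be repaired: (i) $\lambda_{ij}$ can be $\varepsilon$ even when $j$ does reach $i$, namely when every $j\to i$ path avoids all cycles; this is harmless only because then $A^{\otimes k}(i,j)=\varepsilon$ for all $k\geq n$, which deserves a sentence. (ii) Your boundedness justification ``no entry of $A^{\otimes k}$ or of $\textbf{x}(k)$ ever equals $\varepsilon$'' is false for the matrix entries (e.g.\ the $(1,3)$ entry of $B^{\otimes k}$ in Example~\ref{ex_trans} is $\varepsilon$ for every $k$); what is true, and what you actually need, is that $\textbf{x}(k)\in\mathbb{R}^n$ for all $k$ by regularity of $A$, while the $\varepsilon$ entries of $A^{\otimes k}$ simply never contribute to the maximum.
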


By \autoref{cycle_time_transient}, the trajectories of a regular MPL system \eqref{mpl} starting from any initial vector is governed by the corresponding cycle-time vector $\chi$. In general, the elements of $\chi$ may be different, as shown in \cite[Example 1]{Fahim}. However, if $E(A)\neq \emptyset$ then the elements of $\chi$ are all equal.
\begin{proposition}
\label{cycle_time_unique}
\normalfont
Suppose a regular MPL system \eqref{mpl} has maximum eigenvalue $\lambda$. The eigenspace $E(A)$ is not empty iff $\chi=[\lambda~\ldots~\lambda]^T\in \mathbb{R}^n$.
\end{proposition}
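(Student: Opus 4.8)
The plan is to establish the two implications of the biconditional separately, using \autoref{cycle_time_transient} as the workhorse in each direction, together with the closed form $\textbf{x}(k)=A^{\otimes k}\otimes\textbf{x}(0)$.

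\emph{Forward direction.} Assuming $E(A)\neq\emptyset$, I would pick a finite eigenvector $v$, so that $A\otimes v=\lambda\otimes v$, and run the system from $\textbf{x}(0)=v$. A one-line induction gives $\textbf{x}(k)=A^{\otimes k}\otimes v=\lambda^{\otimes k}\otimes v$, i.e.\ $x_j(k)=k\lambda+v_j$ for every $j$ and $k$. Feeding this particular trajectory into \autoref{cycle_time_transient} yields indices $p$ and $q\geq1$ with $x_j(p+q)=q\eta_j+x_j(p)$ for all $j$; substituting the explicit values of $x_j(p+q)$ and $x_j(p)$ forces $\eta_j=\lambda$ for every $j$, hence $\chi=[\lambda~\ldots~\lambda]^\top$. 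This vector is finite because a regular matrix always has at least one circuit in its precedence graph, so $\lambda\in\mathbb{R}$.

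\emph{Converse.} Assuming $\chi=[\lambda~\ldots~\lambda]^\top$, I would fix an arbitrary $\textbf{x}(0)\in\mathbb{R}^n$ and take $p,q\geq1$ as in \autoref{cycle_time_transient}. Because every entry of $q\times\chi$ equals $q\lambda$, the relation $\textbf{x}(p+q)=(q\times\chi)+\textbf{x}(p)$, combined with $\textbf{x}(p+q)=A^{\otimes q}\otimes\textbf{x}(p)$, becomes the max-plus eigenrelation $A^{\otimes q}\otimes y=\lambda^{\otimes q}\otimes y$, where $y:=\textbf{x}(p)$. Thus $y$ is an eigenvector of $A^{\otimes q}$, but a priori not of $A$. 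To repair this I would pass to the ``cyclic average''
\[
z:=\bigoplus_{i=0}^{q-1}(-i\lambda)\otimes A^{\otimes i}\otimes y,
\]
and verify $A\otimes z=\lambda\otimes z$ by distributing $A$ over $\oplus$, shifting the summation index by one, and using $A^{\otimes q}\otimes y=\lambda^{\otimes q}\otimes y$ to fold the would-be $i=q$ term back onto the $i=0$ term. Each summand equals $(-i\lambda)\otimes\textbf{x}(p+i)$, which is finite since regularity of $A$ keeps the whole trajectory in $\mathbb{R}^n$; hence $z\in\mathbb{R}^n$, so $z\in E(A)$ and $E(A)\neq\emptyset$.

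The forward direction is essentially bookkeeping. The step I expect to require the most care is the converse, and specifically the move from ``$y$ is an eigenvector of $A^{\otimes q}$ for $\lambda^{\otimes q}$'' to ``there is a \emph{finite} eigenvector of $A$ for the maximal eigenvalue $\lambda$'': the cyclic-average construction is the standard device for this, but one should confirm that $z$ is not the all-$\varepsilon$ vector (it dominates the finite vector $y$) and that the eigenvalue it produces is the prescribed $\lambda$ rather than some smaller eigenvalue of a reducible $A$ — which is automatic here, since $\lambda$ enters the argument fixed as the common entry of $\chi$ rather than being discovered along the way.
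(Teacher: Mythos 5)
Your argument is correct and follows essentially the same route as the paper: the forward direction uses the explicit trajectory $x_j(k)=k\lambda+v_j$ from an eigenvector (the paper computes the limit in the definition of $\chi$ directly, whereas you detour through \autoref{cycle_time_transient}, but the substance is identical), and your ``cyclic average'' $z$ is exactly the paper's eigenvector $v=\bigoplus_{i=1}^{q}(\lambda\times(q-i))\otimes\textbf{x}(p+i-1)$ up to the irrelevant scalar shift $\lambda(q-1)$. The folding of the $i=q$ term onto the $i=0$ term via $A^{\otimes q}\otimes y=\lambda^{\otimes q}\otimes y$ and the finiteness check via regularity are precisely the ``one could check'' steps the paper leaves implicit.
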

\begin{proof}
$(\Rightarrow)$ Suppose $E(A)\neq \emptyset$. By taking $\textbf{x}(0)\in E(A)$, we have $\textbf{x}(k+1)=\lambda\otimes \textbf{x}(k)$ for $k \geq 0$  which implies $x_j(k)=(\lambda \times k) + x_j(0)$ for $j\in \{1,\ldots,n\}$. It is straightforward that $\lim_{k\rightarrow +\infty} x_j(k)/k=\lambda$ for all $j\in\{1,\ldots,n\}$. \\
$(\Leftarrow)$ Suppose $\textbf{x}(0)\in\mathbb{R}^n$. By \autoref{cycle_time_transient}, there exist $p,q$ such that $\textbf{x}(p+q)=q\times \chi + \textbf{x}(p)$. Because $\chi=[\lambda~\ldots~\lambda]$, it can be written as $\textbf{x}(p+q)=(q\times \lambda)\otimes \textbf{x}(p)$. Let $$v=\bigoplus_{i=1}^{q}(\lambda\times  (q-i))\otimes \textbf{x}(p+i-1).$$
One could check that $v\in \mathbb{R}^n$ and $A\otimes v=\lambda\otimes v$. Thus, $E(A)\neq \emptyset$.\QEDB 
\end{proof}
\section{Transient in Max-Plus-Linear Systems}
The transient of MPL systems is related to the sequence of the powers of matrix $A$, namely $A^{\otimes{k}}$ for $k\geq 0$.
\begin{proposition}[Transient \cite{Baccelli,Heidergott}] 
\label{transient_bound}
\normalfont
For an irreducible matrix $A\in\Rmax^{n\times n}$ and its max-plus eigenvalue $\lambda\in\mathbb{R}$, there exist $k_0,c\in\mathbb{N}_0$, such that $A^{\otimes(k+c)}=(\lambda\times c)\otimes A^{\otimes k}$ for all $k\geq k_0$. The smallest such $k_0$ and $c$ are called the \textit{transient} and the \textit{cyclicity} of $A$, respectively. \QEDB
\label{trans}
\end{proposition}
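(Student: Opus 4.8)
The plan is to pass to the normalised matrix $\tilde A := (-\lambda)\otimes A$, which is again irreducible and has max-plus eigenvalue $0$; since $A^{\otimes k} = (\lambda\times k)\otimes \tilde A^{\otimes k}$, the claim is equivalent to the assertion that $\tilde A^{\otimes(k+c)} = \tilde A^{\otimes k}$ for some $c\in\mathbb{N}_0$ and all $k$ beyond some $k_0$. Throughout I would use the combinatorial reading of matrix powers: $\tilde A^{\otimes k}(i,j)$ is the maximal weight of a walk of length exactly $k$ from $j$ to $i$ in $\mathcal{G}(A)$, and equals $\varepsilon$ if no such walk exists. Because the eigenvalue is $0$, every circuit of $\mathcal{G}(A)$ has weight $\leq 0$ and the critical circuits are exactly those of weight $0$; write $\mathcal{G}^c$ for the critical subgraph (the nodes and edges lying on critical circuits) and fix $\delta>0$ such that every elementary non-critical circuit has weight $\leq -\delta$.

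The first block of work is two uniform estimates. Decomposing an arbitrary walk into an elementary path together with a collection of circuits, and using that circuit weights are non-positive, yields a constant $M$ with $\tilde A^{\otimes k}(i,j)\leq M$ for all $k,i,j$. Conversely, a walk of length $k$ that never meets a node of $\mathcal{G}^c$ must, once $k>2n$, contain at least $(k-2n)/n$ disjoint non-critical circuits, hence has weight at most $M-\delta(k-2n)/n$. On the other hand, exploiting strong connectivity, for every large $k$ in the residue class for which a $j$-to-$i$ walk exists at all one can build such a walk that enters $\mathcal{G}^c$, loops around a critical circuit for free, and then leaves; this gives a constant lower bound $\tilde A^{\otimes k}(i,j)\geq -C$ whenever $\tilde A^{\otimes k}(i,j)\neq\varepsilon$. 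Comparing the two bounds, there is $k_1$ so that for $k\geq k_1$ every maximum-weight walk of length $k$ from $j$ to $i$, when one exists, passes through some node $\ell$ of $\mathcal{G}^c$.

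Next comes the pumping step, where I take $c$ to be the cyclicity of $\mathcal{G}^c$ --- the least common multiple, over the strongly connected components of $\mathcal{G}^c$, of the g.c.d. of the circuit lengths inside that component. At any node $\ell$ on a critical circuit whose component has cyclicity $c_\ell\mid c$ there is a closed walk inside $\mathcal{G}^c$ of length exactly $c$ and weight $0$. Splicing such a loop into a maximum-weight length-$k$ walk from $j$ to $i$ at the critical node supplied by the previous step shows $\tilde A^{\otimes(k+c)}(i,j)\geq \tilde A^{\otimes k}(i,j)$ for all $k\geq k_1$ (trivially when the right-hand side is $\varepsilon$). Now combine this with boundedness: every finite entry is the weight of an elementary path plus a sum of circuit weights, and since the entries lie in the fixed interval $[-C,M]$ while each non-critical circuit contributes at most $-\delta$, only finitely many real values occur among the $\tilde A^{\otimes k}(i,j)$; a sequence taking finitely many values that is eventually non-decreasing along each residue class modulo $c$ must be eventually constant along that class. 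The $\varepsilon$ pattern is consistent with the same period, because the set of lengths of $j$-to-$i$ walks is eventually a union of residue classes modulo the cyclicity $\gamma$ of $\mathcal{G}(A)$ (the g.c.d. of its circuit lengths), and $\gamma$ divides $c$ since every critical circuit is a circuit of $\mathcal{G}(A)$, so its length is a multiple of $\gamma$, hence so is each $c_\ell$ and their lcm $c$. Taking $k_0$ to be the largest of the thresholds obtained over the finitely many pairs $(i,j)$ then gives $\tilde A^{\otimes(k+c)} = \tilde A^{\otimes k}$ for all $k\geq k_0$, which is what was needed.

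The step I expect to be the real obstacle is the graph-theoretic bookkeeping around the critical graph: making rigorous that maximum-weight long walks must spend all but boundedly many steps inside $\mathcal{G}^c$ (including the lower-bound construction used above), and that the residue-class structure of walk lengths --- both within $\mathcal{G}^c$ and in $\mathcal{G}(A)$ --- is compatible with the single period $c$. This is exactly where the theory of the cyclicity (index of imprimitivity) of strongly connected graphs is needed; once those facts are in hand, the splicing argument together with the ``finitely many values plus eventual monotonicity implies eventual constancy'' observation closes the argument without further difficulty. If one only wants existence of some $c$ and $k_0$, as the statement asks, rather than the sharp cyclicity, it is cheaper to let $c$ be any common multiple of the critical-circuit lengths and of $\gamma$, which lightens this bookkeeping considerably.
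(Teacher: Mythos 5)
The paper offers no proof of this proposition: it is quoted as the classical cyclicity theorem of max-plus algebra, cited from Baccelli et al.\ and Heidergott et al., and the terminal $\square$ placed directly after the statement signals that the authors take it as known. There is therefore no in-paper argument to compare yours against. What you have written is, in outline, the standard literature proof: normalise by $-\lambda$, read $\tilde A^{\otimes k}(i,j)$ as a maximal walk weight, show that optimal long walks must visit the critical graph, pump a zero-weight critical loop, and conclude via eventual monotonicity of a sequence taking finitely many values, with the $\varepsilon$-pattern controlled by the imprimitivity index $\gamma$ of $\mathcal{G}(A)$. That outline is sound, and the individual estimates you invoke (uniform upper bound from elementary paths, weight decay of walks avoiding the critical graph, the uniform lower bound via a detour through a critical circuit) are all correct, though the lower-bound construction does require the residue-class facts you defer to.

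One step as written does not hold, although your own fallback repairs it. You assert that at a critical node $\ell$ whose component of the critical graph has cyclicity $c_\ell$ dividing $c$ there is a closed walk inside the critical graph of length \emph{exactly} $c$ and weight $0$, where $c$ is the lcm of the component cyclicities. This is false in general: a component containing elementary critical circuits of lengths $4$ and $6$ has $c_\ell=2$, yet admits no closed walk of length $2$; closed-walk lengths at $\ell$ only \emph{eventually} cover all multiples of $c_\ell$. Establishing the sharp period (the cyclicity of the critical graph) therefore needs a finer argument than a single splice of length $c$. For the statement actually at issue --- mere existence of some pair $(k_0,c)$ --- your cheaper choice of $c$ as a common multiple of $\gamma$ and of the lengths of the elementary critical circuits does make the spliced loop literally available (every critical node lies on an elementary critical circuit, since a weight-$0$ circuit decomposes into elementary circuits that are each forced to have weight $0$), and the rest of your argument then goes through. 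I would promote that fallback from a closing remark to the main line of the proof.
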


For the rest of this paper, we denote the transient and the cyclicity of $A$ as $\mathsf{tr}(A)$ and $\mathsf{cyc}(A)$, respectively. While $\mathsf{cyc}(A)$ is related to critical circuits in the precedence graph $\mathcal{G}(A)$ (see \cite[Definition 3.94]{Baccelli} for more details\footnote{In this reference, one can find the cyclicity for reducible and irreducible matrices using graph-theoretical approaches.}), $\mathsf{tr}(A)$ is unrelated to the dimension of $A$. Even for a small $n$, the transient of $A\in \Rm{n}{n}$ can be large. Upper bounds of the transient have been discussed in \cite{Charron,Merlet,Nowak,Gerardo}. 

By \autoref{trans}, each \emph{irreducible} MPL system enjoys a \textit{periodic} behaviour with a rate $\lambda$: for each initial vector $\textbf{x}(0)\in \mathbb{R}^n$ we have $\textbf{x}(k+\mathsf{cyc}(A))=(\lambda \times \mathsf{cyc}(A))\otimes \textbf{x}(k)$ for all $k\geq \mathsf{tr}(A)$ where the vectors $\textbf{x}(1),\textbf{x}(2)$ are computed recursively by \eqref{mpl}. A similar condition may be found on %more general 
reducible MPL systems: we denote the corresponding transient and cyclicity as global, as per \autoref{transient_bound}. The local transient and cyclicity for a specific initial vector $\textbf{x}\in\mathbb{R}^n$ and for a set $X\subseteq \mathbb{R}^n$ is defined as follows. 
\begin{definition}
\normalfont
Given $A\in\Rm{n}{n}$ with maximum eigenvalue $\lambda$ and an initial vector $\textbf{x}\in \mathbb{R}^n$, the local transient and cyclicity of $\textbf{x}(0)$ w.r.t. $A$ are respectively the smallest $k_0,c\in \mathbb{N}_0$ such that
$\textbf{x}(k+c)=\lambda c\otimes \textbf{x}(k)$ for all $k\geq k_0$.
We denote those scalars as $\mathsf{tr}(A,\textbf{x})$ and $\mathsf{cyc}(A,\textbf{x})$, respectively. Furthermore, for $X\subseteq \mathbb{R}^n,\mathsf{tr}(A,X)=\max\{\mathsf{tr}(A,\textbf{x}(0))\mid \textbf{x}(0)\in X\}$ and $\mathsf{cyc}(A,X)=\texttt{lcm}\{\mathsf{cyc}(A,\textbf{x}(0))\mid \textbf{x}(0)\in X\}$, where $\texttt{lcm}$ stands for the ``least common multiple''. \QEDB
\label{trans_set}
\end{definition}

By definition, we have $\mathsf{tr}(A,\mathbb{R}^n)=\mathsf{tr}(A)$. For a max-plus cone $X=\mathsf{cone}(V)$, we show that the local cyclicity and transient can be computed from the corresponding bases, provided that $\mathsf{tr}(A,v_i)$ exists for all $1\leq i\leq p$.
\begin{proposition}
\normalfont
Given a max-plus cone $X=\mathsf{cone}(V)$ where $V=\{v_1,\ldots,v_p\}$, we have $\mathsf{tr}(A,X)=\mathsf{tr}(A,V)=\max\{\mathsf{tr}(A,v)\mid v\in V \}$, and $\mathsf{cyc}(A,X)=\mathsf{cyc}(A,V)=\texttt{lcm}\{\mathsf{cyc}(A,v) \mid  v\in V\}$. 
\end{proposition}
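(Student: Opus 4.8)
The plan is to establish each of the two identities by proving two inclusions in the appropriate ordering (ordinary $\le$ for the transient, divisibility for the cyclicity). The equalities $\mathsf{tr}(A,V)=\max\{\mathsf{tr}(A,v)\mid v\in V\}$ and $\mathsf{cyc}(A,V)=\texttt{lcm}\{\mathsf{cyc}(A,v)\mid v\in V\}$ are nothing but \autoref{trans_set} instantiated at the finite set $X:=V$, so the real content is to compare the quantities attached to $V$ with those attached to $X=\mathsf{cone}(V)$. One direction is immediate from $V\subseteq X$: $\mathsf{tr}(A,V)=\max_{v\in V}\mathsf{tr}(A,v)\le \max_{v\in X}\mathsf{tr}(A,v)=\mathsf{tr}(A,X)$, and $\mathsf{cyc}(A,V)$ divides $\mathsf{cyc}(A,X)$ since the least common multiple taken over a subset divides the one taken over the whole set.

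For the reverse direction, I would set $k_0:=\mathsf{tr}(A,V)$ and $c:=\mathsf{cyc}(A,V)$ (both finite by the standing hypothesis that $\mathsf{tr}(A,v_i)$ exists for every basis vector) and show that every $v\in X$ already satisfies $\mathsf{tr}(A,v)\le k_0$ and $\mathsf{cyc}(A,v)\mid c$. Fix $v\in X$ and write $v=\alpha_1\otimes v_1\oplus\cdots\oplus\alpha_p\otimes v_p$ with $\alpha_i\in\mathbb{R}$. First, for each $i$, since $k_0\ge \mathsf{tr}(A,v_i)$ and $c$ is a multiple of $\mathsf{cyc}(A,v_i)$, iterating the defining relation of $\mathsf{cyc}(A,v_i)$ the appropriate number of times yields $A^{\otimes(k+c)}\otimes v_i=(\lambda c)\otimes A^{\otimes k}\otimes v_i$ for all $k\ge k_0$. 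Then, using associativity of $\otimes$ on matrices together with distributivity of $\otimes$ over $\oplus$ — i.e. the max-plus linearity of $\textbf{x}\mapsto A^{\otimes k}\otimes\textbf{x}$ — I obtain, for every $k\ge k_0$,
\[
A^{\otimes(k+c)}\otimes v \;=\; \bigoplus_{i=1}^p \alpha_i \otimes \bigl(A^{\otimes(k+c)}\otimes v_i\bigr) \;=\; (\lambda c)\otimes\bigoplus_{i=1}^p \alpha_i\otimes\bigl(A^{\otimes k}\otimes v_i\bigr) \;=\; (\lambda c)\otimes A^{\otimes k}\otimes v .
\]
Hence $k_0$ and $c$ witness the eventual periodicity of the trajectory issued from $v$, so $\mathsf{tr}(A,v)\le k_0$ and $c$ is a valid period for $v$; in particular $\mathsf{tr}(A,v)$ exists for all $v\in X$, not merely for the basis vectors.

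To upgrade ``$c$ is a valid period for $v$'' to $\mathsf{cyc}(A,v)\mid c$, I would pass to the detrended sequence $\textbf{y}(k):=\textbf{x}(k)-\lambda k$ (classical, componentwise subtraction): the relation $\textbf{x}(k+c)=(\lambda c)\otimes\textbf{x}(k)$ is equivalent to $\textbf{y}(k+c)=\textbf{y}(k)$, so $\mathsf{cyc}(A,v)$ is exactly the minimal eventual period of $\textbf{y}$, and the minimal eventual period of a sequence divides every eventual period, by the standard fact that a sequence eventually periodic with periods $c_1$ and $c_2$ is eventually periodic with period $\gcd(c_1,c_2)$. Taking the maximum over $v\in X$ gives $\mathsf{tr}(A,X)\le k_0=\mathsf{tr}(A,V)$, and since $\mathsf{cyc}(A,v)\mid c$ for every $v\in X$, their least common multiple $\mathsf{cyc}(A,X)$ divides $c=\mathsf{cyc}(A,V)$ — which in passing shows that the $\texttt{lcm}$ over the infinite set $X$ is finite and hence well defined. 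Combining with the trivial direction yields both identities. I expect the only genuinely delicate point to be precisely this divisibility claim (and the attendant well-definedness of $\texttt{lcm}$ over $X$); the rest is bookkeeping with the semiring axioms and with the iteration of the period relation.
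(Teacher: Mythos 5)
Your proof follows the same core route as the paper's: the trivial direction from $V\subseteq X$, and for the reverse direction the max-plus linearity of $\textbf{x}\mapsto A^{\otimes k}\otimes \textbf{x}$ applied to a combination $v=\bigoplus_i \alpha_i\otimes v_i$, with $k_0=\mathsf{tr}(A,V)$ and $c=\mathsf{cyc}(A,V)$ as witnesses. The one place where you go beyond the paper is the cyclicity claim. The paper's proof stops at ``$\mathsf{cyc}(A,\textbf{x})\leq c^\ast$'' and then concludes $\mathsf{cyc}(A,X)\leq\mathsf{cyc}(A,V)$; but since $\mathsf{cyc}(A,X)$ is defined as an $\texttt{lcm}$ over $X$, a pointwise bound $\mathsf{cyc}(A,\textbf{x})\leq c^\ast$ does not by itself control the $\texttt{lcm}$ (two periods $2$ and $3$, each at most $4$, have $\texttt{lcm}$ equal to $6$). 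What is actually needed is that $\mathsf{cyc}(A,\textbf{x})$ \emph{divides} $c^\ast$ for every $\textbf{x}\in X$, which is exactly what your detrended-sequence argument (minimal eventual period divides every eventual period, via the $\gcd$ of two periods) supplies; it also gives the finiteness of the $\texttt{lcm}$ over the infinite set $X$, which the paper leaves implicit. So your proposal is correct, matches the paper's strategy, and in fact closes a small gap in the paper's handling of the cyclicity identity.
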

\begin{proof}
Suppose $\textbf{x} \in X$. It follows that there exist scalars $\alpha_1,\dots,\alpha_p$ such that
$\textbf{x} = \bigoplus_{i=1}^p (\alpha_i \otimes v_i)$.
Let $k_0^\ast = \max\{\mathsf{tr}(A,v_i)\mid 1\leq i\leq p \}$ and $c^\ast = \texttt{lcm}\{\mathsf{cyc}(A,v_i) \mid  1\leq i\leq p\}$.
Then, we obtain
\begin{align*}
A^{\otimes (k_0^\ast + c^\ast)} \otimes \bigoplus_{i=1}^p (\alpha_i \otimes v_i)&= \bigoplus_{i=1}^p \left(\alpha_i \otimes A^{\otimes (k_0^\ast + c^\ast)} \otimes v_i\right) , \\
&= \bigoplus_{i=1}^p \left(\alpha_i \otimes \lambda^{\otimes c^\ast} \otimes A^{\otimes k_0^\ast} \otimes v_i\right) , \\
&= \lambda^{\otimes c^\ast} \otimes A^{\otimes k_0^\ast} \otimes \bigoplus_{i=1}^p \left(\alpha_i  \otimes v_i\right) , 
\end{align*}
which shows that $\mathsf{tr}(A,\textbf{x}) \leq k_0^\ast$ and $\mathsf{cyc}(A,\textbf{x}) \leq c^\ast$. Thus, $\mathsf{tr}(A,X) \leq \mathsf{tr}(A,V)$ and $\mathsf{cyc}(A,X) \leq \mathsf{cyc}(A,V)$. On the other hand, because $V\subseteq X$, we have $\mathsf{tr}(A,V)\leq \mathsf{tr}(A,X)$ and $\mathsf{cyc}(A,V)\leq \mathsf{cyc}(A,X)$. Hence, we can conclude that $\mathsf{tr}(A,V)= \mathsf{tr}(A,X)$ and $\mathsf{cyc}(A,V)= \mathsf{cyc}(A,X)$.\QEDB
\end{proof}
\begin{definition}
\normalfont
Suppose we have a regular matrix $A\in \Rm{n}{n}$. The underlying MPL system \eqref{mpl} is classified into three categories as follows:
\begin{itemize}
    \item[i.] \textit{never periodic}: $\mathsf{tr}(A,\textbf{x}(0))$ does not exist for all $\textbf{x}(0)\in \mathbb{R}^n$,
    \item[ii.] \textit{boundedly periodic}: $\mathsf{tr}(A,\textbf{x}(0))$ exists for all $\textbf{x}(0)\in \mathbb{R}^n$ and $\mathsf{tr}(A)$ exists,
    \item[iii.] \textit{unboundedly periodic}: $\mathsf{tr}(A,\textbf{x}(0))$ exists for all $\textbf{x}(0)\in \mathbb{R}^n$ but $\mathsf{tr}(A)$ does not.
\end{itemize}
We call \eqref{mpl} \textit{periodic} if it is either \textit{unboundedly periodic} or \textit{boundedly periodic}. \QEDB
\end{definition}

We show that the periodic behavior of an MPL system is indeed related to the eigenspace and cycle-time vector of its corresponding state matrix.

\begin{theorem}
\label{periodic_everywhere}
\normalfont
Suppose we have a regular matrix $A\in \Rm{n}{n}$ with a maximum eigenvalue $\lambda$ and cycle-time vector $\chi$. The following statements are equivalent.
\begin{itemize}
    \item[$a$.] The underlying MPL system \eqref{mpl} is \textit{periodic}.
    \item[$b$.] The corresponding cycle-time vector is $\chi=[\lambda~\ldots~\lambda]^\top\in \mathbb{R}^n$.
    \item[$c$.] The eigenspace $E(A)$ is not empty. 
\end{itemize}
\end{theorem}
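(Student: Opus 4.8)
The plan is to funnel everything through Proposition~\ref{cycle_time_unique}, which already supplies the equivalence $b \Leftrightarrow c$ at no cost; it therefore remains only to prove $c \Rightarrow a$ and $a \Rightarrow c$, after which the three statements are equivalent.

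\emph{Direction $c \Rightarrow a$.} Assume $E(A) \neq \emptyset$. By Proposition~\ref{cycle_time_unique} the cycle-time vector is $\chi = [\lambda~\ldots~\lambda]^\top \in \mathbb{R}^n$. Fix an arbitrary $\textbf{x}(0) \in \mathbb{R}^n$. Theorem~\ref{cycle_time_transient} then yields $p,q\in\mathbb{N}$ with $\textbf{x}(k+q) = (q\times\chi) + \textbf{x}(k) = \lambda q \otimes \textbf{x}(k)$ for all $k\geq p$, so the set of indices $k_0$ admitting a period $c\geq 1$ with $\textbf{x}(k+c)=\lambda c\otimes\textbf{x}(k)$ for all $k\geq k_0$ is non-empty; hence $\mathsf{tr}(A,\textbf{x}(0))$ exists in the sense of Definition~\ref{trans_set}. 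As $\textbf{x}(0)$ was arbitrary, the local transient exists everywhere, so \eqref{mpl} is boundedly or unboundedly periodic, i.e.\ periodic.

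\emph{Direction $a \Rightarrow c$.} Assume \eqref{mpl} is periodic. Then $\mathsf{tr}(A,\textbf{x}(0))$ exists for every $\textbf{x}(0)\in\mathbb{R}^n$; pick one such $\textbf{x}(0)$ together with $k_0\in\mathbb{N}_0$ and a period $c\geq 1$ such that $\textbf{x}(k+c)=\lambda c\otimes\textbf{x}(k)$ for all $k\geq k_0$, so in particular $\textbf{x}(k_0+c)=\lambda c\otimes\textbf{x}(k_0)$. Mimicking verbatim the $(\Leftarrow)$ part of the proof of Proposition~\ref{cycle_time_unique}, set
\[
v = \bigoplus_{i=1}^{c} \bigl(\lambda(c-i)\bigr) \otimes \textbf{x}(k_0+i-1).
\]
Using $A\otimes\textbf{x}(k)=\textbf{x}(k+1)$ to shift the summation index and the identity $\textbf{x}(k_0+c)=\lambda c\otimes\textbf{x}(k_0)$ to match the two extreme summands, a direct computation gives $A\otimes v=\lambda\otimes v$; moreover $v\in\mathbb{R}^n$ since $\lambda$ and all $\textbf{x}(k_0+i-1)$ are finite. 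Hence $v\in E(A)$, so $E(A)\neq\emptyset$. Combining the two directions gives $a\Leftrightarrow c$, and together with $b\Leftrightarrow c$ (Proposition~\ref{cycle_time_unique}) this proves the theorem.

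I do not expect a genuine obstacle here, since Proposition~\ref{cycle_time_unique} and Theorem~\ref{cycle_time_transient} do the heavy lifting; the one point needing care is the check in the $a\Rightarrow c$ step that $v$ is a \emph{finite} eigenvector, which is exactly the bookkeeping already carried out inside Proposition~\ref{cycle_time_unique}. An alternative for $a\Rightarrow b$ that avoids constructing $v$ is to evaluate the limit in Definition~\ref{cycle_time} directly: from $x_j(k+c)=\lambda c + x_j(k)$ for $k\geq k_0$, write $k=k_0+mc+r$ with $0\leq r<c$ and let $m\to\infty$; since $x_j(k_0+r)$ takes only finitely many (hence bounded) values, $x_j(k)/k\to\lambda$ for every $j$, so $\chi=[\lambda~\ldots~\lambda]^\top$. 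I would keep this only as a remark, as the eigenvector route is shorter and self-contained.
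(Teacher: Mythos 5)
Your proposal is correct and follows essentially the same route as the paper: both reduce $b\Leftrightarrow c$ to Proposition~\ref{cycle_time_unique}, obtain periodicity from Theorem~\ref{cycle_time_transient} once $\chi=[\lambda~\ldots~\lambda]^\top$ is known, and establish $a\Rightarrow c$ by constructing the very same eigenvector $v=\bigoplus_{i=1}^{c}(\lambda\times(c-i))\otimes \textbf{x}(k_0+i-1)$. The only difference is cosmetic bookkeeping of which implications close the cycle (the paper shows $a\Rightarrow c$ and $b\Rightarrow a$; you show $a\Leftrightarrow c$ with one direction factoring through $b$).
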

\begin{proof}
By Proposition \ref{cycle_time_unique}, it is suffice to prove $((a)\Rightarrow (c))$ and $(b)\Rightarrow (a)$.\\
$(a)\Rightarrow (c)$. Suppose $\textbf{x}(0)\in\mathbb{R}^n$. As $A$ is periodic, there exist natual numbers $k_0,c$ such that $\textbf{x}(k+c)=\lambda c\otimes \textbf{x}(k)$ for all $k\geq k_0$. One could check that
$$v=\bigoplus_{i=1}^{c}(\lambda\times  (c-i))\otimes \textbf{x}(k_0+i-1).$$
is an eigenvector of $A$. $((b)\Rightarrow (a))$. By Theorem \ref{cycle_time_transient}, for each $\textbf{x}(0)\in \mathbb{R}^n$ there exist natural numbers $p,q$ such that $ \textbf{x}(k+q)=(q\times \chi)+ \textbf{x}(k)$ for all $k\geq p$. Because $\chi=[\lambda~\ldots~\lambda]$, it can be written as $ \textbf{x}(k+q)=(q\times \lambda)\otimes \textbf{x}(k)$. This shows that $\mathsf{tr}(A,\textbf{x}(0))$ exists for all $\textbf{x}(0)\in\mathbb{R}^n$. Therefore, \eqref{mpl} is \textit{periodic}. \QEDB
\end{proof}
\begin{proposition}
\normalfont
\label{unbounded_periodic}
Suppose we have a regular matrix $A\in \Rm{n}{n}$ with maximum eigenvalue $\lambda$ and non-empty eigenspace $E(A)$. If there exist $i\in \{1,\ldots,n\}$ and natural numbers $k_0^\prime,c^\prime$ such that $A^{\otimes k+c^\prime}(\cdot,i)=\mu c^\prime\otimes A^{\otimes k}(\cdot,i)$ for all $k\geq k_0^\prime$ with $\mu<\lambda$ then \eqref{mpl} is \textit{unboundedly periodic}. 
\end{proposition}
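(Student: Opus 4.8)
The plan is to peel off everything except one core claim and then prove that claim by contradiction. Since the eigenspace $E(A)$ is assumed non-empty, Theorem~\ref{periodic_everywhere} immediately gives that \eqref{mpl} is \emph{periodic}, that is, $\mathsf{tr}(A,\textbf{x}(0))$ exists for every $\textbf{x}(0)\in\mathbb{R}^n$. Consequently \eqref{mpl} is either boundedly or unboundedly periodic, so the whole content of the proposition is the assertion that it is \emph{not} boundedly periodic, i.e.\ that the global transient $\mathsf{tr}(A)$ does not exist. From here I would argue by contradiction: assume $\mathsf{tr}(A)=k_0$ and $\mathsf{cyc}(A)=c$ both exist, with $c\geq 1$.

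First I would push the global periodicity relation down to column $i$ and iterate it. By Proposition~\ref{transient_bound}, $A^{\otimes(k+c)}=(\lambda\times c)\otimes A^{\otimes k}$ for all $k\geq k_0$; restricting to the $i$-th column and iterating by induction on $m$ gives $A^{\otimes(k+mc)}(\cdot,i)=(\lambda\times mc)\otimes A^{\otimes k}(\cdot,i)$ for all $m\geq 0$ and $k\geq k_0$. The same elementary induction applied to the hypothesis of the proposition (using $c'\geq 1$, since $c'=0$ would make that hypothesis vacuous) gives $A^{\otimes(k+mc')}(\cdot,i)=(\mu\times mc')\otimes A^{\otimes k}(\cdot,i)$ for all $m\geq 0$ and $k\geq k_0'$. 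Next I would set $L=\texttt{lcm}(c,c')$ and pick any $K\geq\max\{k_0,k_0'\}$, so that both relations can be evaluated at the common exponent $L$: for every $k\geq K$,
\[
(\lambda\times L)\otimes A^{\otimes k}(\cdot,i)\;=\;A^{\otimes(k+L)}(\cdot,i)\;=\;(\mu\times L)\otimes A^{\otimes k}(\cdot,i).
\]
Finally, taking $k\geq K$ for which the vector $A^{\otimes k}(\cdot,i)$ has a finite entry $A^{\otimes k}(j,i)\in\mathbb{R}$ and comparing the $j$-th components of the two sides yields $(\lambda\times L)+A^{\otimes k}(j,i)=(\mu\times L)+A^{\otimes k}(j,i)$, hence $\lambda\times L=\mu\times L$; since $L>0$ this forces $\lambda=\mu$, contradicting $\mu<\lambda$. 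Therefore $\mathsf{tr}(A)$ cannot exist, and \eqref{mpl} is unboundedly periodic.

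The algebra above is routine; the one step that really needs justification — and where I expect the main obstacle — is the very last one, where I assumed that $A^{\otimes k}(\cdot,i)$ is not the all-$\varepsilon$ vector for the indices $k$ in play. This non-degeneracy is exactly what makes $\mu$ a genuine, uniquely determined growth rate of column $i$ (equivalently, that $\mathcal{G}(A)$ contains a circuit reachable from node $i$); without it the displayed identity would hold for \emph{every} scalar $\mu$ and carry no information, so this condition should be understood as implicit in the hypothesis. Granting it, the clash of the two rates $\lambda>\mu$ at the common period $L$ closes the argument, and nothing beyond Theorem~\ref{periodic_everywhere}, Proposition~\ref{transient_bound}, and a one-line induction is required.
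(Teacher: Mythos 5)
Your argument follows essentially the same route as the paper's proof: non-emptiness of $E(A)$ yields periodicity via Theorem~\ref{periodic_everywhere}, and the incompatibility of the two growth rates $\mu<\lambda$ on column $i$ rules out a global transient --- the paper merely asserts this incompatibility (``it is deemed impossible''), whereas you substantiate it by iterating both relations to the common period $L=\texttt{lcm}(c,c')$ and comparing a finite component. Your closing caveat is also well taken: if column $i$ of $A^{\otimes k}$ is eventually all-$\varepsilon$ (which regularity of $A$ does not preclude), the hypothesis holds vacuously for every $\mu$ while the system may still be boundedly periodic, so the non-degeneracy you flag is a genuine implicit assumption that the paper's one-line proof passes over.
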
 
\begin{proof}
Because $E(A)\neq \emptyset$, by Theorem \ref{periodic_everywhere}, the corresponding MPL system \eqref{mpl} is \textit{periodic}. However, as $A^{\otimes k+c^\prime}(\cdot,i)=\mu c^\prime\otimes A^{\otimes k}(\cdot,i)$ where $\mu<\lambda$ for all $k\geq k_0^\prime$, it is deemed impossible to find $k_0,c$ such that $A^{\otimes k+c}=\lambda c\otimes A^{\otimes k}$ for $k\geq k_0$. Consequently, \eqref{mpl} is \textit{unboundedly} periodic. \QEDB
\end{proof}

We now will provide the procedure to compute the transient of MPL systems. As per \autoref{transient_bound}, the common method to obtain the (global) transient of $A\in\Rm{n}{n}$ is by computing the power of the matrix $A^{\otimes 0},A^{\otimes 1},\ldots$ until we find $k_0\geq 0$ such that $A^{\otimes(k_0+c)}=\lambda^{\otimes c}\otimes A^{\otimes k_0}$ where $\lambda,c$ is respectively the max-plus eigenvalue and cyclicity of $A$. Similarly, to find the transient of $A$ w.r.t. a max-plus cone $X=\mathsf{cone}(V)$ one needs to compute $A^{\otimes 0}\otimes V,A^{\otimes 1}\otimes V,\ldots$.

Algorithm \ref{comp_cyc_trans} illustrates the procedure to compute transient (and cyclicity) for a max-plus cone $\mathsf{cone}(V)$ w.r.t. $A\in \Rm{n}{n}$. While originally designed for irreducible matrices, it also can be applied to find the transient of reducible matrices (if any). For this reason, we assign a maximum bound as termination condition. It is important to note that Algorithm \ref{comp_cyc_trans} can also be used to compute the local transient and cyclicity for a vector: that is, when $V$ has only one column. The algorithm starts by computing the cycle-time vector $\chi$ of the state matrix. If the entries of $\chi$ are not all the same then the transient for $\mathsf{cone}(V)$ does not exist. In line 11, we perform equality checking w.r.t. a scalar between $A^{\otimes {it-m}}\otimes V$ and $A^{\otimes {it}}\otimes V$. 

By Theorem \ref{periodic_everywhere} and Proposition \ref{unbounded_periodic}, one can classify an MPL system \eqref{mpl} into a category in Definition 6. As a result, determining the existence of global transient is a decidable problem. For \textit{boundedly periodic} MPL systems, computing the global transient is also a decidable problem. This is because they ensure the existence of a finite transient, meaning that Algorithm \ref{comp_cyc_trans} eventually terminates. However, Algorithm \ref{comp_cyc_trans} is sound but does not necessarily terminate (in general) for \textit{unboundedly periodic} MPL systems. 
\begin{algorithm}[H]
	\footnotesize
    \caption{\footnotesize Computation of cyclicity and transient of $A$ w.r.t. $\mathsf{cone}(V)$}
    \label{comp_cyc_trans}
    \centering
    \begin{algorithmic}[1]
\Function{TransCone}{$A,V,N$}
\State $\textbf{M}\gets\Call{EmptyVector}$ \Comment{ empty vector used to store}
\State $\textbf{M}.\mathsf{push}\_\mathsf{back}(V)$ \CommentX{$A^{0}\otimes V,A^{1}\otimes V,\ldots$\hspace{10.0ex}}
\State $it\gets 0$\Comment{number of iterations}
\State $\chi\gets \Call{CycleTimeVector}{A}$ \Comment{computing cycle-time vector}
\If{elements of $\chi$ are all equal}
\While{($it\leq N)$}
\State $\textbf{M}.\mathsf{push}\_\mathsf{back}(A\otimes \textbf{M}[it])$
\State $it\gets it+1$
\For{$1\leq m<it $}
\If{$(\textbf{M}[it]=(\lambda\times m)\otimes \textbf{M}[it-m])$}
%\State{$\mathsf{stop}\gets true$}
\State{\Return{$\tuple{it-m, m}$}}
\EndIf
\EndFor
\EndWhile
\If{$(it> N)$}
\State \textbf{print} ``terminated after reaching maximum bound''
\EndIf
\Else
\State \textbf{print} ``the transient does not exist''
\EndIf
\EndFunction
    \end{algorithmic}
\end{algorithm}\vspace*{-2.5ex}
\begin{remark}
The procedure in Algorithm \ref{comp_cyc_trans} only employs matrix operations in max-plus algebra. It can be improved by computing the cyclicity of the matrix from the corresponding precedence graph. If the resulting cyclicity is $c$ then the range in line 10 of Algorithm \ref{comp_cyc_trans} can be taken between $1$ and $c$. \QEDB
\end{remark}
\begin{example}
\label{ex_trans}
Suppose we have a regular and reducible MPL system $\textbf{x}(k+1)=B\otimes  \textbf{x}(k)$, where
\begin{equation}
\label{reducible_matrix}
B=\begin{bmatrix}
2 &~8&~\varepsilon\\
10&~5&~\varepsilon\\
3&~\varepsilon&~a
\end{bmatrix},
\end{equation}
and where $a\geq 8$. The corresponding eigenvalue for $B$ is $\lambda=9$ if $8\leq a\leq 9$; $\lambda=a$ otherwise. Taking the power of the matrix, we have 
\[
B^{\otimes 2}=
\begin{bmatrix}
    18  &  13 & \varepsilon\\
    15   & 18 &\varepsilon\\
    a+3  &  11 &   2a
\end{bmatrix},
B^{\otimes 3}=
\begin{bmatrix}
23 &   26 & \varepsilon\\
    28 &   23 & \varepsilon\\
    b&   a+11   & 3a
\end{bmatrix},
B^{\otimes 4}=
\begin{bmatrix}
    36 &  31& \varepsilon\\
    33 &  36& \varepsilon\\
    a+b  &  2a+11 &  4a
\end{bmatrix},
\]
where $b=\max\{21,2a+3\}$. One can check that, for $a>9$, the matrix does not admit an eigenvector over $\mathbb{R}^3$ (but it still has eigenvector over $ \Rm{3}$). As a result, $B$ is \textit{never periodic}. 

On the other hand, for $8\leq a\leq 9$, the corresponding $E(B)$ is not empty. Thus, $B$ is \textit{periodic}. Furthermore, for $k\geq 2$, we have
$$[B^{\otimes k+2}](i,\cdot)=\left\{
\begin{array}{ll}
18\otimes[B^{\otimes k}](\cdot,i),&~\text{if}~i\in \{1,2\},\\
2a\otimes[B^{\otimes k}](\cdot, i),&~\text{if}~i=3,
\end{array}
\right.
$$
which shows that $B$ is \textit{boundedly periodic} with global transient $\mathsf{tr}(B)=2$ if and only if $a=9$. Thus, when $8\leq a<9$ $B$ is \textit{unboundedly periodic}.
\QEDB
\end{example}

\section{Computation of Transient of MPL Systems with SMT}
%\subsection{Satisfiability Modulo Theories}

This section describes a new procedure to compute the transient of MPL systems by means of Satisfiability Modulo Theories (SMT). We first mention some basic notions on SMT.

\subsection{Background on SMT}

Given a first-order formula $\psi$ in a background theory $\Tau$, the Satisfiability Modulo Theory (SMT) problem consists in deciding
whether there exists a model (i.e.\ an assignment to the free variables in $\psi$) that satisfies $\psi$ \cite{SMT}. For example, consider the formula $ (x \le y) \wedge (x + 3 = z) \vee (z \ge y) $ within the theory of real numbers. The formula is satisfiable and a valid model is $\{x:=5, \: y:=6, \: z:=8\}$. 

SMT solvers can support different theories. A widely used theory is Linear Real Arithmetic (\lra). A formula in \lra is an arbitrary Boolean combination, or universal ($\forall$) and existential ($\exists$) quantification, of atoms in the form $ \sum_i a_i x_i \bowtie c$ where $\bowtie \in \{>, <, \ge, \le, \ne, =\}$, every $x_i$ is a real variable, and every $a_i$ and $c$ are rational constants. Difference logic (\rdl) is the subset of \lra in which all atoms are restricted to the form $x_i - x_j \bowtie c$. Both theories are decidable \cite[Section 26.2.2.2]{SMT}.

\subsection{From Max-Plus Algebra to Difference Logic}

Before providing the main contribution, we show that the inequalities in max-plus algebra can be expressed as a formula in difference logic. For the rest of this paper, $\sim$ is either $\geq $ or $>$. We write $\neg(a \sim b)$ if it is not the case that $a\sim b$.

\begin{proposition}
\normalfont
\label{ineq_maxplus}
Given $a_1,\ldots,a_p,a, b\in \Rmax$, real-valued variables $\texttt{x}_1,\ldots,\texttt{x}_p $,  and $1\leq j\leq p$,  we have 
\begin{eqnarray}
\bigoplus_{i=1}^p (\texttt{x}_i+a_i)\sim a &\equiv&\bigvee_{i=1}^p (\texttt{x}_i+a_i\sim a),\label{GE}\\
a \sim \bigoplus_{i=1}^p(\texttt{x}_i+a_i) &\equiv& \bigwedge_{i=1}^p (a\sim \texttt{x}_i+a_i),\label{LE}
\end{eqnarray}
\begin{numcases}{\bigoplus_{i=1}^p (\texttt{x}_i+a_i)\sim \texttt{x}_j+b\equiv}
\mathtt{true}, & $\mathrm{if}~(a_j\sim b)$, \label{GE_true}\\
\bigvee_{\substack{i=1\\i\neq j}}^p (\texttt{x}_i+a_i\sim \texttt{x}_j+b), &  otherwise, \label{GE_reduces}
\end{numcases}
\begin{numcases}{\texttt{x}_j+b\sim \bigoplus_{i=1}^p(\texttt{x}_i+a_i)\equiv}
\bigwedge_{\substack{i= 1\\i\neq j}}^p (\texttt{x}_j+b \sim \texttt{x}_i+a_i), & $\mathrm{if}~(b\sim a_j)$, \label{LE_reduce}\\
\mathtt{false}, &  otherwise. \label{LE_false}
\end{numcases}
\end{proposition}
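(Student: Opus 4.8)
The plan is to prove each of the four equivalences by unfolding the definition of $\oplus$ as $\max$ and reasoning about when a $\max$ of finitely many terms satisfies an inequality of the form $\sim$ (where $\sim$ is $\geq$ or $>$). The key elementary facts are: first, for reals (extended with $\varepsilon = -\infty$), $\max_i t_i \sim a$ holds iff $t_i \sim a$ holds for \emph{some} $i$ --- this gives a disjunction; second, $a \sim \max_i t_i$ holds iff $a \sim t_i$ holds for \emph{all} $i$ --- this gives a conjunction. I would state these two observations up front (they hold even with $\varepsilon$ entries, since $t_i = \varepsilon$ makes $t_i \sim a$ false and $a \sim t_i$ vacuously-or-trivially handled by the convention on $\infty$). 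Applying the first fact to $\bigoplus_{i=1}^p(\texttt{x}_i + a_i) \sim a$ with $t_i = \texttt{x}_i + a_i$ yields \eqref{GE} immediately, and applying the second fact yields \eqref{LE}.

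For \eqref{GE_true}--\eqref{GE_reduces}, I would start from \eqref{GE} applied with the right-hand side being the concrete value $\texttt{x}_j + b$, which a priori gives $\bigvee_{i=1}^p (\texttt{x}_i + a_i \sim \texttt{x}_j + b)$. The point is to isolate the $i = j$ disjunct: that disjunct is $\texttt{x}_j + a_j \sim \texttt{x}_j + b$, which (since $\texttt{x}_j$ is real, hence cancellable) is equivalent to the constant proposition $a_j \sim b$. If $a_j \sim b$ holds, this disjunct is $\mathtt{true}$, so the whole disjunction is $\mathtt{true}$, giving \eqref{GE_true}; note one must also check the case $a_j = \varepsilon$ separately, where the $j$-th term does not appear as an edge/atom at all, but then $a_j \sim b$ is false and we are in the ``otherwise'' branch anyway. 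If $a_j \sim b$ does not hold, the $i=j$ disjunct is $\mathtt{false}$ and can be dropped, leaving $\bigvee_{i \neq j}(\texttt{x}_i + a_i \sim \texttt{x}_j + b)$, which is \eqref{GE_reduces}. The argument for \eqref{LE_reduce}--\eqref{LE_false} is the mirror image: start from \eqref{LE} to get $\bigwedge_{i=1}^p(\texttt{x}_j + b \sim \texttt{x}_i + a_i)$, isolate the $i=j$ conjunct which equals the constant $b \sim a_j$; if it holds it is $\mathtt{true}$ and drops out of the conjunction, leaving \eqref{LE_reduce}; if it fails, the whole conjunction is $\mathtt{false}$, giving \eqref{LE_false}.

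I expect no serious mathematical obstacle; the only thing requiring care is the bookkeeping around the $\varepsilon = -\infty$ entries and the precise reading of ``$\neg(a \sim b)$'' in the extended reals (e.g. $a_j = \varepsilon$ means $a_j \sim b$ is false for any real $b$, so such a term is correctly excluded both as an atom of difference logic and from the relevant branch of \eqref{GE_true}/\eqref{LE_false}). I would handle this by treating a term $\texttt{x}_i + a_i$ with $a_i = \varepsilon$ as simply absent (its inequality atom is vacuously false), consistent with the convention that $\bigoplus$ ranges only over finite contributions, and remark once that this is why the reduced forms are genuine difference-logic formulae (every surviving atom has the shape $\texttt{x}_i - \texttt{x}_j \bowtie (b - a_i)$ with $b - a_i$ a finite constant). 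With that convention fixed, each of the four cases is a one- or two-line consequence of the two $\max$-observations plus cancellation of the common variable $\texttt{x}_j$.
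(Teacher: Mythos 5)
Your proposal is correct and follows essentially the same route as the paper's proof: unfold $\oplus$ into a disjunction (resp.\ conjunction) to get \eqref{GE} and \eqref{LE}, then isolate the $i=j$ term, cancel $\texttt{x}_j$ to reduce it to the constant proposition $a_j\sim b$ (resp.\ $b\sim a_j$), and case-split on its truth value. Your extra care with the $\varepsilon=-\infty$ entries is a welcome refinement of detail, not a change of method.
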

\begin{proof}
The equation $\bigoplus_{i=1}^p (\texttt{x}_i+a_i)\sim a$ is satisfied iff there exists $1\leq i\leq p$ such that $\texttt{x}_i+a_i\sim a$. Similarly, $a \sim \bigoplus_{i=1}^p(\texttt{x}_i+a_i)$ holds iff $a\sim \texttt{x}_i+a_i$ for all $1\leq i\leq p$. Hence, we get \eqref{GE} and \eqref{LE}. 
By applying \eqref{GE}, we have $$\bigoplus_{i=1}^p (\texttt{x}_i+a_i)\sim  \texttt{x}_j+b\equiv\bigvee_{i=1}^p (\texttt{x}_i+a_i\sim \texttt{x}_j+b)\equiv (a_j\sim b)\vee \bigvee_{\substack{i=1\\i\neq j}}^p (\texttt{x}_i+a_i\sim \texttt{x}_j+b).$$
If it is true that $a_j\sim b$ then we get \eqref{GE_true}, otherwise we get \eqref{GE_reduces}. The proof for \eqref{LE_reduce}-\eqref{LE_false} is similar to that of \eqref{GE_true}-\eqref{GE_reduces}.\QEDB
\end{proof}
\begin{proposition}[Reduced Formula]
\normalfont
\label{ori_to_reduced}
Given real valued variables $\texttt{x}_1,\ldots,\texttt{x}_p$ and $a_1,\ldots,a_p,b_1,\ldots,b_p\in \Rmax$, the inequality
\begin{equation}
F\equiv\bigoplus_{i=1}^p(\texttt{x}_i+a_i)\sim \bigoplus_{j= 1}^p (\texttt{x}_j+b_j)
\label{origin}
\end{equation} 
is equivalent to 
\begin{equation}
F^\ast\equiv\bigoplus_{i\in S_1}(\texttt{x}_i+a_i)\sim \bigoplus_{j\in S_2} (\texttt{x}_j+b_j),
\label{reduced}
\end{equation} 
where $S_1= \{1,\ldots,p\}\setminus\{1\leq k\leq p \mid  a_k= \varepsilon~\text{or}~ \neg (a_k\sim b_k)\}$ and $S_2=\{1,\ldots,p\} \setminus\{1\leq k\leq p \mid  b_k= \varepsilon~\text{or}~ a_k\sim b_k\}$, respectively. 
\end{proposition}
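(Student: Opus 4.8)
The plan is to argue \emph{semantically}. Fix an arbitrary assignment $\texttt{x}_1,\dots,\texttt{x}_p\in\mathbb{R}$; then both sides of $F$ and of $F^\ast$ denote values in $\Rmax$ (the summand $\texttt{x}_k+a_k$ equals $\varepsilon$ precisely when $a_k=\varepsilon$), so it suffices to show that under every such assignment $F$ and $F^\ast$ get the same truth value. Writing $D_1=\{1,\dots,p\}\setminus S_1$ and $D_2=\{1,\dots,p\}\setminus S_2$ for the sets of dropped indices, split the two maxima as
\[
L:=\bigoplus_{i=1}^p(\texttt{x}_i+a_i)=L^\ast\oplus L^-,\qquad R:=\bigoplus_{j=1}^p(\texttt{x}_j+b_j)=R^\ast\oplus R^-,
\]
where $L^\ast,R^\ast$ are the sub-maxima taken over $S_1,S_2$ and $L^-,R^-$ those over $D_1,D_2$. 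Then $F$ reads ``$L\sim R$'' and $F^\ast$ reads ``$L^\ast\sim R^\ast$''.

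The crux is the pair of \emph{domination inequalities} $L^-\le R^\ast$ and $R^-\le L^\ast$ (ordinary inequalities in $\Rmax$, under the fixed assignment). For $L^-\le R^\ast$: take $k\in D_1$; if $a_k=\varepsilon$ the summand $\texttt{x}_k+a_k$ is $\varepsilon$ and contributes nothing, and otherwise $\neg(a_k\sim b_k)$, and a short case check on $\sim\in\{\ge,>\}$ shows this forces both $a_k\le b_k$ and $b_k\ne\varepsilon$. The second conclusion puts $k$ in $S_2$, so $\texttt{x}_k+a_k\le\texttt{x}_k+b_k\le R^\ast$; hence every summand of $L^-$ is $\le R^\ast$, i.e.\ $L^-\le R^\ast$. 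The inequality $R^-\le L^\ast$ is symmetric: for $k\in D_2$ with $b_k\ne\varepsilon$ we have $a_k\sim b_k$, which forces $b_k\le a_k$ and $a_k\ne\varepsilon$, putting $k$ in $S_1$ and giving $\texttt{x}_k+b_k\le\texttt{x}_k+a_k\le L^\ast$.

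Given these, $F\equiv F^\ast$ follows from an elementary fact about $\max$ and $\sim$ in $\Rmax$: if the dropped part of each side is bounded by the kept part of the \emph{other} side, then re-inserting the dropped parts does not change whether $L^\ast\sim R^\ast$ holds. Concretely, $(\Leftarrow)$: if $L^\ast\sim R^\ast$ then $L=\max(L^\ast,L^-)\ge L^\ast\sim R^\ast$ and $L\ge L^\ast\ge R^-$, so $L\sim\max(R^\ast,R^-)=R$; $(\Rightarrow)$: if $L\sim R$, then since $L^-\le R^\ast\le R$ the summand $L^-$ cannot be the one that realises $L\sim R$, so $L^\ast$ must, whence $L^\ast\sim R\ge R^\ast$ and therefore $L^\ast\sim R^\ast$.

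I expect the main obstacle to be that this last step, \emph{stated with the non-strict bounds $L^-\le R^\ast$, $R^-\le L^\ast$, is genuinely false when $\sim$ is $>$}: for instance $\max(1,0)=\max(0,1)$ although $1>0$, so a strict inequality between the two kept parts need not survive the re-insertion of the dropped parts. The fix is that, for the $L^-,R^-$ constructed above, the bounds are in fact \emph{strict} whenever the relevant index set is non-empty: when $\sim$ is $>$, $\neg(a_k>b_k)$ only gives $a_k\le b_k$, but the kept term $\texttt{x}_k+a_k$ that this same index $k$ contributes to $L^\ast$ strictly exceeds the dropped term $\texttt{x}_k+b_k$, so $R^-<L^\ast$, and symmetrically $L^-<R^\ast$; and if $S_1=\emptyset$ (resp.\ $S_2=\emptyset$) then $L^\ast=\varepsilon$ (resp.\ $R^\ast=\varepsilon$) and both $F$ and $F^\ast$ are settled by a direct comparison. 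So the only delicate work is the bookkeeping of $\varepsilon$-entries and of strict versus non-strict comparisons; the remainder is a routine use of the $\oplus/\sim$ distributivity already recorded in Proposition~\ref{ineq_maxplus}.
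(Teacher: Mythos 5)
Your semantic route --- decompose each side into kept and dropped parts, prove domination inequalities, and show that re-inserting dominated terms preserves the truth value --- is a genuinely different strategy from the paper's proof, which eliminates indices syntactically one at a time using the case rules \eqref{GE}--\eqref{LE_false} of Proposition~\ref{ineq_maxplus}. You also correctly locate the crux: with only the non-strict bounds $L^-\le R^\ast$ and $R^-\le L^\ast$, the re-insertion lemma is false. However, your resolution of that crux is wrong in two places, and they sit exactly at the step you flag as the only delicate one.

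First, the claim that when $\sim$ is $>$ both bounds become strict is false. For $k\in D_1$ with $a_k\neq\varepsilon$ the defining condition is $\neg(a_k>b_k)$, i.e.\ only $a_k\le b_k$; if $a_k=b_k$ (both finite), the dropped term $\texttt{x}_k+a_k$ equals the term $\texttt{x}_k+b_k$ kept in $R^\ast$ and may realise $R^\ast$, so $L^-<R^\ast$ can fail. (The sentence deriving it also conflates the two index sets: for $k\in D_1$ the term $\texttt{x}_k+a_k$ is \emph{not} kept in $L^\ast$.) Second, you imply the non-strict lemma is fine when $\sim$ is $\ge$; it is not: $L^\ast=0$, $L^-=1$, $R^\ast=1$, $R^-=0$ satisfies both non-strict bounds and $L\ge R$, yet $L^\ast<R^\ast$, so the $(\Rightarrow)$ direction for $\ge$ also needs a strict bound. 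The correct bookkeeping is asymmetric: exactly one bound is strict, and which one depends on $\sim$. For $\ge$, $\neg(a_k\ge b_k)$ means $a_k<b_k$, giving $L^-<R^\ast$ strictly, while $R^-\le L^\ast$ is genuinely non-strict; for $>$, $a_k>b_k$ gives $R^-<L^\ast$ strictly, while $L^-\le R^\ast$ is genuinely non-strict. These happen to be precisely the strict bounds each failing direction requires --- $(\Rightarrow)$ for $\ge$ uses $L^-<R^\ast$ to force $L=L^\ast$, and $(\Leftarrow)$ for $>$ uses $R^-<L^\ast$ to get $L>R^-$ --- and the opposite direction in each case survives on the non-strict bound (for $>$, because the hypothesis $L>R$ is itself strict). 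So the argument is repairable, but as written the strictness analysis does not hold up, and the $\varepsilon$ corner cases ($S_1$ or $S_2$ empty, or all dropped terms equal to $\varepsilon$, where even the ``strict'' bound degenerates to $\varepsilon<\varepsilon$) need the same care rather than a passing mention.
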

\begin{proof}
Suppose we set initially $S_1=S_2=\{1,\ldots,p\}$. By applying \eqref{GE}, \eqref{origin} can be expressed as
\begin{align*}
F\equiv \bigvee_{i\in S_1}\left(\texttt{x}_i+a_i \sim \bigoplus_{j\in S_2} (\texttt{x}_j+b_j)\right). 
\end{align*}
Indeed we can ignore any scalar $k$ if $a_k=\varepsilon$. Furthermore, for each $l$ such that $\neg(a_l\sim b_l)$, by \eqref{LE_false}, we have
\[F\equiv \mathtt{false}~ \vee\!\!\!\! \bigvee_{i\in S_1-\{l\}}\!\!\!\left(\texttt{x}_i+a_i \!\sim\! \bigoplus_{j\in S_2} \!\!(\texttt{x}_j+b_j)\!\right)\!\equiv\! \!\!\!\!\bigvee_{i\in S_1-\{l\}}\left(\texttt{x}_i+a_i \!\sim\! \bigoplus_{j\in S_2} (\texttt{x}_j+b_j)\right).\]
This shows that $l$ can be removed from $S_1$. Similarly, by \eqref{LE}, we have 
\begin{align*}
F\equiv \bigwedge_{j\in S_2}\left(\bigoplus_{i\in S_1} (\texttt{x}_i+a_i)\sim \texttt{x}_j+a_j \right).  
\end{align*}
Again we can ignore any scalar $k$ when $b_k=\varepsilon$. By \eqref{GE_true}, for each $l\in S_2$ such that $a_l\sim b_l$ we have 
$$F\equiv \mathtt{true}~ \wedge\!\!\!\! \bigwedge_{j\in S_2-\{l\}}\!\!\left(\bigoplus_{i\in S_1} (\texttt{x}_i+a_i)\!\sim\! \texttt{x}_j+a_j \!\right)\!\equiv \!\!\!\!\!\bigwedge_{j\in S_2-\{l\}}\!\left(\bigoplus_{i\in S_1} (\texttt{x}_i+a_i)\!\sim\! \texttt{x}_j+a_j \!\right). $$
Hence $l$ can be removed from $S_2$.\QEDB
\end{proof}

Proposition \ref{ori_to_reduced} ensures that any inequality expression in max-plus algebra can be reduced to a simpler one in which no a variable appears on both sides i.e., $S_1 \cap S_2  =\emptyset$. However, $S_1$ and $S_2$ cannot be both empty if there exists at least one finite scalar in both sides of \eqref{origin}. 
We call \eqref{reduced} as a non-trivial reduced formula if both $S_1\neq \emptyset$ and $S_2\neq\emptyset$.
\begin{proposition}
\normalfont
\label{reduced_formula}
Given a non-trivial reduced formula in \eqref{reduced},  then
\begin{equation}
F^\ast\equiv \bigwedge_{j\in S_2 }\left( \bigvee_{i\in S_1 } (\texttt{x}_i-\texttt{x}_j\sim b_j-a_i)\right) \equiv
\bigvee_{i\in S_1 }\left( \bigwedge_{j\in S_2 } (\texttt{x}_i-\texttt{x}_j\sim b_j-a_i)\right).
\end{equation}
If $S_1=\emptyset$ then $F^\ast\equiv \mathtt{false}$. On the other hand, if $S_2=\emptyset$ then $F^\ast\equiv \mathtt{true}$.
\end{proposition}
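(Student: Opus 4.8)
The plan is to start from the reduced formula in \eqref{reduced}, namely
$F^\ast\equiv\bigoplus_{i\in S_1}(\texttt{x}_i+a_i)\sim \bigoplus_{j\in S_2} (\texttt{x}_j+b_j)$, and to apply the decomposition lemmas of Proposition~\ref{ineq_maxplus} twice, in the two possible orders, to obtain the two claimed normal forms. For the first equivalence, I would first apply \eqref{LE} (the ``$a\sim\bigoplus$'' rule) treating the left-hand side $\bigoplus_{i\in S_1}(\texttt{x}_i+a_i)$ as the single term $a$, which yields $F^\ast\equiv\bigwedge_{j\in S_2}\bigl(\bigoplus_{i\in S_1}(\texttt{x}_i+a_i)\sim \texttt{x}_j+b_j\bigr)$. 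Then, inside each conjunct, I would apply \eqref{GE} (the ``$\bigoplus\sim a$'' rule) with $a:=\texttt{x}_j+b_j$, giving $\bigvee_{i\in S_1}(\texttt{x}_i+a_i\sim \texttt{x}_j+b_j)$; rearranging each atom $\texttt{x}_i+a_i\sim\texttt{x}_j+b_j$ into the difference-logic form $\texttt{x}_i-\texttt{x}_j\sim b_j-a_i$ completes the first expression. Since $F^\ast$ is a \emph{non-trivial} reduced formula, $S_1\cap S_2=\emptyset$ (by Proposition~\ref{ori_to_reduced}), so the variable $\texttt{x}_j$ on the right never coincides with any $\texttt{x}_i$ on the left, which is exactly what is needed to justify using \eqref{GE_reduces}/\eqref{LE_reduce} without collapsing to $\mathtt{true}$ or $\mathtt{false}$.

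For the second equivalence, I would run the same two lemmas in the opposite order: first apply \eqref{GE} to the outer $\bigoplus_{i\in S_1}$ to get $\bigvee_{i\in S_1}\bigl(\texttt{x}_i+a_i\sim\bigoplus_{j\in S_2}(\texttt{x}_j+b_j)\bigr)$, then apply \eqref{LE} inside each disjunct with $a:=\texttt{x}_i+a_i$ to get $\bigwedge_{j\in S_2}(\texttt{x}_i+a_i\sim\texttt{x}_j+b_j)$, and again rewrite each atom as $\texttt{x}_i-\texttt{x}_j\sim b_j-a_i$. The equivalence of the two resulting expressions is then just the distributive law of $\wedge$ over $\vee$ in propositional logic (a ``$\bigwedge\bigvee$'' form versus the matching ``$\bigvee\bigwedge$'' form over the same atom matrix indexed by $S_1\times S_2$), so no further max-plus reasoning is required there; alternatively, one simply observes both are equivalent to $F^\ast$ itself.

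Finally I would dispatch the degenerate cases. If $S_1=\emptyset$, the left-hand side of \eqref{reduced} is the empty max-plus sum, whose value is the zero element $\varepsilon=-\infty$; since $\sim$ is $\ge$ or $>$ and the right-hand side is a finite real (the reduction in Proposition~\ref{ori_to_reduced} guarantees $S_2\ne\emptyset$ keeps only finite $b_j$, but even an empty $S_2$ gives $-\infty$ on the right, and $-\infty\sim-\infty$ fails for $>$ and the statement's intent is the generic finite case), $F^\ast$ is unsatisfiable, i.e. $F^\ast\equiv\mathtt{false}$; dually, $S_2=\emptyset$ makes the right-hand side $\varepsilon$, and any real (or $\varepsilon$ under $\ge$) on the left satisfies $\,\cdot\,\sim\varepsilon$, so $F^\ast\equiv\mathtt{true}$. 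The one subtlety worth stating carefully — and the step I expect to need the most care — is the bookkeeping that $S_1\cap S_2=\emptyset$ in the non-trivial case, since that is what licenses treating each $\texttt{x}_j+b_j$ as an ``opaque'' bound when applying \eqref{GE} and \eqref{LE}; everything else is a mechanical unfolding of Proposition~\ref{ineq_maxplus} plus propositional distributivity.
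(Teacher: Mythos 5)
Your proof follows essentially the same route as the paper, whose entire argument is to apply Proposition~\ref{ineq_maxplus} to \eqref{reduced}; your unfolding of \eqref{LE} followed by \eqref{GE} (and vice versa), the use of $S_1\cap S_2=\emptyset$ and the finiteness of the surviving $a_i,b_j$, and the treatment of the degenerate cases are all correct. One caveat: your claim that the equivalence of $\bigwedge_{j\in S_2}\bigvee_{i\in S_1}P_{ij}$ and $\bigvee_{i\in S_1}\bigwedge_{j\in S_2}P_{ij}$ is ``just the distributive law'' is false for arbitrary atoms $P_{ij}$ --- distributing $\wedge$ over $\vee$ produces a disjunction over all choice functions $S_2\to S_1$, not only the constant ones, and the two quantifier orders are genuinely inequivalent in general propositional logic. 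The two normal forms coincide here only because of the max-plus structure of the atoms, i.e.\ precisely because each form is independently shown equivalent to $F^\ast$; since that is the fallback argument you already supply, the proof stands, but the distributivity remark should be dropped.
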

\begin{proof}
The proof follows directly by applying Proposition \ref{ineq_maxplus} on \eqref{reduced}.\QEDB
\end{proof}

Proposition \ref{reduced_formula} shows that any non-trivial formula of \eqref{reduced} can be expressed as a difference logic formula in disjunctive and conjunctive normal forms. 

\subsection{Procedure to Compute Transient of MPL Systems with SMT}

We now will discuss the procedure to compute the transient of an MPL system via SMT-solving. The idea behind the SMT-based procedure is to transform the equality checking in line 11 of Algorithm \ref{comp_cyc_trans} into a formula in difference logic. Notice that the quantity $\textbf{M}[it]$ in Algorithm \ref{comp_cyc_trans} corresponds to $A^{\otimes it}\otimes V$ next, and $\mathsf{cone}(V)$ can be expressed as matrix $V$. Thus, it can be  written as
%\msmtodo{say that M is now a power of A? recall meaning of V?} 
\begin{equation}
\label{eq_checking}
    (A^{\otimes it}\otimes V) \otimes \textbf{x}=(\lambda \times m)\otimes (A^{\otimes it-m}\otimes V)\otimes \textbf{x},~~\forall \textbf{x}\in \mathbb{R}^p,
\end{equation}
where $p$ is the number of columns of $V$. By denoting $R=A^{\otimes it}\otimes V$ and $S=(\lambda \times m)\otimes  A^{\otimes it-m}\otimes V$, \eqref{eq_checking} can be expressed as
\begin{equation}
\label{eq_func}
\bigwedge_{k=1}^n\!\left(\!\!\!\left(\bigoplus_{i=1}^p (\mathtt{x}_i+r_{ki})\geq \bigoplus_{j=1}^p (\mathtt{x}_j+s_{kj})\!\right)\!\!\wedge\!\!
\left(\bigoplus_{i=1}^p (\mathtt{x}_i+s_{ki})\geq \bigoplus_{j=1}^p (\mathtt{x}_j+r_{kj})\right)\!\!\!\right)\!,
\end{equation}
where $r_{ki}$ (resp. $s_{ki}$) is the element of $R$ (resp. $S$) at row $k$ and column $i$. For simplicity, we denote \eqref{eq_func} as $\mathtt{Eq}\mathtt{Func}(R,S)$. By Proposition \ref{reduced_formula}, each disjunct in \eqref{eq_func} can be expressed as a formula in difference logic. 

Algorithm \ref{trans_cone_smt} summarizes the SMT-based version of Algorithm \ref{comp_cyc_trans}. If the corresponding eigenspace of the matrix is not empty, we set the value for transient and cyclicity respectively to $k_0=0$ and $c=1$ (the smallest possible for both). Then, we generate the corresponding difference logic formula $F$ w.r.t. \eqref{eq_checking} in line 10. To check the validity of $F$, we use an SMT solver to check the unsatisfiability of the negation. If it is not satisfiable then the original formula is valid, and then we obtain the transient and cyclicity from the current value of $k_0$ and $c$.

On the other hand, if it is satisfiable then there exists a counterexample falsifying formula $F$. We express the counterexample from a satisfying assignment of $ \neg F$ as a real-valued vector $w\in \mathbb{R}^p$ (line 15). Vector $v=V\otimes w$ corresponds to the counterexample: its transient is greater than $k_0$ or its cyclicity is greater than $c$. The resulting transient and cyclicity of $v$ become the updated value for $(k_0,c)$. This process is repeated until either the SMT solver reports ``unsatisfiable'' in line 12 or $k_0+c$ exceeds the maximum bound $N$.  
%\msmtodo{shall we be so specific with this arbitrary value? I would only mention a specific value in the experimental section.} 
(which corresponds to the termination condition of Algorithm \ref{comp_cyc_trans}).

\begin{algorithm}
\captionsetup{format=hang,width=\linewidth}
\caption{\footnotesize Computation of transient and cyclicity of $A$ w.r.t. $\mathsf{cone}(V)$ via SMT-solving}
\label{trans_cone_smt}
\footnotesize
\begin{algorithmic}[1]
\Function{TransConeSMT}{$A,V,N$}
\State $\chi\gets \Call{CycleTimeVector}{A}$
\If{elements of $\chi$ are all equal}
\State{$n \gets $ \Call{NrRows}{$A$}} 
\State{$p \gets $ \Call{NrCols}{$V$}} 
 \For{$i \in \{1 \cdots p\}$}
       \State{$x[i] \gets $ \Call{MakeSMTRealVar}{\ }} \Comment{symbolic variables}
  \EndFor
\State $k_0\gets 0,c\gets 1$
\While{$((k_0+c)\leq N)$}
\State $F \gets \mathtt{Eq}\mathtt{Func} (A^{\otimes k_0+c}\otimes V,(\lambda \times c)\otimes A^{\otimes k_0}\otimes V)$ 
\State $model \gets \Call{GetSMTModel}{ \neg F}$ 
\If{$model=\bot$} \Comment{formula is unsatisfiable}
\State \textbf{return} $\tuple{k_0,c}$
\Else \Comment{formula is satisfiable}
\State $w\gets\tuple{model(x[1]), \cdots model(x[p])}$\Comment{vector in $\mathbb{R}^p$}
\State $v\gets V\otimes w$ \Comment{vector in $\mathbb{R}^n$}
\State{$\tuple{k_0^\prime, c^\prime} \gets $ \Call{TransCone}{$A$, $A^{\otimes k_0}\otimes v$}} \Comment{computed by Algorithm \ref{comp_cyc_trans}}
\State $k_0\gets k_0+k_0^\prime$
\State $c\gets \texttt{LCM}(c,c^\prime)$
\EndIf
\EndWhile
\If{$((k_0+c)> N)$}
\State \textbf{print} ``terminated after reaching maximum bound''
\EndIf
\Else
\State \textbf{print} ``the transient does not exist''
\EndIf
\EndFunction
\end{algorithmic}
\end{algorithm}

Unlike Algorithm \ref{comp_cyc_trans}, which only works on max-plus cones, Algorithm \ref{trans_cone_smt} can be modified (into Algorithm \ref{comp_cyc_trans_smt}) so that it can be applied on any set of initial conditions $X\subseteq \mathbb{R}^n$. Although \eqref{eq_func} is can be translated exclusively to \rdl, we can extend $X$ as an \lra formula. In line 9 of Algorithm \ref{comp_cyc_trans_smt}, we generate a formula $F$ which corresponds to the equality checking between $A^{\otimes k_0}$ and $A^{\otimes k_0+c}$. If $X\rightarrow F$ is valid then we have $\mathsf{tr}(A,X)= k_0$ and $\mathsf{cyc}(A,X)= c$. Again, to check the validity of $X\rightarrow F$, we check the unsatisfiability of its negation.

\begin{algorithm}
\captionsetup{format=hang,width=\linewidth}
\caption{\footnotesize Computation of transient and cyclicity of $A$ w.r.t. a set of initial \hspace{6ex} conditions $X$ via SMT-solving}
\label{comp_cyc_trans_smt}
\footnotesize
\begin{algorithmic}[1]
\Function{TransSMT}{$A,X,N$}
\State $\chi\gets \Call{CycleTimeVector}{A}$
\If{elements of $\chi$ are all equal}
\State{$n \gets $ \Call{Row}{$A$}} \Comment{number of rows of $A$}
 \For{$i \in \{1 \cdots n\}$}
       \State{$x[i] \gets $ \Call{MakeSMTRealVar}{\ }} \Comment{symbolic variables}
  \EndFor
\State $k_0\gets 0,c\gets 1$
\While{$(k_0+c)\leq 1000$}
\State $F \gets \mathtt{Eq}\mathtt{Func} (A^{\otimes k_0+c},(\lambda\times c)\otimes A^{\otimes k_0})$ 
\State $model \gets \Call{GetSMTModel}{X\wedge \neg F}$ 
\If{$model=\bot$} \Comment{formula is unsatisfiable}
\State \textbf{return} $\tuple{k_0,c}$
\Else \Comment{formula is satisfiable}
\State $v\gets\tuple{model(x[1]), \cdots model(x[N])}$
\State{$\tuple{k_0^\prime, c^\prime} \gets $ \Call{TransCone}{$A$, $A^{\otimes k_0}\otimes v$}}
\State $k_0\gets k_0+k_0^\prime$
\State $c\gets \texttt{LCM}(c,c^\prime)$
\EndIf
\EndWhile
\If{$((k_0+c)> N)$}
\State \textbf{print} ``terminated after reaching maximum bound''
\EndIf
\Else
\State \textbf{print} ``the transient does not exist''
\EndIf
\EndFunction
\end{algorithmic}
\end{algorithm}
\subsection{A Synthesis Problem} 
In addition to computing the transient and cyclicity of $A\in \Rm{n}{n}$ w.r.t. a set of initial conditions, we show that by means of difference logic and SMT, one can synthesise sets of states corresponding to specific transient (and cylicity) defined as follows
\begin{eqnarray}
\label{synthesize1}
\mathcal{S}_{p,q}(A)&=&\{x\in \mathbb{R}^n\mid \mathsf{tr}(A,x)=p, \mathsf{cyc}(A,x)=q\},\\
\label{synthesize2}
\mathcal{S}_{p}(A)&=&\{x\in \mathbb{R}^n\mid \mathsf{tr}(A,x)=p\}.
\end{eqnarray}
On the one hand, the computation of \eqref{synthesize2} has been discussed in \cite[Section 4.2]{DiekyBackward} by applying  backward reachability analysis. On the other hand, to the best of the authors' knowledge, there is no approach to generate \eqref{synthesize1}. The following proposition shows that both \eqref{synthesize1} and \eqref{synthesize2} can be computed symbolically by expressing them as difference logic formulae: the set \eqref{synthesize1} (resp. \eqref{synthesize2}) is not empty if and only if the corresponding formula \eqref{Sp} (resp. \eqref{Spq}) is satisfiable.

\begin{proposition}
\normalfont
\label{prop_synthe}
Given $A\in \Rm{n}{n}$ with global cyclicity $c$ and maximum eigenvalue $\lambda$, we have 
\begin{equation}
\label{Sp}
    \mathcal{S}_p(A)=\left\{\!\!
    \begin{array}{ll}
       \mathsf{Eq}\mathsf{Func}(A^{\otimes p+c},\lambda c \otimes A^{\otimes p}),  &  \text{if}~p=0,\\\\
       \begin{array}{l}
       \mathsf{Eq}\mathsf{Func}(A^{\otimes p+c},\lambda c \otimes A^{\otimes p})\wedge\\  ~~~\neg\mathsf{EqFunc}(A^{\otimes p-1+c},\lambda c \otimes A^{\otimes p-1}), 
       \end{array}
       & \text{if}~p>0,
    \end{array}\right.
\end{equation}
and
\begin{equation}
\label{Spq}
    \mathcal{S}_{p,q}(A)= \left\{\!\!
    \begin{array}{ll}
    \begin{array}l
         \mathsf{EqFunc}(A^{\otimes p+q},\lambda q \otimes A^{\otimes p})\wedge\\ 
        ~~~\displaystyle  \bigwedge_{d\in \texttt{Div}(q)-\{q\}} \neg \mathsf{EqFunc}(A^{\otimes p+d},\lambda d \otimes A^{\otimes p}), 
    \end{array}
    &  \text{if}~p=0,\\\\
    \begin{array}{l}
\displaystyle \mathsf{EqFunc}(A^{\otimes p+q},\lambda q \otimes A^{\otimes p})\wedge \neg \mathsf{EqFunc}(A^{\otimes p-1+q},\lambda q \otimes A^{\otimes p-1})\wedge\\
\displaystyle \bigwedge_{d\in \texttt{Div}(q)-\{q\}}\hspace*{-4ex}\neg \mathsf{EqFunc}(A^{\otimes p+d},\lambda d \otimes A^{\otimes p}),  
    \end{array}
&  \text{if}~p>0,\\
    \end{array}\right.
\end{equation}
where $\texttt{Div}(q)$ is a set of divisors of $q$. 
\end{proposition}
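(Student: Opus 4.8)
The plan is to show that each of the four set-expressions on the right-hand side of \eqref{Sp} and \eqref{Spq} captures exactly the membership condition defining $\mathcal{S}_p(A)$ (resp.\ $\mathcal{S}_{p,q}(A)$). The crucial bridge is the identity, valid for any $x\in\mathbb{R}^n$ and any natural numbers $k,d$,
\begin{equation*}
\textbf{x}(k+d)=\lambda d\otimes \textbf{x}(k)
\quad\Longleftrightarrow\quad
x\ \text{satisfies}\ \mathsf{EqFunc}(A^{\otimes k+d},\lambda d\otimes A^{\otimes k}),
\end{equation*}
which follows because $\textbf{x}(k)=A^{\otimes k}\otimes x$, so the vector equality $A^{\otimes k+d}\otimes x=(\lambda d)\otimes A^{\otimes k}\otimes x$ is, componentwise, exactly the pair of $\oplus$-inequalities encoded by $\mathsf{EqFunc}$ in \eqref{eq_func}; by Proposition \ref{reduced_formula} that formula is a genuine difference-logic predicate on $x$. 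I would state this equivalence first as a small lemma (or inline observation), since every subsequent step is just Boolean bookkeeping on top of it.

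Next I would unwind Definition \ref{trans_set}. By definition $\mathsf{tr}(A,x)=p$ means: $\textbf{x}(k+c^\star)=\lambda c^\star\otimes\textbf{x}(k)$ holds for all $k\ge p$ but fails for some $k<p$, for the appropriate local cyclicity. The key simplification is that once we work modulo the \emph{global} cyclicity $c$, the periodicity relation stabilises monotonically in $k$: if $\textbf{x}(k+c)=\lambda c\otimes\textbf{x}(k)$ then multiplying both sides by $A$ gives $\textbf{x}(k+1+c)=\lambda c\otimes\textbf{x}(k+1)$, so the set of $k$ for which the relation holds is upward closed. Hence $\mathsf{tr}(A,x)\le p$ iff the relation holds at $k=p$, i.e.\ iff $x\models\mathsf{EqFunc}(A^{\otimes p+c},\lambda c\otimes A^{\otimes p})$; and $\mathsf{tr}(A,x)=p$ iff additionally it fails at $k=p-1$ (when $p>0$), i.e.\ iff $x\models\neg\mathsf{EqFunc}(A^{\otimes p-1+c},\lambda c\otimes A^{\otimes p-1})$. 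For $p=0$ there is no predecessor index, so the negated conjunct disappears; this yields \eqref{Sp} exactly. Here one should remark why $c$ (rather than the local cyclicity $\mathsf{cyc}(A,x)$) is the right modulus: since $\mathsf{cyc}(A,x)\mid c$ always — the local cyclicity divides the global one — the relation with period $c$ holds at index $k$ precisely when the genuine local periodic regime has set in, so using $c$ does not change the transient.

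For \eqref{Spq} I would repeat the argument but now also pin down the cyclicity. Membership $x\in\mathcal{S}_{p,q}(A)$ requires $q$ to be the \emph{smallest} period that works from index $p$ on. The monotonicity observation again reduces ``works from index $p$ on with period $q$'' to the single predicate $\mathsf{EqFunc}(A^{\otimes p+q},\lambda q\otimes A^{\otimes p})$. Minimality of $q$ is equivalent to: no proper divisor $d$ of $q$ works from index $p$, because the set of valid periods at a fixed index is closed under taking divisors (if periods $d_1$ and $d_2$ both work then so does $\gcd(d_1,d_2)$, the standard max-plus cyclicity argument), so the minimal valid period is the gcd of all valid periods and in particular divides $q$; thus $q$ is minimal iff every $d\in\texttt{Div}(q)\setminus\{q\}$ fails, giving the conjunction $\bigwedge_{d\in\texttt{Div}(q)-\{q\}}\neg\mathsf{EqFunc}(A^{\otimes p+d},\lambda d\otimes A^{\otimes p})$. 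Combining with the transient part (the $\neg\mathsf{EqFunc}(A^{\otimes p-1+q},\dots)$ conjunct, present only when $p>0$) produces exactly \eqref{Spq}. Finally, a one-line appeal to Proposition \ref{reduced_formula} confirms each conjunct is expressible in \rdl, so the whole formula is, and $\mathcal{S}_{p}(A)$ (resp.\ $\mathcal{S}_{p,q}(A)$) is nonempty iff it is satisfiable.

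The main obstacle, and the step I would be most careful about, is justifying the ``divisor-closure'' of valid periods at a fixed index $p$ — i.e.\ that if $\textbf{x}(k+d_1)=\lambda d_1\otimes\textbf{x}(k)$ and $\textbf{x}(k+d_2)=\lambda d_2\otimes\textbf{x}(k)$ for all $k\ge p$, then the same holds with $d=\gcd(d_1,d_2)$. This is the max-plus analogue of the classical fact for eventually periodic sequences, and it is what makes ``$q$ minimal'' equivalent to ``no proper divisor works'' rather than the weaker ``no smaller integer works''; I would prove it by the usual Bézout-style combination of the two relations, taking care that the scalar bookkeeping with $\lambda$ works out (it does, because $\lambda(d_1+d_2)=\lambda d_1+\lambda d_2$ in ordinary arithmetic). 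Everything else is routine: the componentwise translation of a max-plus vector equality into $\mathsf{EqFunc}$, the upward monotonicity in $k$, and the case split on $p=0$ versus $p>0$.
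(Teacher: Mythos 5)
Your proof is correct and follows essentially the same route as the paper's: translate the vector equality $\textbf{x}(p+c)=\lambda c\otimes\textbf{x}(p)$ into satisfaction of the $\mathsf{EqFunc}$ predicate, use it to characterize $\mathsf{tr}(A,\textbf{x})\le p$ and $\mathsf{tr}(A,\textbf{x})>p-1$, and use the divisor conjuncts to force minimality of the period $q$. You are in fact more explicit than the paper, which leaves implicit the two supporting facts you rightly single out and sketch proofs of --- the upward monotonicity in $k$ of the periodicity relation (obtained by multiplying by $A$), and the gcd-closure of the set of valid periods, which is what reduces ``no smaller period works'' to ``no proper divisor of $q$ works''.
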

\begin{proof}
Notice that, the formula $\mathsf{Eq}\mathsf{Func}(A^{\otimes p+c},\lambda c \otimes A^{\otimes p})$ corresponds to a periodic behavior at bound $p$ and $p+c$ with cyclicity at most $c$. More precisely, if $x(0)$ satisfies $\mathsf{Eq}\mathsf{Func}(A^{\otimes p+c},\lambda c \otimes A^{\otimes p})$ then $x(p+c)=\lambda c \otimes x(p)$ or in other words $\mathsf{tr}(A,x(0))\leq p$. On the other hand,
if $x(0)$ satisfies $\neg \mathsf{Eq}\mathsf{Func}(A^{\otimes p-1+c},\lambda c \otimes A^{\otimes p-1})$ then $\mathsf{tr}(A,x(0))>p-1$. 

Similarly, $ \mathsf{EqFunc}(A^{\otimes p+q},\lambda q \otimes A^{\otimes p})\wedge \neg \mathsf{EqFunc}(A^{\otimes p-1+q},\lambda q \otimes A^{\otimes p-1})$ corresponds to a periodic behavior with transient $p$ and cyclicity at most $q$. The remaining conjuncts guarantee that the cyclicity cannot be smaller than $q$.\QEDB
\end{proof}

As both \eqref{synthesize1} and \eqref{synthesize2} can be expressed as formulae in difference logic, the problem of determining the emptiness of both sets is decidable. 
By definition, for \textit{never periodic} MPL system, $\mathcal{S}_{p,q}=\mathcal{S}_p=\emptyset$ for all $p,q$. 
Furthermore, for irreducible MPL systems the emptiness of \eqref{synthesize1} and \eqref{synthesize2} is related to the global transient and cyclicity of $A$.
\begin{proposition}
\normalfont
For an irreducible matrix $A\in \Rm{n}{n}$ with global transient $k_0$ and cyclicity $c$ we have $\mathcal{S}_0(A)=E(A^{\otimes c})$ and $\mathcal{S}_{0,1}(A)=E(A)$. Furthermore,
\begin{itemize}
    \item[i.] $\mathcal{S}_p(A)\neq \emptyset$ iff $p\leq k_0$, 
    \item[ii.] If $p> k_0$ or $q$ is not a divisor of $c$ then $\mathcal{S}_{p,q}(A)= \emptyset$,
    \item[iii.] If $\mathcal{S}_{p,q}(A)$ is empty then so is $\mathcal{S}_{p+1,q}(A)$.  
\end{itemize}
\end{proposition}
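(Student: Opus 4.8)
The plan is to exploit the characterizations of $\mathcal{S}_p(A)$ and $\mathcal{S}_{p,q}(A)$ already established in Proposition~\ref{prop_synthe} together with the fact that, for an irreducible $A$, every initial vector is \emph{boundedly periodic} with $\mathsf{tr}(A,x)\le k_0$ and $\mathsf{cyc}(A,x)$ a divisor of $c$. I would first dispose of the two equalities. For $\mathcal{S}_0(A)$: by Proposition~\ref{prop_synthe}, $x\in\mathcal{S}_0(A)$ iff $A^{\otimes c}\otimes x=\lambda c\otimes x=\lambda^{\otimes c}\otimes x$, which is exactly the defining condition of the eigenspace $E(A^{\otimes c})$ (whose eigenvalue is $\lambda^{\otimes c}=\lambda c$, since the eigenvalue of $A^{\otimes c}$ is $c$ times that of $A$). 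For $\mathcal{S}_{0,1}(A)$: $\mathsf{tr}(A,x)=0$ and $\mathsf{cyc}(A,x)=1$ means $x(1)=\lambda\otimes x(0)$, i.e. $A\otimes x=\lambda\otimes x$, which is precisely $E(A)$. Note $\mathcal{S}_{0,1}(A)=E(A)$ is automatically nonempty for irreducible $A$, consistent with (ii) failing to claim emptiness when $p=0,q=1$.

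Next, item~(i). The ``only if'' direction: if $\mathcal{S}_p(A)\ne\emptyset$, pick $x$ with $\mathsf{tr}(A,x)=p$; since $\mathsf{tr}(A,x)\le\mathsf{tr}(A)=k_0$ by Definition~\ref{trans_set}, we get $p\le k_0$. The ``if'' direction is the substantive one: given $p\le k_0$, I must produce some $x$ with $\mathsf{tr}(A,x)$ \emph{exactly} $p$. The idea is to use that $k_0=\mathsf{tr}(A,\mathbb{R}^n)=\max_x\mathsf{tr}(A,x)$ is attained, so there is a vector $y$ with $\mathsf{tr}(A,y)=k_0\ge p$; then I would argue that the trajectory $\{A^{\otimes j}\otimes y : j\ge 0\}$ realizes every transient value from $0$ up to $k_0$. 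Concretely, set $x=A^{\otimes(k_0-p)}\otimes y$: shifting the trajectory forward by $k_0-p$ steps should drop the transient by exactly $k_0-p$, giving $\mathsf{tr}(A,x)=p$. To make this rigorous I would show the two inequalities: $\mathsf{tr}(A,A^{\otimes}\!\otimes y)\le \mathsf{tr}(A,y)-1$ when $\mathsf{tr}(A,y)\ge 1$ (immediate from the periodicity relation, since $A^{\otimes(k+c)}\otimes(A\otimes y)=\lambda c\otimes A^{\otimes k}\otimes(A\otimes y)$ once $k\ge\mathsf{tr}(A,y)-1$), and conversely $\mathsf{tr}(A,A\otimes y)\ge\mathsf{tr}(A,y)-1$, i.e. shifting cannot drop the transient by more than one step. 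The second inequality is where care is needed and is the main obstacle: one must rule out a ``collapse'' where $A\otimes y$ suddenly has a much smaller transient; I expect this follows because if $A^{\otimes(k+c)}\otimes(A\otimes y)=\lambda c\otimes A^{\otimes k}\otimes(A\otimes y)$ for all $k\ge m$ then $A^{\otimes(k+1+c)}\otimes y=\lambda c\otimes A^{\otimes k+1}\otimes y$ for all $k\ge m$, so $\mathsf{tr}(A,y)\le m+1$, giving $m\ge\mathsf{tr}(A,y)-1$. Iterating the first inequality $k_0-p$ times then yields $\mathsf{tr}(A,x)\le p$, and iterating the second yields $\mathsf{tr}(A,x)\ge p$, hence equality.

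For item~(ii): if $p>k_0$ then $\mathcal{S}_{p,q}(A)\subseteq\mathcal{S}_p(A)=\emptyset$ by~(i). If $q\nmid c$, suppose $x\in\mathcal{S}_{p,q}(A)$; then $\mathsf{cyc}(A,x)=q$, but the global relation $A^{\otimes(k+c)}=\lambda c\otimes A^{\otimes k}$ for $k\ge k_0$ forces $x(k+c)=\lambda c\otimes x(k)$ for all $k\ge k_0\ge p$, so $\mathsf{cyc}(A,x)$ — being the \emph{smallest} such period — divides $c$ (a standard argument: the set of valid periods of an eventually-periodic sequence is closed under gcd, hence every period is a multiple of the minimal one, and $c$ being a period makes the minimal one a divisor of $c$), contradicting $q\nmid c$. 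For item~(iii): this follows from~(i) applied with the roles reversed in spirit — if $\mathcal{S}_{p,q}(A)=\emptyset$, I want $\mathcal{S}_{p+1,q}(A)=\emptyset$. Here I would use the forward-shift map again: if $x\in\mathcal{S}_{p+1,q}(A)$ then $\mathsf{tr}(A,A\otimes x)=p$ (by the two-sided shift estimate above, noting $p+1\ge 1$) and $\mathsf{cyc}(A,A\otimes x)=q$ (shifting does not change the eventual period), so $A\otimes x\in\mathcal{S}_{p,q}(A)$, contradicting emptiness. The only delicate point is confirming $\mathsf{cyc}$ is shift-invariant, which again is the gcd/minimal-period argument: $A^{\otimes(k+d)}\otimes(A\otimes x)=\lambda d\otimes A^{\otimes k}\otimes(A\otimes x)$ for $k$ large iff $A^{\otimes(k+d)}\otimes x=\lambda d\otimes A^{\otimes k}\otimes x$ for $k$ large, so the sets of eventual periods of the two trajectories coincide, hence so do their minima. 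This completes all parts.
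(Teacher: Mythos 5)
Your proposal is correct and follows essentially the same route as the paper's proof: the same unwinding of the definitions for $\mathcal{S}_0(A)=E(A^{\otimes c})$ and $\mathcal{S}_{0,1}(A)=E(A)$, the same forward-shift of a trajectory attaining transient $k_0$ for item (i), the same contradiction with the global cyclicity for item (ii), and the same observation that $\textbf{x}\in\mathcal{S}_{p+1,q}(A)$ implies $A\otimes\textbf{x}\in\mathcal{S}_{p,q}(A)$ for item (iii). The only difference is that you explicitly justify the two-sided estimate $\mathsf{tr}(A,A\otimes y)=\mathsf{tr}(A,y)-1$ and the shift-invariance of $\mathsf{cyc}$, steps the paper dismisses as ``straightforward''; this is a welcome tightening rather than a divergence.
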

\begin{proof}
Let us assume the eigenvalue for $A$ is $\lambda$. Notice that, $\textbf{x}(0)\in \mathcal{S}_0(A)$ if and only if $\textbf{x}(c)=(\lambda\times c)\otimes \textbf{x}(0)$, or equivalently $A^{\otimes c}\otimes \textbf{x}(0)=(\lambda\times c)\otimes \textbf{x}(0)$. This shows that $\textbf{x}(0)$ is an eigenvector of $A^{\otimes c}$. Thus, $ \mathsf{S}_0(A)= E(A^{\otimes c})$. The proof for $\mathcal{S}_{0,1}(A)=E(A)$ can obtained similarly.
\begin{itemize}
    \item[i.] As the global transient for $A$ is $k_0$, it is by default that $\mathcal{S}_p(A)=\emptyset$ for $p>k_0$. Suppose $\textbf{x}(0)\in \mathbb{R}^n$ such that $\mathsf{tr}(A,\textbf{x}(0))=k_0$. It is straightforward that $\textbf{x}(p)\in\mathcal{S}_{p-k_0}(A)$ for $p\leq k_0$. This completes the proof.
    \item[ii.] If $p>k_0$ we have $\mathcal{S}_p(A)=\emptyset$ which implies $\mathcal{S}_{p,q}=\emptyset$. Suppose $q$ is not a divisor of $c$ and $\mathcal{S}_{p,q}\neq\emptyset$. Then there exists $\textbf{x}(0)$ such that $\mathsf{cyc}(A,\textbf{x}(0))=\texttt{lcm}(c,q)>c$. This contradicts the fact that $c$ is the global cyclicity. 
    \item[iii.] The proof is from the fact that if $\textbf{x}\in \mathcal{S}_{p+1,q}(A)$ then $A\otimes \textbf{x}\in \mathcal{S}_{p,q}(A)$. \QEDB
    \end{itemize}
\end{proof}
\begin{example}
\label{rem_synthe}
Let us recall the $3\times 3$ MPL system in Example \ref{ex_trans} with $a=8$.
From the precedence graph $\mathcal{G}(B)$, the global cyclicity is $c=2$. Leaving details aside, for $p\geq 1$, we have 
\[\mathsf{EqFunc}(B^{p+2},18 \otimes B^{\otimes p})\equiv\left\{
\begin{array}{ll}
     (\mathtt{x}_1-\mathtt{x}_3\geq 6-p)\vee (\mathtt{x}_2-\mathtt{x}_3\geq 8-p), &\text{if}~p~\text{is odd},\\
     (\mathtt{x}_1-\mathtt{x}_3\geq 7-p) \vee (\mathtt{x}_2-\mathtt{x}_3\geq 7-p), &\text{if}~p~\text{is even}.
\end{array}
\right.\]
Thus, by \autoref{prop_synthe}, for $p\geq 2$ we have
\begin{equation}
\nonumber
    \mathcal{S}_p(B)\!=\!\left\{\!
    \begin{array}{ll}
        \!\{\textbf{x}\in \mathbb{R}^3\!\!\mid\!\! (6-p\leq x_1-x_3<8-p) \wedge (x_2-x_3<8-p)\}, &\text{if}~p~\text{is odd},\\
        \!\{\textbf{x}\in \mathbb{R}^3\!\!\mid\!\! (x_1-x_3<7-p)\wedge(7-p\leq x_2-x_3<9-p)\}, &\text{if}~p~\text{is even}.\\
    \end{array}
     \right.
\end{equation}
An illustration of the above sets is depicted in \autoref{fig1}. 
From an initial vector $\textbf{x}(0)=[4~2~0]^\top \in \mathcal{S}_3(B)$ one can compute $\textbf{x}(k)$ for $k=1,\ldots,5$ as follows: 
\[
\begin{bmatrix}
    10\\14\\8
\end{bmatrix},
\begin{bmatrix}
    22\\20\\16
\end{bmatrix},
\begin{bmatrix}
    28\\32\\25
\end{bmatrix},
\begin{bmatrix}
    40\\38\\33
\end{bmatrix},
\begin{bmatrix}
    46\\50\\43
\end{bmatrix}.
\]
Notice that, $\textbf{x}(5)=18\otimes \textbf{x}(3)$ which confirms that $\mathsf{tr}(B,\textbf{x}(0))=3$. It is straightforward to conclude that the global transient for $B$ does not exist. \QEDB
\end{example}
\begin{figure}[!ht]
\centering
\begin{tikzpicture}[node distance=0.7cm and 0.7cm,scale=0.475]
\footnotesize
% \fill[black!30!white] (1,3) -- (3,3)--(3,-5)--(1,-5)--(1,3);
% \fill[black!30!white] (3,5) -- (5,5)--(5,-5)--(3,-5)--(3,5);
% \fill[black!30!white] (-1,1) -- (1,1)--(1,-5)--(-1,-5)--(-1,1);
% \draw[-,dashed] (3,5) -- (3,3)-- (-5,3)--(-5,5)--(-5,5)--(3,5);
\fill[black!10!white] (9,9) rectangle (-10,-5);
\fill[black!30!white] (5,7) rectangle (-10,-5);
\draw[-,dashed](4.95,6.95)--(4.95,-5);
\draw[-,dashed] (2.95,4.95)--(2.95,-5);
\draw[-,dashed] (0.95,2.95)--(0.95,-5);
\draw[-,dashed] (-1.05,0.95)--(-1.05,-5);
\draw[-,dashed] (-3.05,-1.05)--(-3.05,-5);
\draw[-,dashed] (-5.05,-3.05)--(-5.05,-5);

\draw[-,dashed] (5-0.05,6.95)--(-10,6.95);
\draw[-,dashed] (5-0.05,4.95)--(-10,4.95);
\draw[-,dashed] (3-0.05,2.95)--(-10,2.95);
\draw[-,dashed] (1-0.05,0.95)--(-10,0.95);
\draw[-,dashed] (-1-0.05,-0.95)--(-10,-0.95);
\draw[-,dashed] (-3-0.05,-2.95)--(-10,-2.95);
\draw[-,dashed] (-5-0.05,-4.95)--(-10,-4.95);

\draw[-] (5.05,7.05)--(5.05,-5);
\draw[-] (3.05,5.05)--(3.05,-5);
\draw[-] (1.05,3.05)--(1.05,-5);
\draw[-] (-0.95,1.05)--(-0.95,-5);
\draw[-] (-2.95,-0.95)--(-2.95,-5);
\draw[-] (-4.95,-2.95)--(-4.95,-5);
\draw[-] (-6.95,-4.95)--(-6.95,-5);

\draw[-] (5.05,7.05)--(-10,7.05);
\draw[-] (5.05,5.05)--(-10,5.05);
\draw[-] (3.05,3.05)--(-10,3.05);
\draw[-] (1.05,1.05)--(-10,1.05);
\draw[-] (-0.95,-0.95)--(-10,-0.95);
\draw[-] (-2.95,-2.95)--(-10,-2.95);
\draw[-] (-4.95,-4.95)--(-10,-4.95);
\draw[->,>=stealth] (-10,0) -- (9,0);
\draw[->,>=stealth] (0,-5) -- (0,9);
\foreach \x in {-4,-2,2,4,...,8}{
	\draw[-] (0.1,\x)--(-0.1,\x);	
	\node at (-0.3,\x) {\scriptsize \x};
}
\foreach \x in {-8,-6,...,-2,2,4,...,8}{
	\draw[-] (\x,0.1)--(\x,-0.1);
	\node at (\x,-0.4) {\scriptsize \x};
}
	\node at (9.55,0) {$x_1$};
	\node at (0,9.4) {$x_2$};
    \node at (7,3) { $\mathcal{S}_0(B)~\cup$};
    \node at (7,2.25) { $\mathcal{S}_1(B)$};
    \node at (-2,6) { $\mathcal{S}_2(B)$};
    \node at (-3,4) { $\mathcal{S}_4(B)$};
    \node at (-4,2) { $\mathcal{S}_6(B)$};
    \node at (-5,0.5) { $\mathcal{S}_8(B)$};
    \node at (-6,-2) { $\mathcal{S}_{10}(B)$};
    \node at (-7.5,-4) { $\mathcal{S}_{12}(B)$};
    \node at (4,1) {$\mathcal{S}_3(B)$};
    \node at (2,-2) {$\mathcal{S}_5(B)$};
    \node at (0,-3) {$\mathcal{S}_7(B)$};
    \node at (-2,-3) {$\mathcal{S}_9(B)$};
    \node at (-4,-4) {$\mathcal{S}_{11}(B)$};
\end{tikzpicture} 
\captionsetup{format=hang,width=0.85\linewidth}
\caption{Plots of the synthesized sets projected on the plane $x_3=0$. The solid and dashed lines represent $\geq$ and $>$, respectively.}
\label{fig1}
\end{figure}
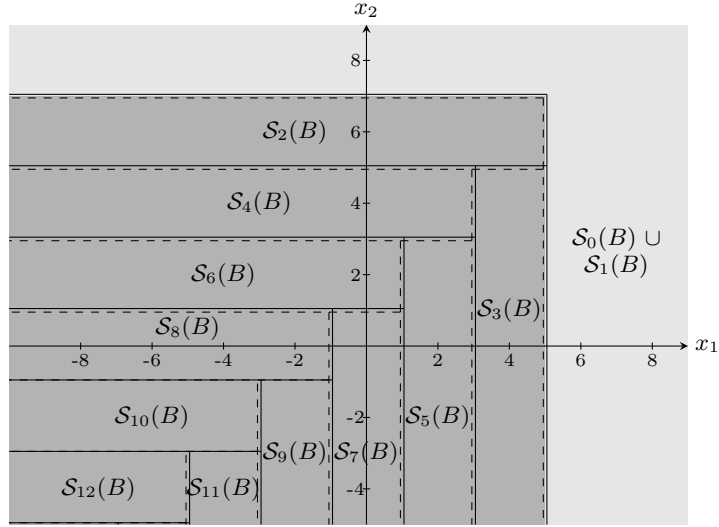
\section{Computational Benchmarks}
We compare the performance of Algorithms \ref{comp_cyc_trans} and \ref{comp_cyc_trans_smt}, to compute the transient of MPL systems. The experiments for both procedures are implemented in Python. For the SMT solver, we use Yices 2.2 \cite{Yices}. The computational benchmark has been implemented on an Intel\textregistered{} Xeon\textregistered{} CPU E5-1660 v3, 16 cores, 3.0GHz each, and 16GB of RAM. For the experiments, we generate 1000 irreducible matrices of dimension $n$, with $m$ finite elements in each row, where the values of the finite elements are rational numbers $\frac{p}{q}$ with $1\leq p\leq 100$ and $1\leq q\leq 5$. The locations of the finite elements are chosen randomly. We focus on irreducible matrices to ensure the termination of the algorithms. Algorithm \ref{comp_cyc_trans} is initialised by setting $V$ to be a max-plus identity matrix, while for Algorithm \ref{comp_cyc_trans_smt} the set of initial conditions is expressed as $X\equiv \mathtt{true}$. For all experiments, we choose $N=10000$ as the maximum bound. The benchmarks are stored at \url{https://es-static.fbk.eu/people/amicheli/resources/formats20/}, where we have chosen  $n\in\{4,6,8,10,20,30,40\}$ and three different values of $m$ for each $n$.

\begin{figure}[!ht]
  \begin{subfigure}[t]{.5\textwidth}
  \centering
  \includegraphics[scale=0.55]{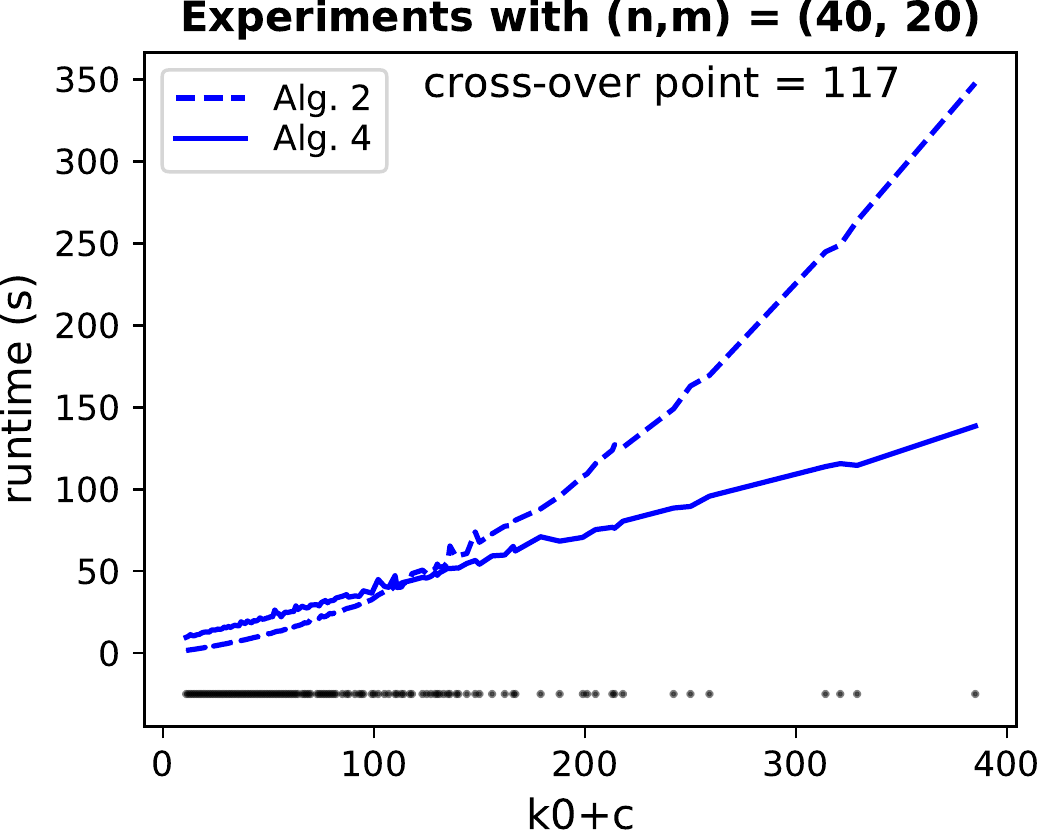}
  \caption{}
\end{subfigure}
\begin{subfigure}[t]{.5\textwidth}
  \centering
  \includegraphics[scale=0.55]{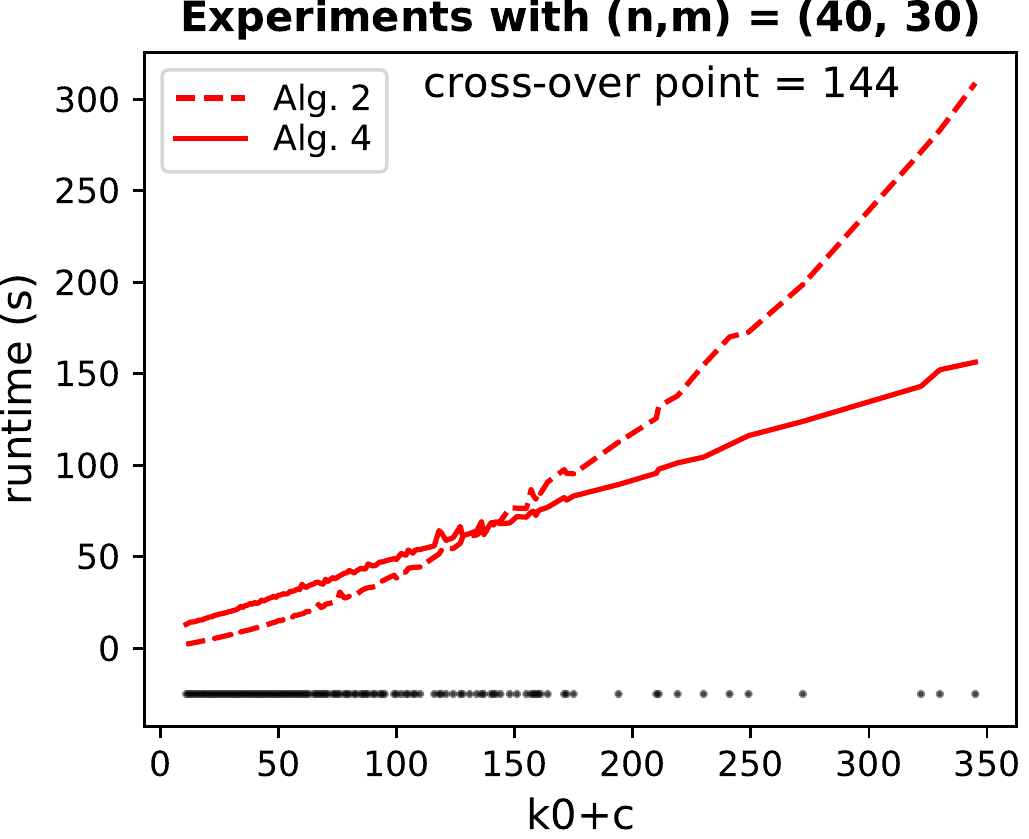}
  \caption{}
\end{subfigure}
\begin{subfigure}[t]{.5\textwidth}
  \centering
  \includegraphics[scale=0.55]{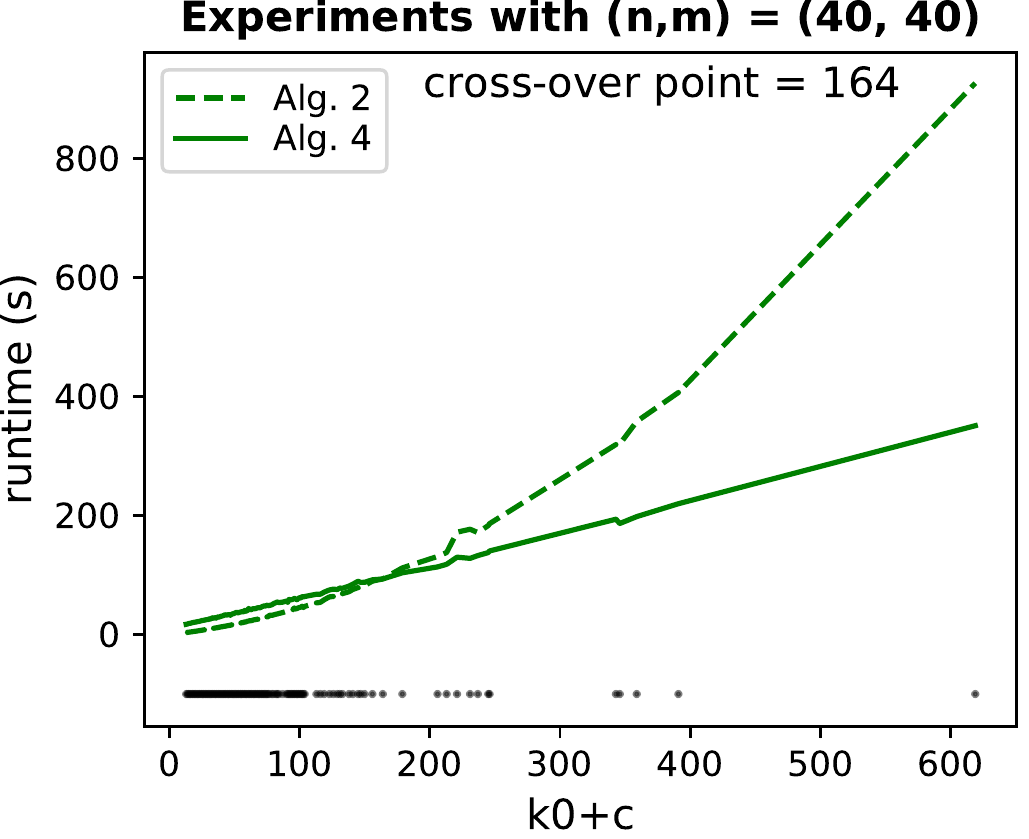}
  \caption{}
\end{subfigure}
\begin{subfigure}[t]{.5\textwidth}
  \centering
  \includegraphics[scale=0.55]{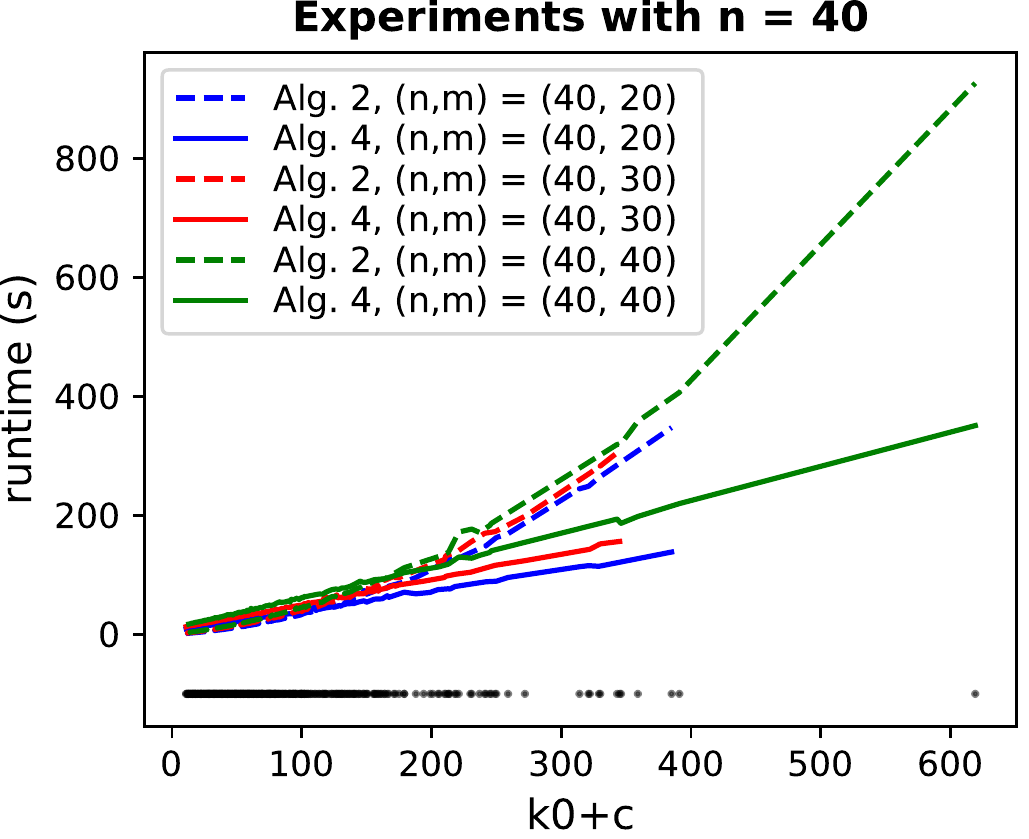}
  \caption{}
\end{subfigure}
    \captionsetup{format=hang,width=\linewidth}
    \caption{The plots of running time of Algorithms \ref{comp_cyc_trans} and  \ref{comp_cyc_trans_smt} from 1000 experiments with $n=40$ and $m\in \{20,30,40\}$. A ``cross-over point'' is the smallest value of $k_0+c$ when  Algorithm \ref{comp_cyc_trans_smt} is faster.} 
    \label{benchmark}
\end{figure}

\autoref{benchmark}(a)-(c) illustrate the experiments for $n=40$ and $m\in\{20,30,40\}$ (the experiments for other pairs $(n,m)$ are presented in the Appendix). They show the plots of the running times of Algorithm \ref{comp_cyc_trans} (dashed lines) and of Algorithm \ref{comp_cyc_trans_smt} (solid lines) against the resulting transient $k_0$ and cyclicity $c$ - the scattered plots (in black) correspond to the resulting $k_0+c$. 
If there are several experiments with the same value of $k_0+c$ then we display the average running time among those experiments. 
It is evident that most of the experiments result in small $k_0+c$.

With regards to the running time, the matrix-multiplication algorithm is faster when the values of $k_0+c$ are quite small. On the other hand, the larger the value of $k_0+c$, the better the performance of the SMT-based algorithm is. We argue that this is because in Algorithm \ref{comp_cyc_trans_smt} there may be a large increment from the current guess of transient and cyclicity to the new ones. Whereas in Algorithm \ref{comp_cyc_trans}, the next candidate of transient and cyclicity is increased by one at each iteration.

As depicted in \autoref{benchmark}(d), the number of finite elements $m$ clearly affects the running time of the algorithms. We recall that the running time of Algorithm \ref{comp_cyc_trans_smt} depends on the satisfaction checking of a difference logic formula in line 11. The more are the finite elements, the more likely the formula is complex, and therefore the slower is the associated running time. Interestingly, based on the outcomes of the benchmarks which are presented in the Appendix, the finite elements also affect the cross-over points, which tend to increase gradually as the number of finite elements grows larger.

\section{Conclusions and Future Work}
% This paper presented an SMT-based approach to compute the transient of MPL systems. The idea is to encode the problem as a sequence of satisfiability queries over formulae in difference logic. We also present a procedure to partition the state-space of MPL systems w.r.t. a given transient and cyclicity. 

% The procedure has been thoroughly tested on computational benchmarks, that show how the SMT-based algorithm is much faster for large values of transients and cyclicity. Furthermore, the SMT-based method can be applied to compute the transient for any set of initial conditions as long as it is expressible as an \lra formula.

% As for future research, we are interested to develop general SMT-based procedures for model-checking MPL systems. 

In this paper, we have introduced a novel, SMT-based approach to compute the transient of MPL systems: our technique encodes the problem as a sequence of satisfiability queries over formulae in difference logic, which can be solved by standard SMT solvers. We have also presented a procedure to partition the state-space of MPL systems w.r.t. a given transient and cyclicity pair.  
The procedure has been thoroughly tested on computational benchmarks and the results show how the SMT-based algorithm is much faster that state-of-the-art techniques to compute large values of transient and cyclicity. Furthermore, we highlight that the SMT-based method can be applied to compute the transient for any initial condition, as long as it is expressible as an \lra formula.

For future research, we are interested in exploring and developing SMT-based procedures for the general model checking of MPL systems.

%
% ---- Bibliography ----
%
% BibTeX users should specify bibliography style 'splncs04'.
% References will then be sorted and formatted in the correct style.
%
\bibliographystyle{splncs04}
\bibliography{References}

\begin{thebibliography}{10}
\providecommand{\url}[1]{\texttt{#1}}
\providecommand{\urlprefix}{URL }
\providecommand{\doi}[1]{https://doi.org/#1}

\bibitem{DiekyBackward}
Adzkiya, D., De~Schutter, B., Abate, A.: Backward reachability of autonomous
  max-plus-linear systems. IFAC Proceedings Volumes  \textbf{47}(2),  117--122
  (2014)

\bibitem{Alirezaei}
Alirezaei, M., van~den Boom, T.J., Babuska, R.: Max-plus algebra for optimal
  scheduling of multiple sheets in a printer. In: Proc. 31st American Control
  Conference (ACC), 2012. pp. 1973--1978 (June 2012)

\bibitem{Baccelli}
Baccelli, F., Cohen, G., Olsder, G.J., Quadrat, J.P.: Synchronization and
  linearity: an algebra for discrete event systems. John Wiley \& Sons Ltd
  (1992)

\bibitem{SMT}
Barrett, C.W., Sebastiani, R., Seshia, S.A., Tinelli, C.: Satisfiability modulo
  theories. In: Biere, A., Heule, M., van Maaren, H., Walsh, T. (eds.) Handbook
  of Satisfiability, Frontiers in Artificial Intelligence and Applications,
  vol.~185, pp. 825--885. {IOS} Press (2009).
  \doi{10.3233/978-1-58603-929-5-825},
  \url{https://doi.org/10.3233/978-1-58603-929-5-825}

\bibitem{Chris}
Brackley, C.A., Broomhead, D.S., Romano, M.C., Thiel, M.: {A max-plus model of
  ribosome dynamics during mRNA translation}. Journal of Theoretical Biology
  \textbf{303},  128--140 (2012)

\bibitem{Butkovic}
Butkovi{\v{c}}, P., Schneider, H., et~al.: Generators, extremals and bases of
  max cones. Linear algebra and its applications  \textbf{421}(2-3),  394--406
  (2007)

\bibitem{Charron}
Charron-Bost, B., F{\"u}gger, M., Nowak, T.: New transience bounds for max-plus
  linear systems. Discrete Applied Mathematics  \textbf{219},  83--99 (2017)

\bibitem{CT}
Clarke, E., Kroening, D., Ouaknine, J., Strichman, O.: Completeness and
  complexity of bounded model checking. In: International Workshop on
  Verification, Model Checking, and Abstract Interpretation. pp. 85--96.
  Springer (2004)

\bibitem{Comet}
Comet, J.P.: Application of max-plus algebra to biological sequence
  comparisons. Theoretical computer science  \textbf{293}(1),  189--217 (2003)

\bibitem{Yices}
Dutertre, B.: Yices 2.2. In: Intl. Conf. on Computer Aided Verification
  \emph{(CAV'14)}. LNCS, vol.~8559, pp. 737--744 (2014)

\bibitem{Fahim}
Fahim, K., van~der Woude, J., et~al.: On a generalization of power algorithms
  over max-plus algebra. Discrete Event Dynamic Systems  \textbf{27}(1),
  181--203 (2017)

\bibitem{Gaubert}
Gaubert, S., Katz, R.D.: Minimal half-spaces and external representation of
  tropical polyhedra. Journal of Algebraic Combinatorics  \textbf{33}(3),
  325--348 (2011)

\bibitem{Heidergott}
Heidergott, B., Olsder, G.J., Van~der Woude, J.: Max Plus at work: modeling and
  analysis of synchronized systems: a course on Max-Plus algebra and its
  applications. Princeton University Press (2014)

\bibitem{Aleksey}
Imaev, A., Judd, R.P.: Hierarchial modeling of manufacturing systems using
  max-plus algebra. In: Proc. American Control Conference, 2008. pp. 471--476
  (June 2008)

\bibitem{Merlet}
Merlet, G., Nowak, T., Sergeev, S.: Weak csr expansions and transience bounds
  in max-plus algebra. Linear Algebra and its Applications  \textbf{461},
  163--199 (2014)

\bibitem{Mufid2020}
Mufid, M.S., Adzkiya, D., Abate, A.: Symbolic reachability analysis of high
  dimensional max-plus linear systems (2020), accepted in the International
  Workshop on Discrete Event Systems (WODES)

\bibitem{Mufid2019}
Mufid, M.S., Adzkiya, D., Abate, A.: Bounded model checking of max-plus linear
  systems via predicate abstractions. In: International Conference on Formal
  Modeling and Analysis of Timed Systems. pp. 142--159. Springer (2019)

\bibitem{Nowak}
Nowak, T., Charron-Bost, B.: An overview of transience bounds in max-plus
  algebra. Tropical and Idempotent Mathematics and Applications  \textbf{616},
  277--289 (2014)

\bibitem{Gerardo}
Soto Y~Koelemeijer, G.: On the behaviour of classes of min-max-plus systems.
  Ph.D. thesis, Delft University of Technology (2003)

\end{thebibliography}
\newpage

\begin{subappendices}
\renewcommand{\thesection}{\Alph{section}}%
\section{Appendix}
The following table summarizes the outcomes of the experiments for each pair $(n,m)$ w.r.t. the cross-over point and the maximum value of $k_0+c$ (denoted by $N^\ast$).
\begin{table}[!htt]
    \centering
    \caption{The summary of the experiments.}
    \begin{tabular}{|c|c|c|c|c|c|c|c|c|c|c|}
    \cline{1-3}\cline{5-7}\cline{9-11}
    $(n,m) $&cross-over& $N^\ast$ && $(n,m) $&cross-over& $N^\ast$&& $(n,m) $&cross-over& $N^\ast$\\
    \cline{1-3}\cline{5-7}\cline{9-11}
    (4,2) & 28 &1013&&(10,6)&52&659&&(30,10)&99&455\\
    \cline{1-3}\cline{5-7}\cline{9-11}
    (4,3) & 37 &349&&(10,8)&45&285&&(30,20)&103&698\\
    \cline{1-3}\cline{5-7}\cline{9-11}
    (4,4) & 42 &126&&(10,10)&47&733&&(30,20)&127&328\\
    \cline{1-3}\cline{5-7}\cline{9-11}
    (6,2) & 31 &5488&&(20,12)&69&273&&(40,20)&117&385\\
    \cline{1-3}\cline{5-7}\cline{9-11}
    (6,4) & 34 &640&&(20,16)&76&464&&(40,30)&144&345\\
    \cline{1-3}\cline{5-7}\cline{9-11}
    (6,6) & 42 &186&&(20,20)&87&478&&(40,40)&164&619\\
    \cline{1-3}\cline{5-7}\cline{9-11}
    (8,4) & 42 &4208\\
    \cline{1-3}
    (8,6) & 45 &220\\
    \cline{1-3}
    (8,8) & 46 &320\\
    \cline{1-3}
    \end{tabular}
\end{table}\\
The following figures illustrate the experiments for each pair $(n,m)$.
\begin{figure}[!ht]
  \begin{subfigure}[t]{.5\textwidth}
  \centering
  \includegraphics[scale=0.55]{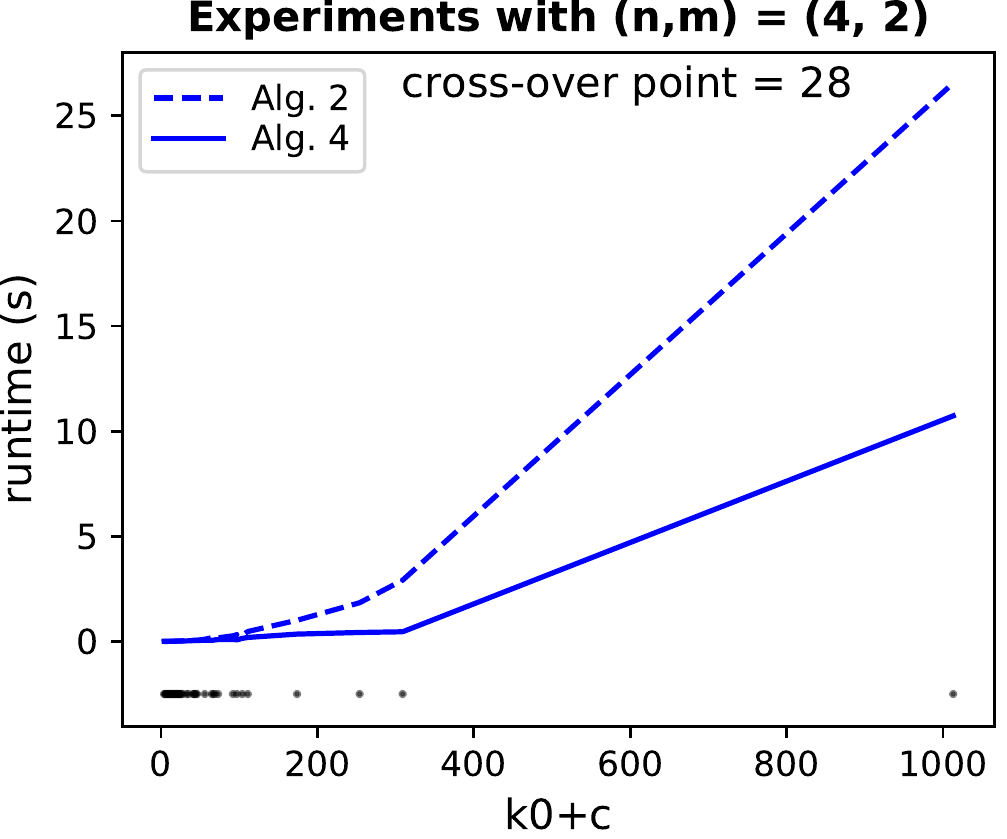}
\end{subfigure}
\begin{subfigure}[t]{.5\textwidth}
  \centering
  \includegraphics[scale=0.55]{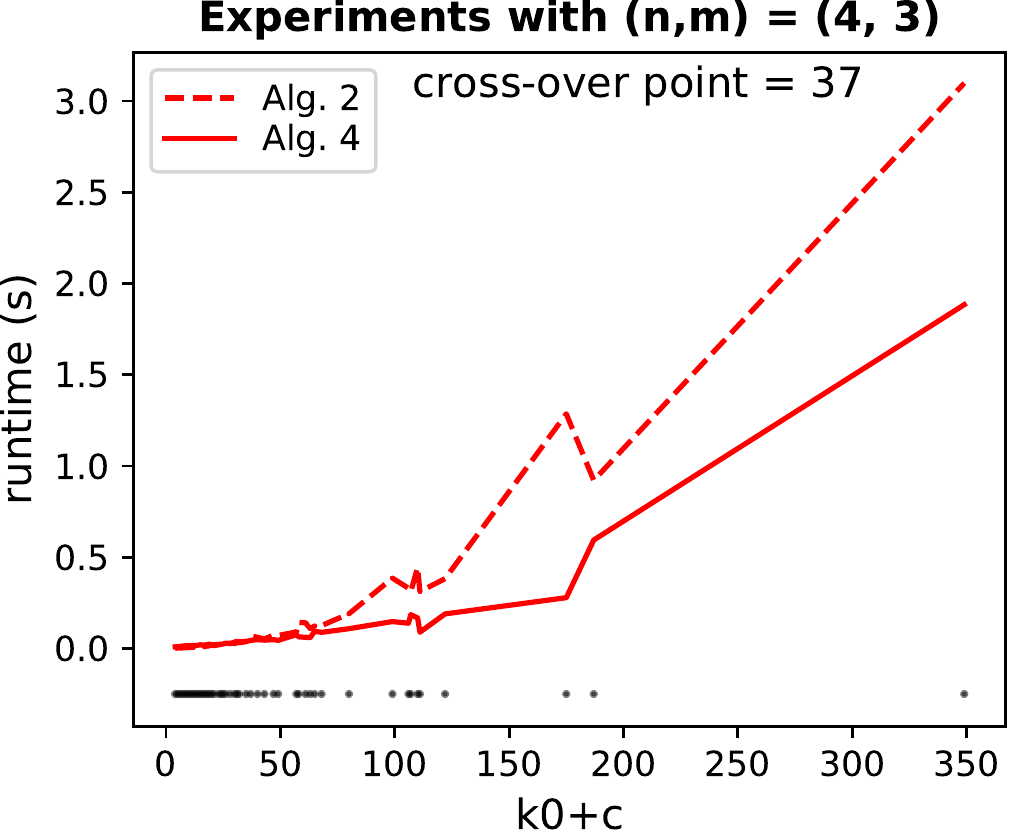}
\end{subfigure}
\vspace{0ex}\\
\begin{subfigure}[t]{.5\textwidth}
  \centering
  \includegraphics[scale=0.55]{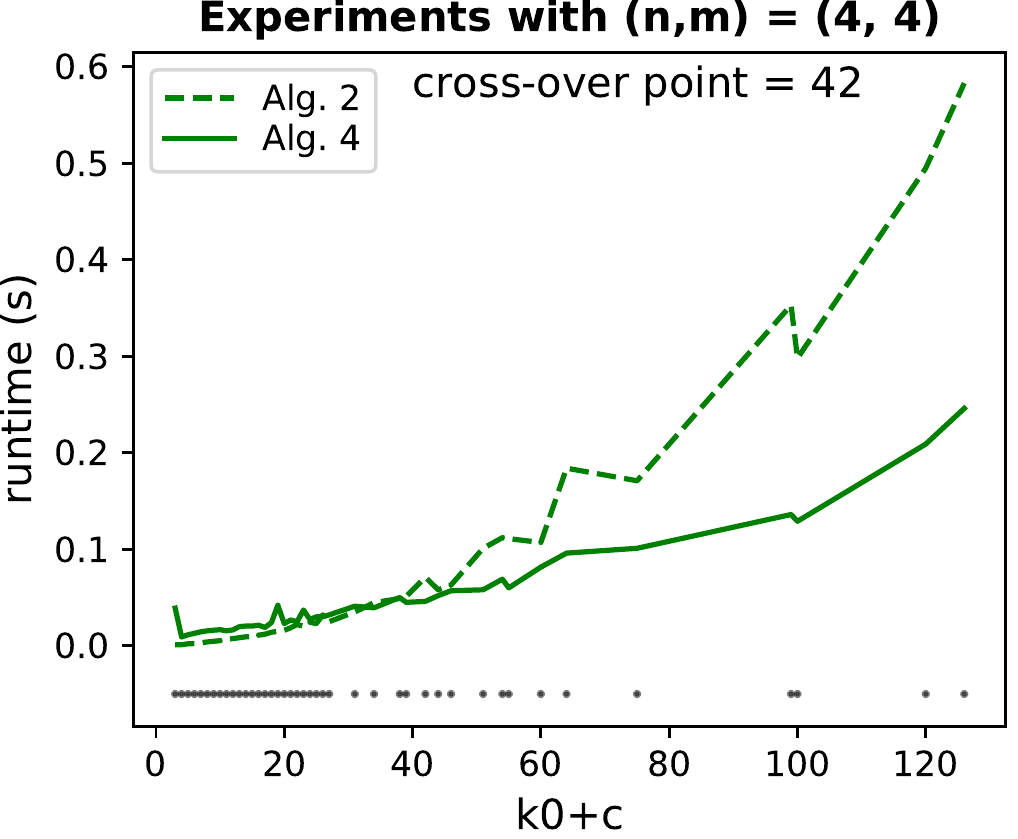}
\end{subfigure}
\begin{subfigure}[t]{.5\textwidth}
  \centering
  \includegraphics[scale=0.55]{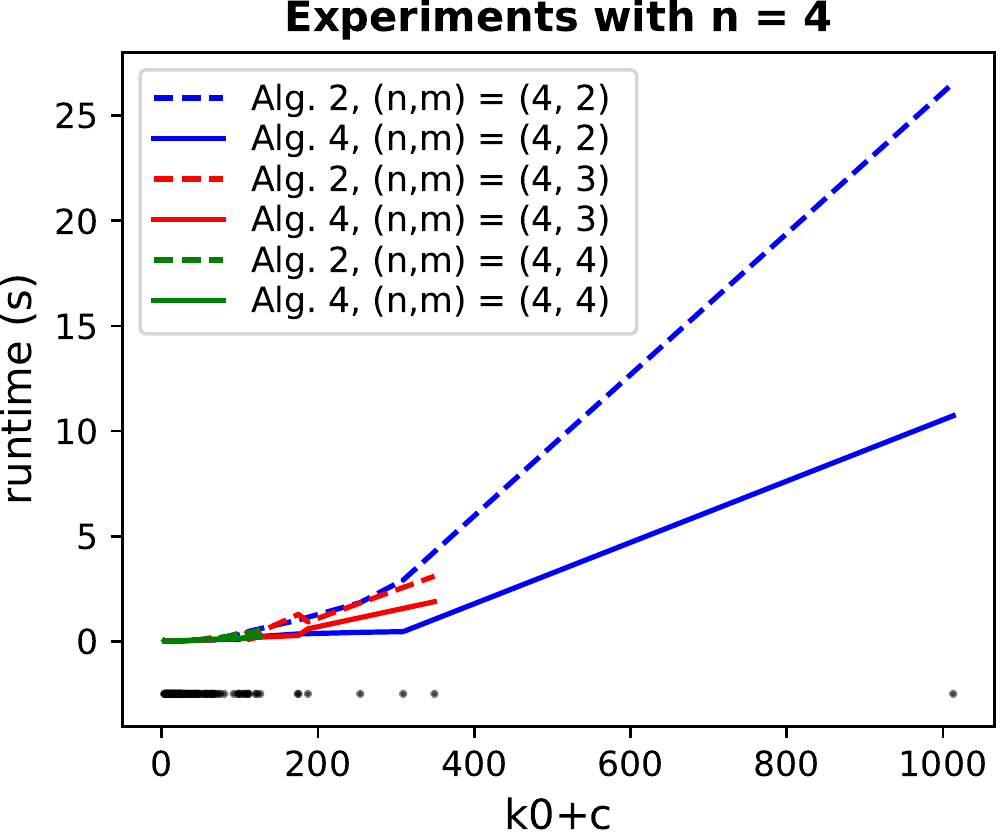}
\end{subfigure}
\end{figure}
\begin{figure}[!ht]
\begin{subfigure}[t]{.5\textwidth}
  \centering
  \includegraphics[scale=0.55]{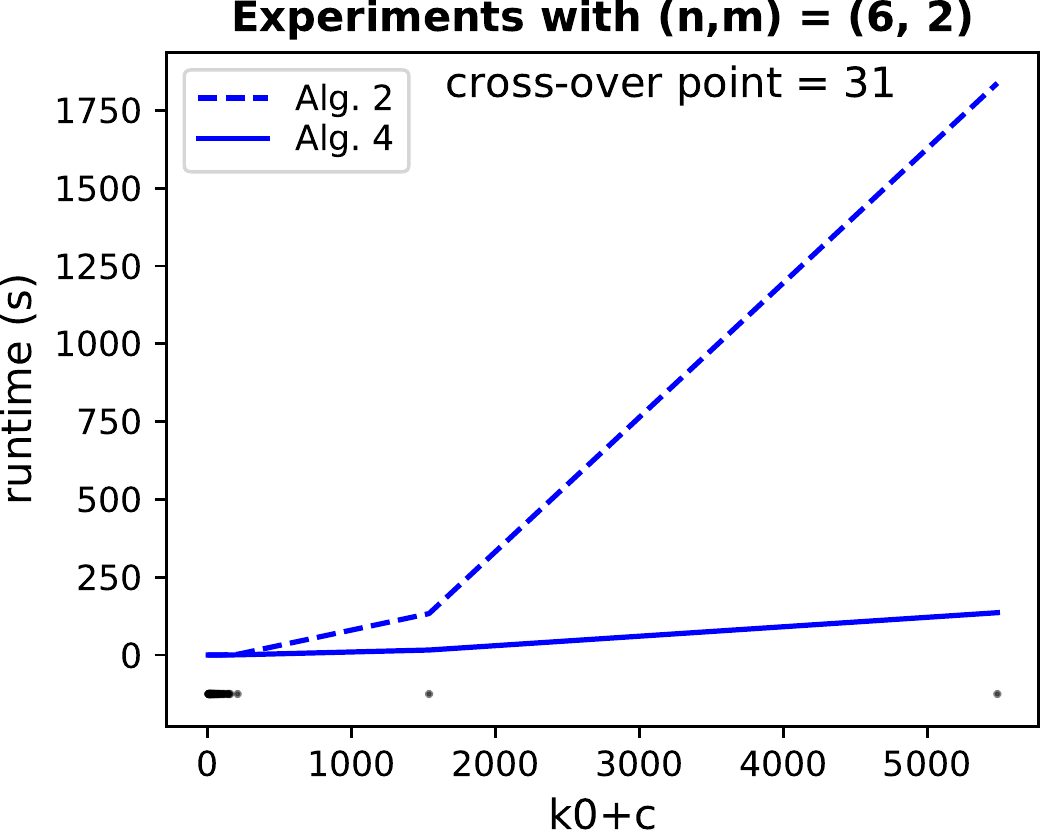}
\end{subfigure}
\begin{subfigure}[t]{.5\textwidth}
  \centering
  \includegraphics[scale=0.55]{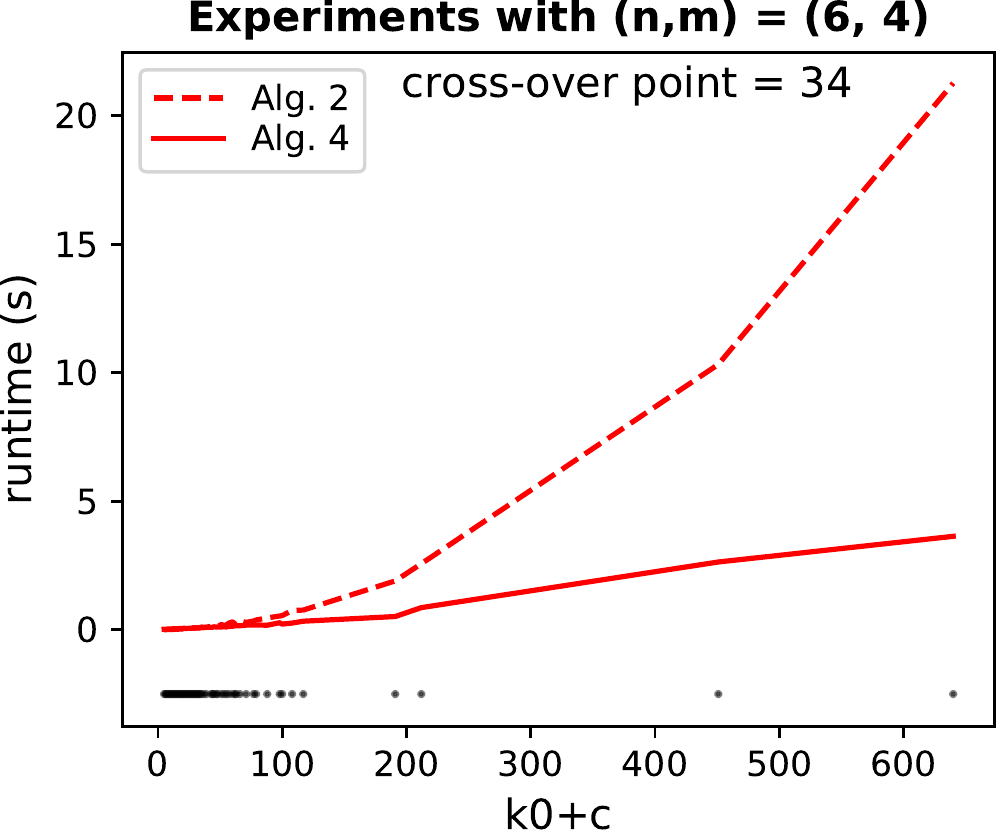}
\end{subfigure}
\vspace{0ex}\\
\begin{subfigure}[t]{.5\textwidth}
  \centering
  \includegraphics[scale=0.55]{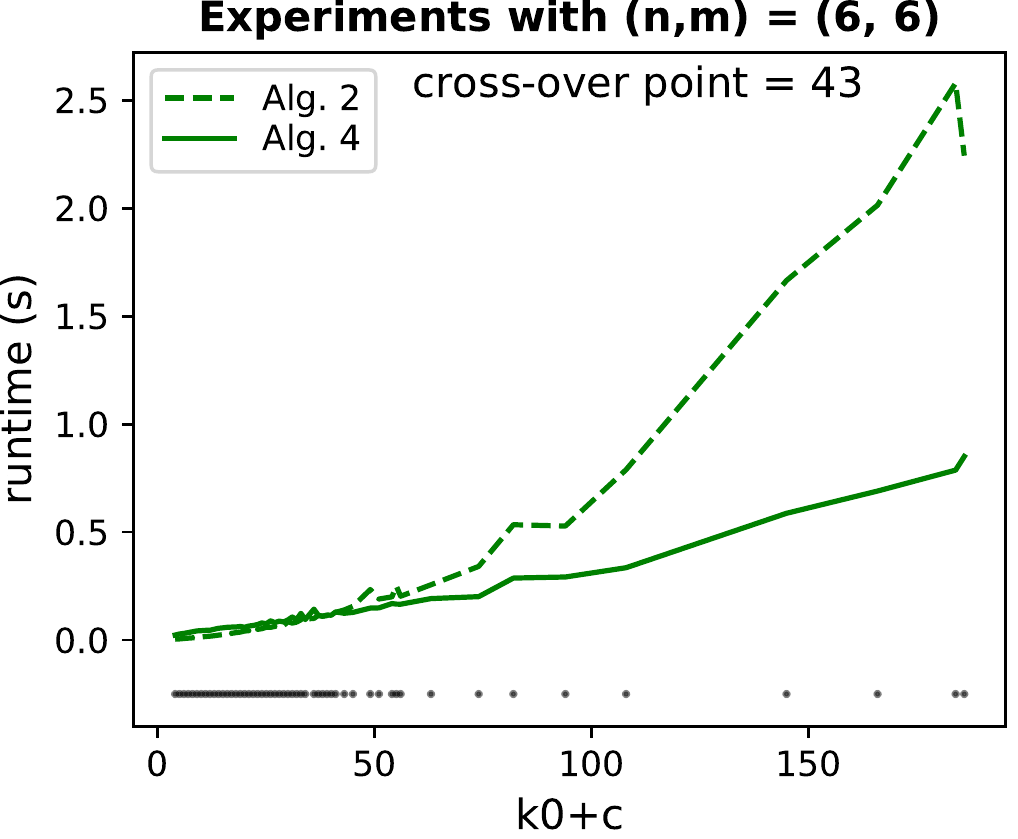}
\end{subfigure}
\begin{subfigure}[t]{.5\textwidth}
  \centering
  \includegraphics[scale=0.55]{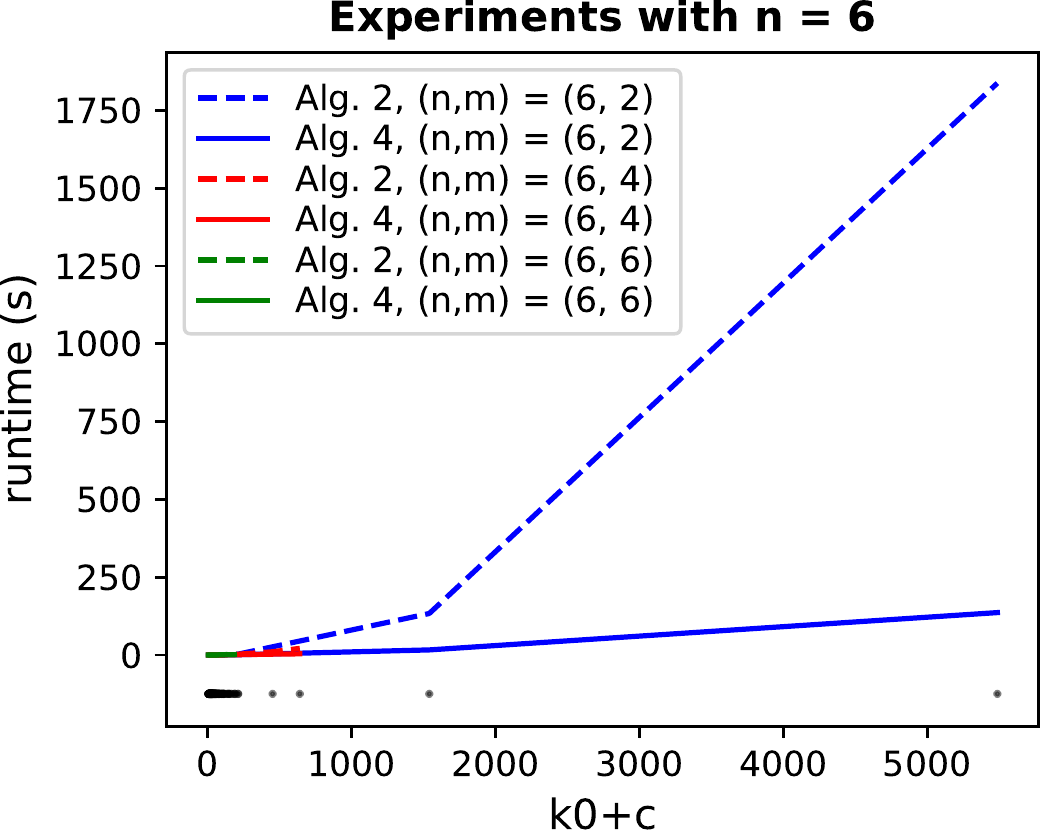}
\end{subfigure}
\vspace{0ex}\\
\begin{subfigure}[t]{.5\textwidth}
  \centering
  \includegraphics[scale=0.55]{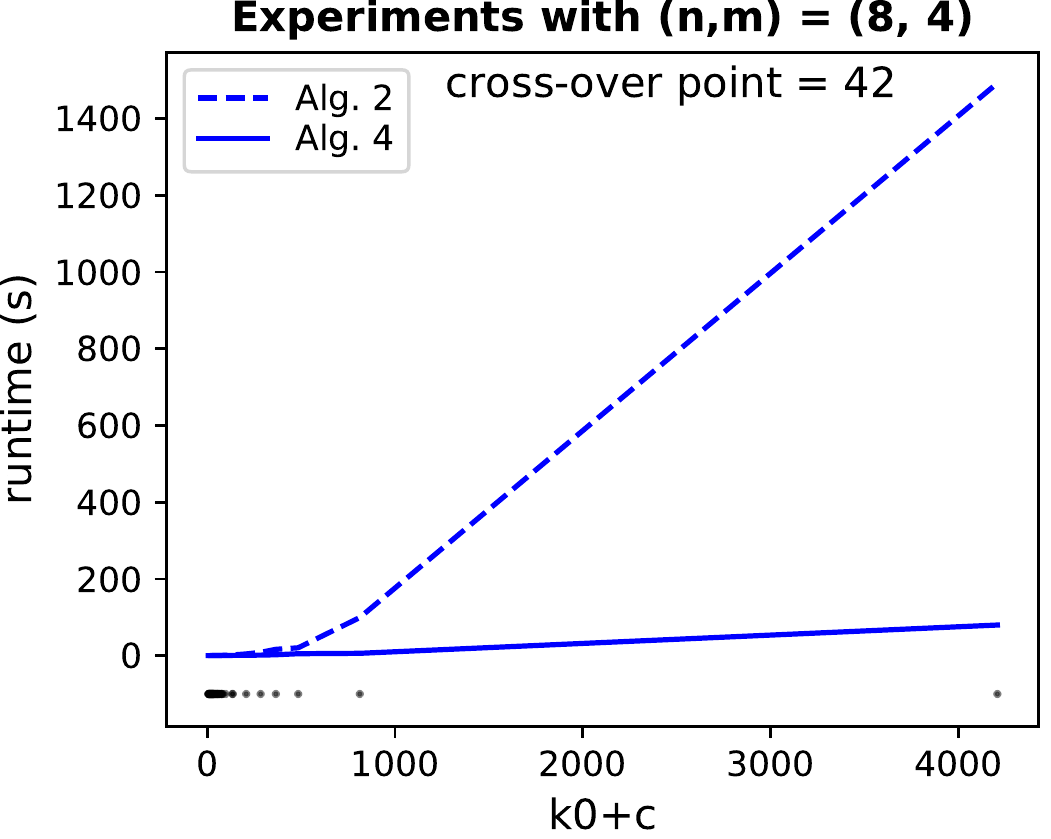}
\end{subfigure}
\begin{subfigure}[t]{.5\textwidth}
  \centering
  \includegraphics[scale=0.55]{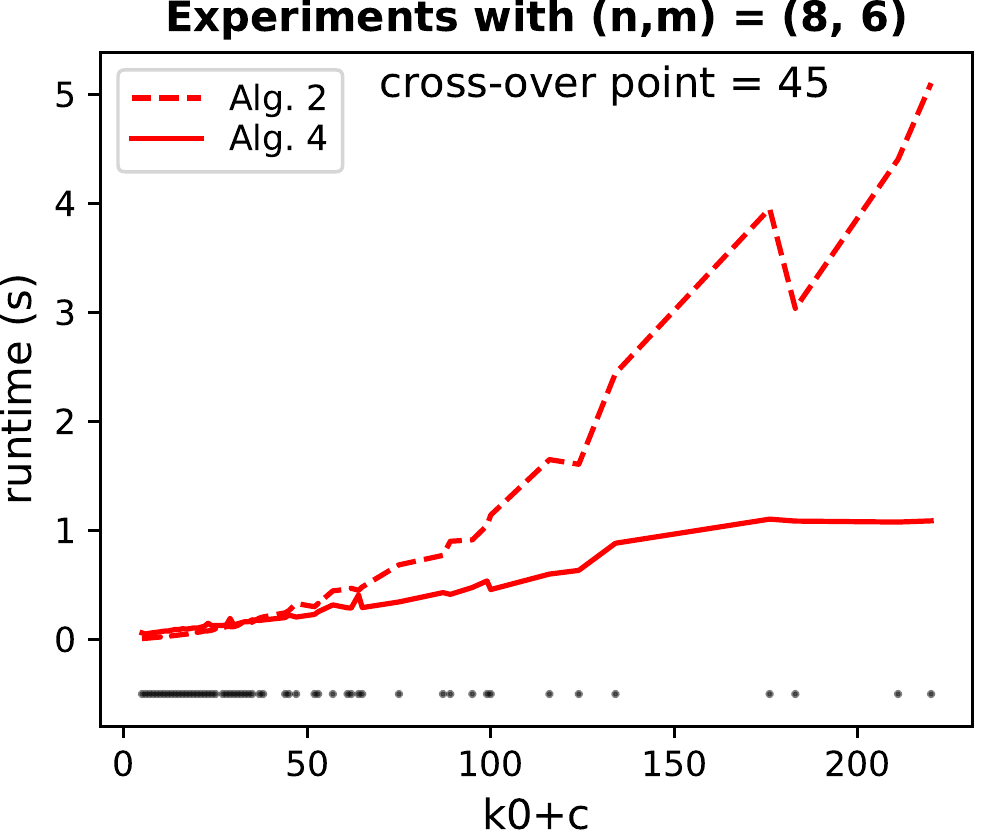}
\end{subfigure}
\vspace{0ex}\\
\begin{subfigure}[t]{.5\textwidth}
  \centering
  \includegraphics[scale=0.55]{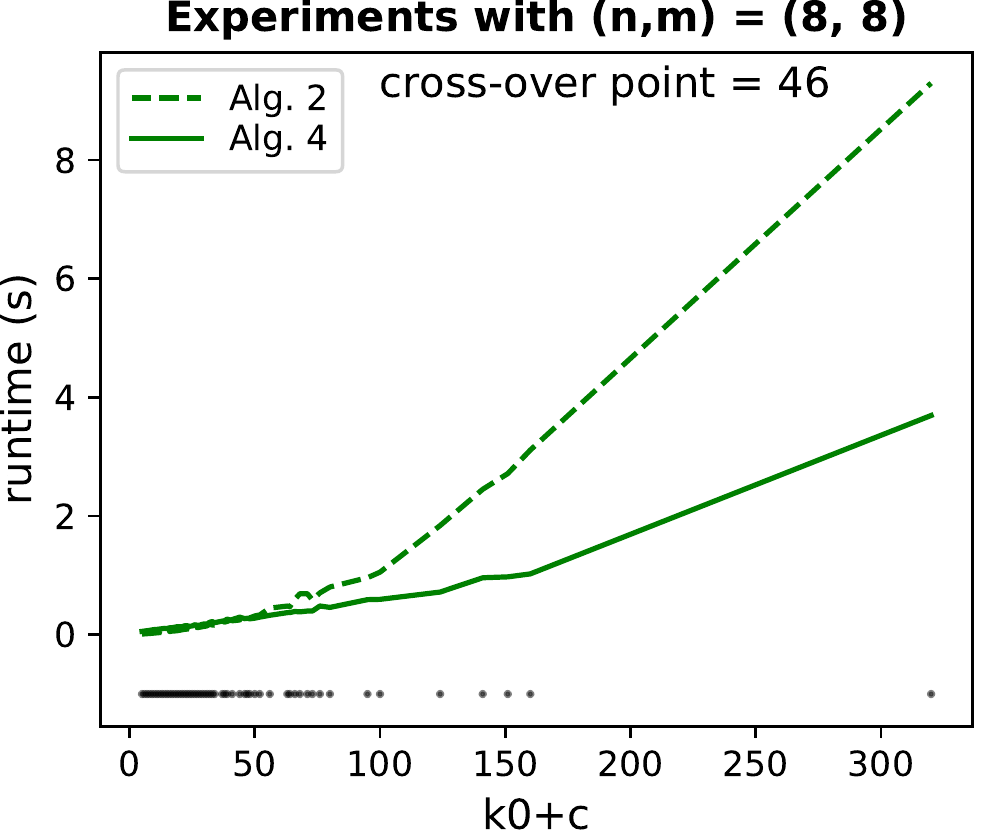}
\end{subfigure}
\begin{subfigure}[t]{.5\textwidth}
  \centering
  \includegraphics[scale=0.55]{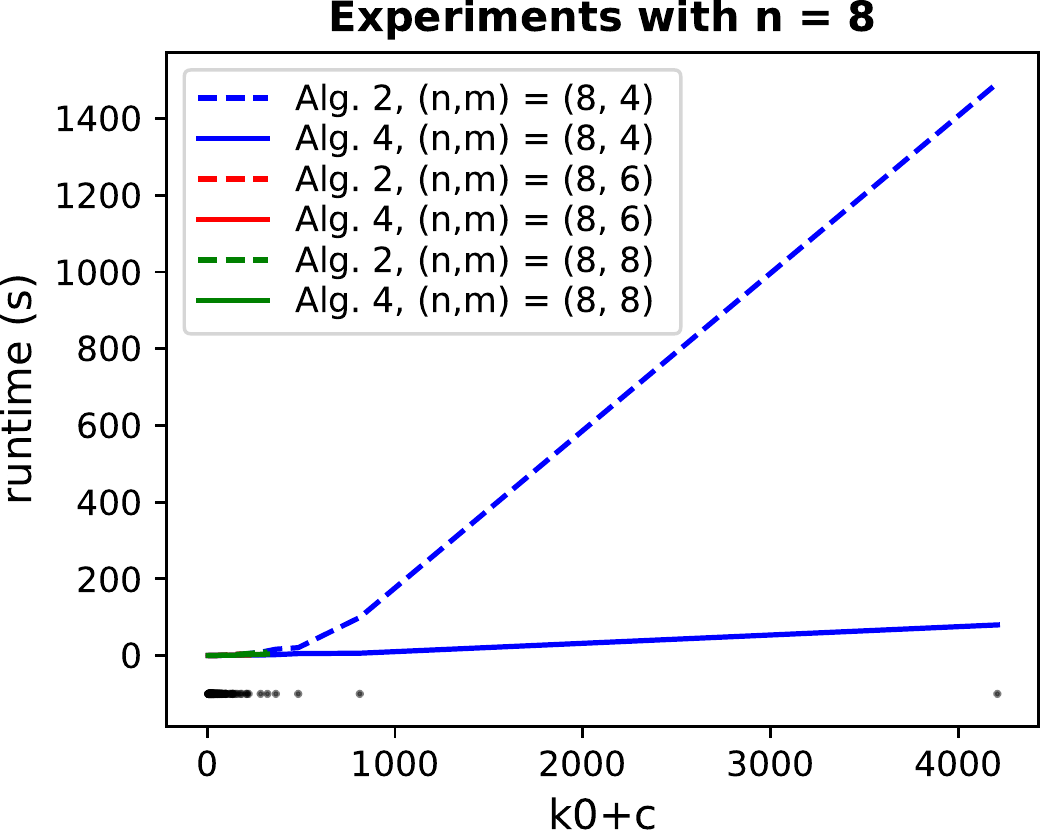}
\end{subfigure}
\end{figure}
\begin{figure}[!ht]
\begin{subfigure}[t]{.5\textwidth}
  \centering
  \includegraphics[scale=0.55]{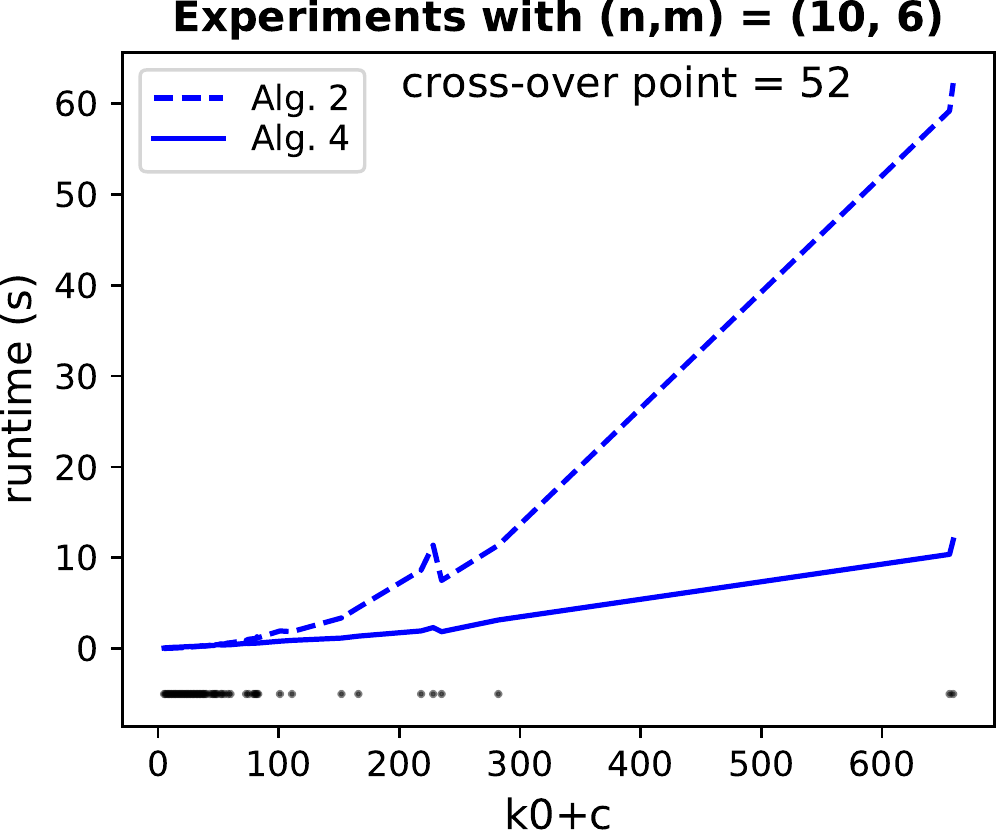}
\end{subfigure}
\begin{subfigure}[t]{.5\textwidth}
  \centering
  \includegraphics[scale=0.55]{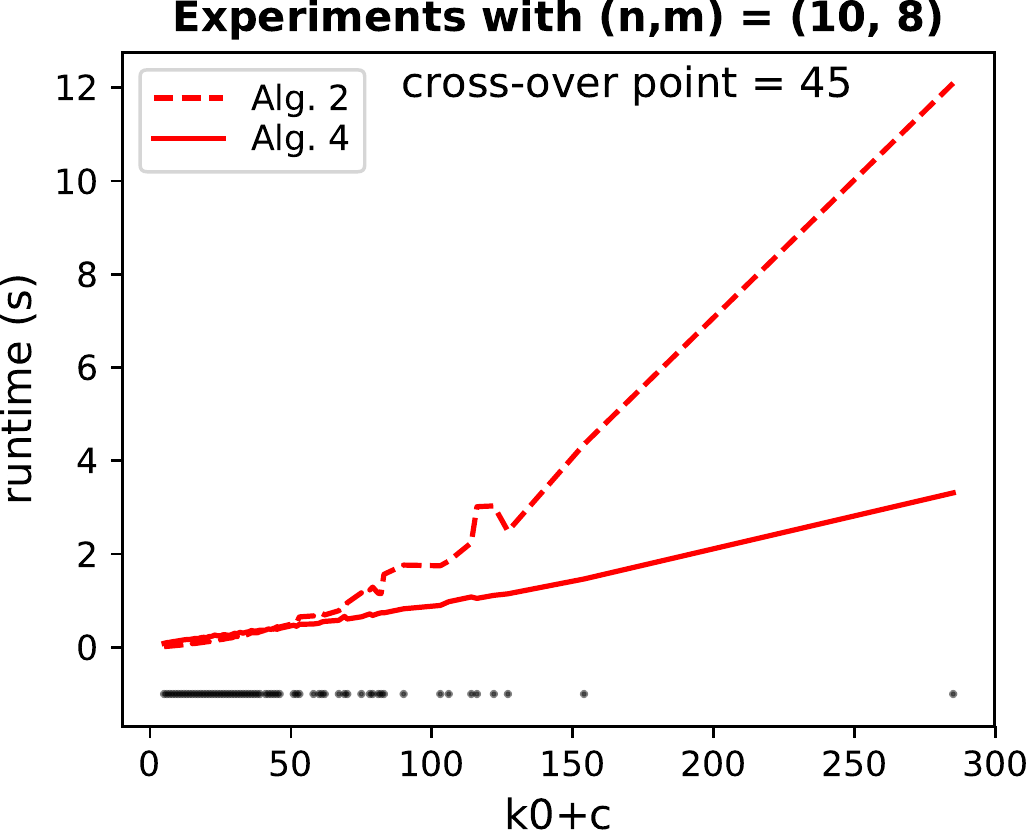}
\end{subfigure}
\vspace{0ex}\\
\begin{subfigure}[t]{.5\textwidth}
  \centering
  \includegraphics[scale=0.55]{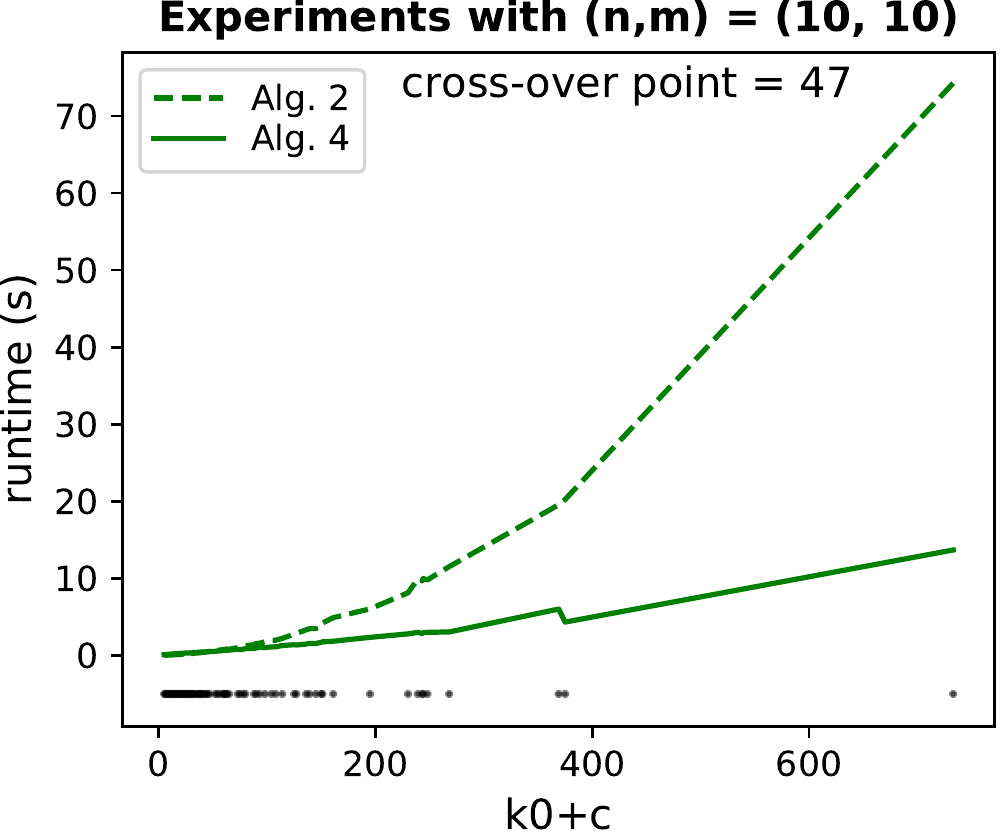}
\end{subfigure}
\begin{subfigure}[t]{.5\textwidth}
  \centering
  \includegraphics[scale=0.55]{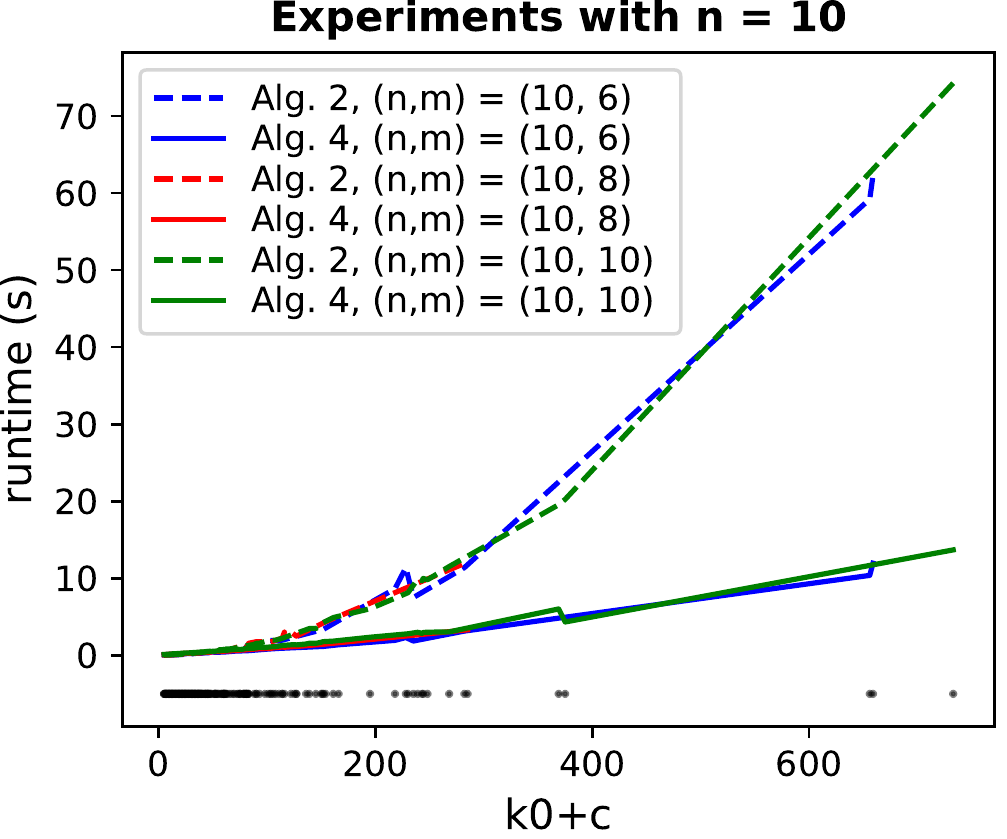}
\end{subfigure}
\vspace{0ex}\\
\begin{subfigure}[t]{.5\textwidth}
  \centering
  \includegraphics[scale=0.55]{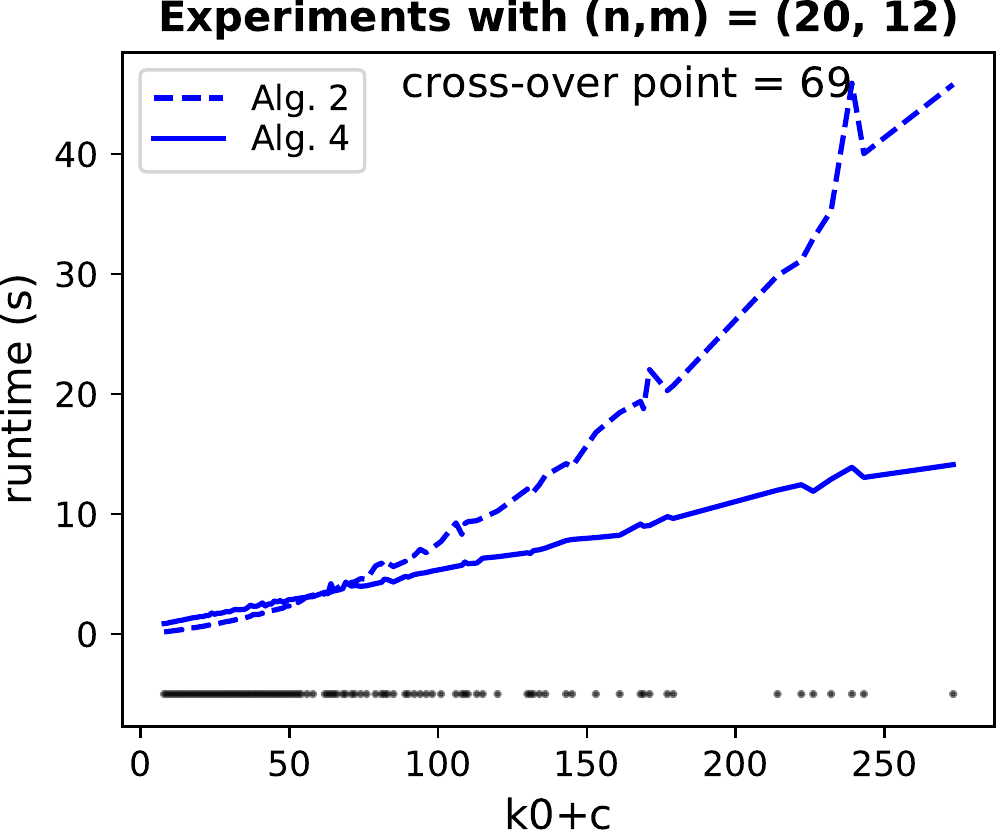}
\end{subfigure}
\begin{subfigure}[t]{.5\textwidth}
  \centering
  \includegraphics[scale=0.55]{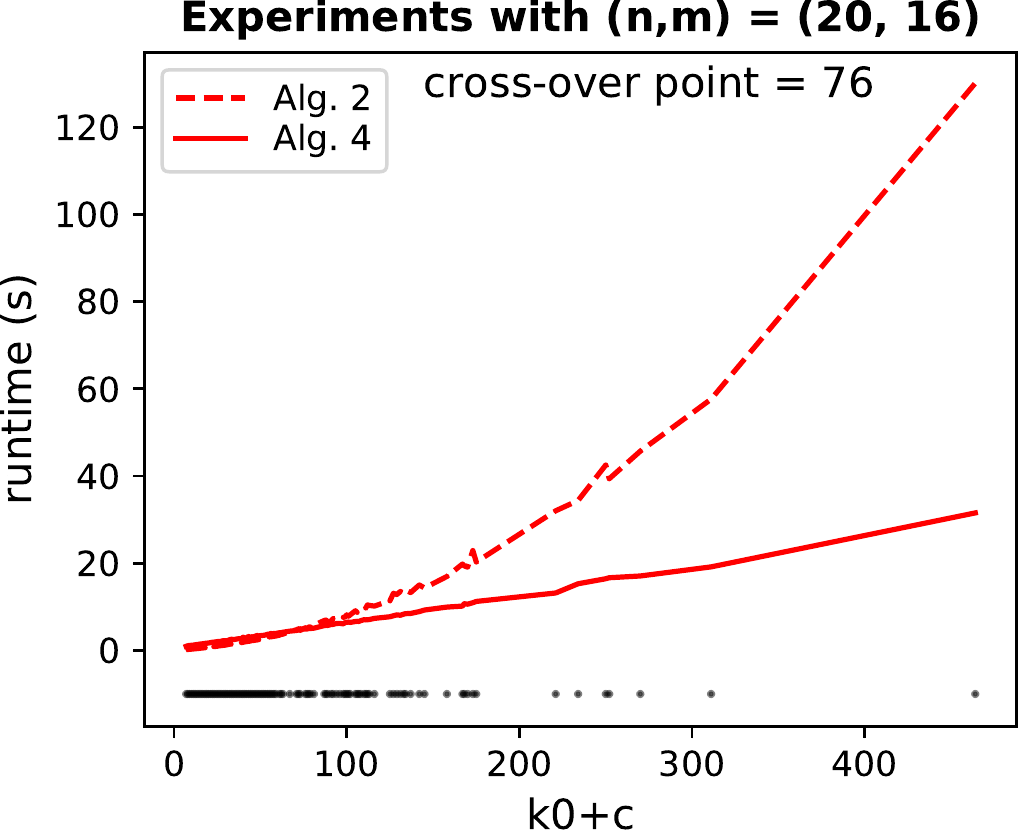}
\end{subfigure}
\vspace{0ex}\\
\begin{subfigure}[t]{.5\textwidth}
  \centering
  \includegraphics[scale=0.55]{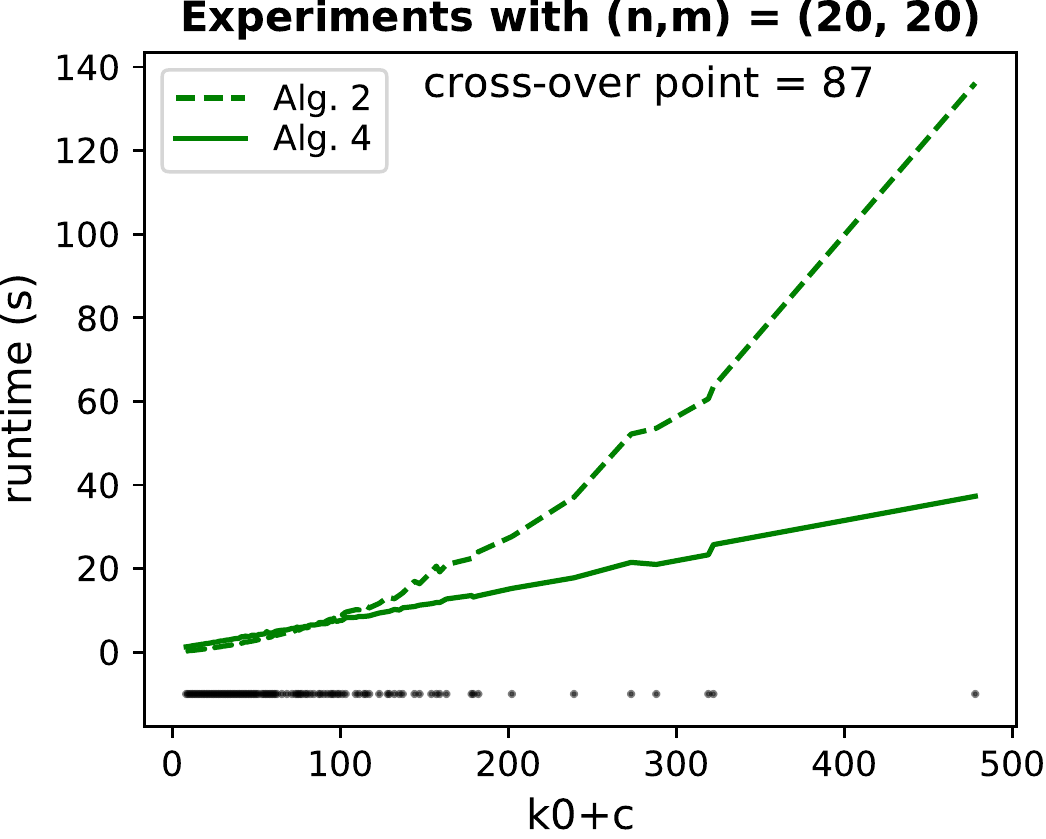}
\end{subfigure}
\begin{subfigure}[t]{.5\textwidth}
  \centering
  \includegraphics[scale=0.55]{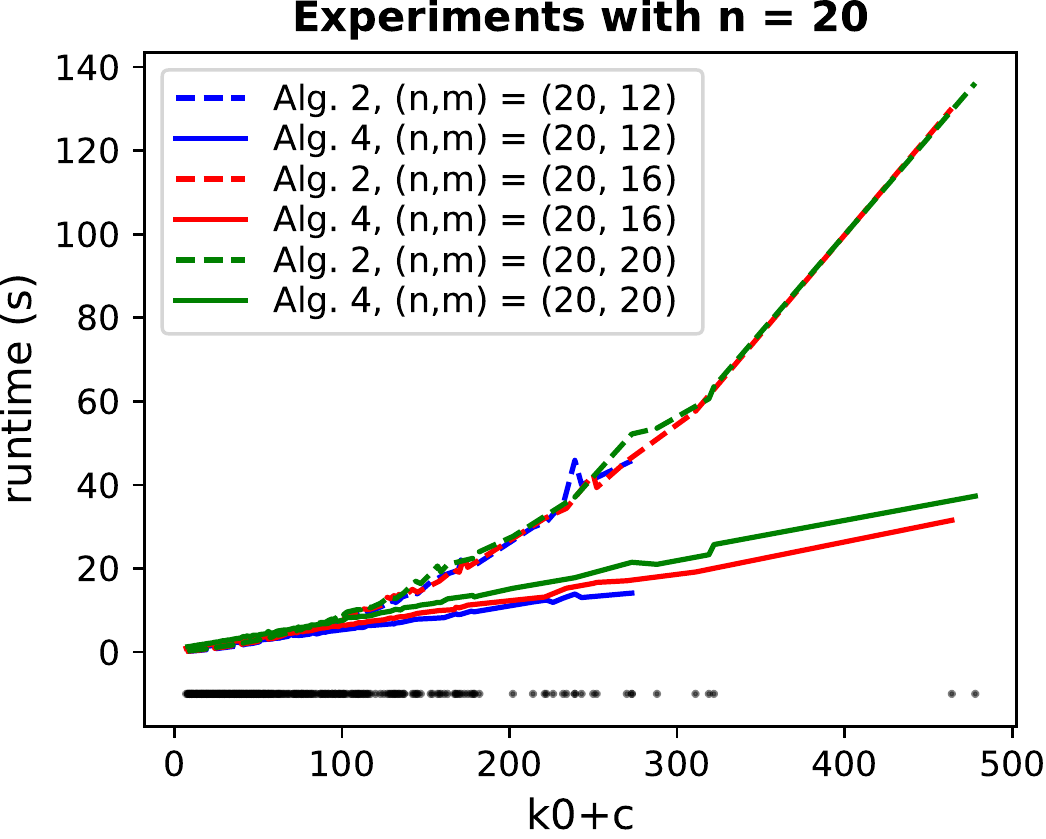}
\end{subfigure}
\end{figure}
\begin{figure}[!ht]
\begin{subfigure}[t]{.5\textwidth}
  \centering
  \includegraphics[scale=0.55]{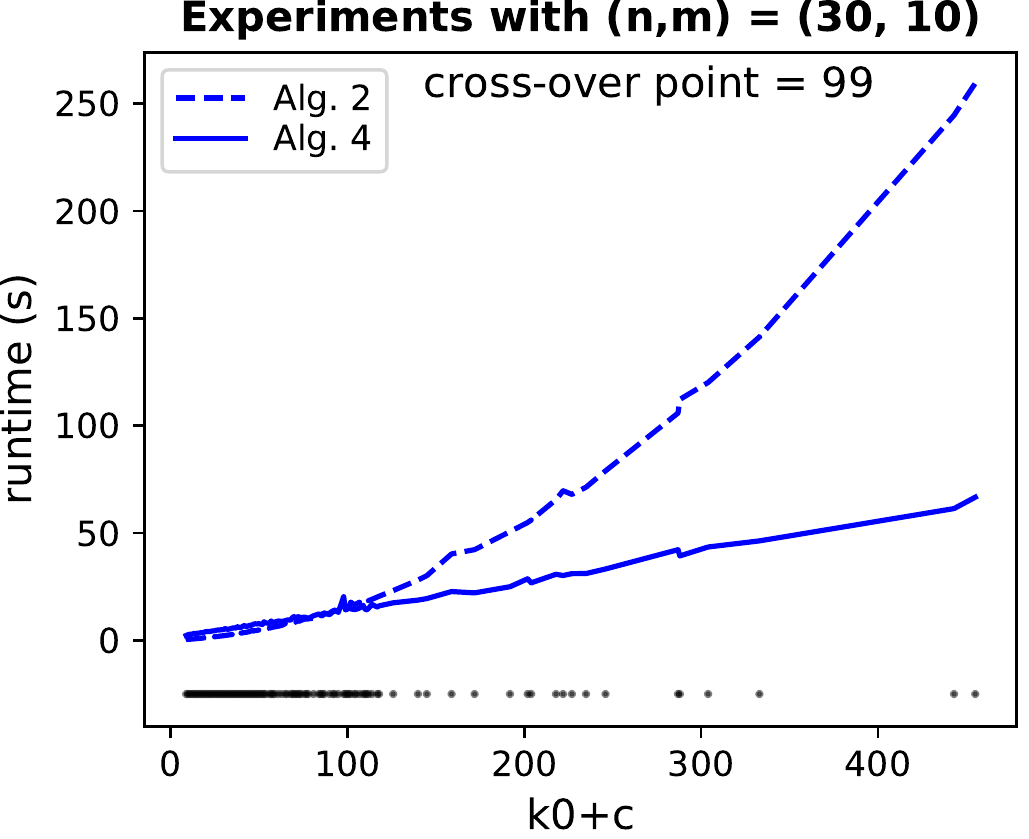}
\end{subfigure}
\begin{subfigure}[t]{.5\textwidth}
  \centering
  \includegraphics[scale=0.55]{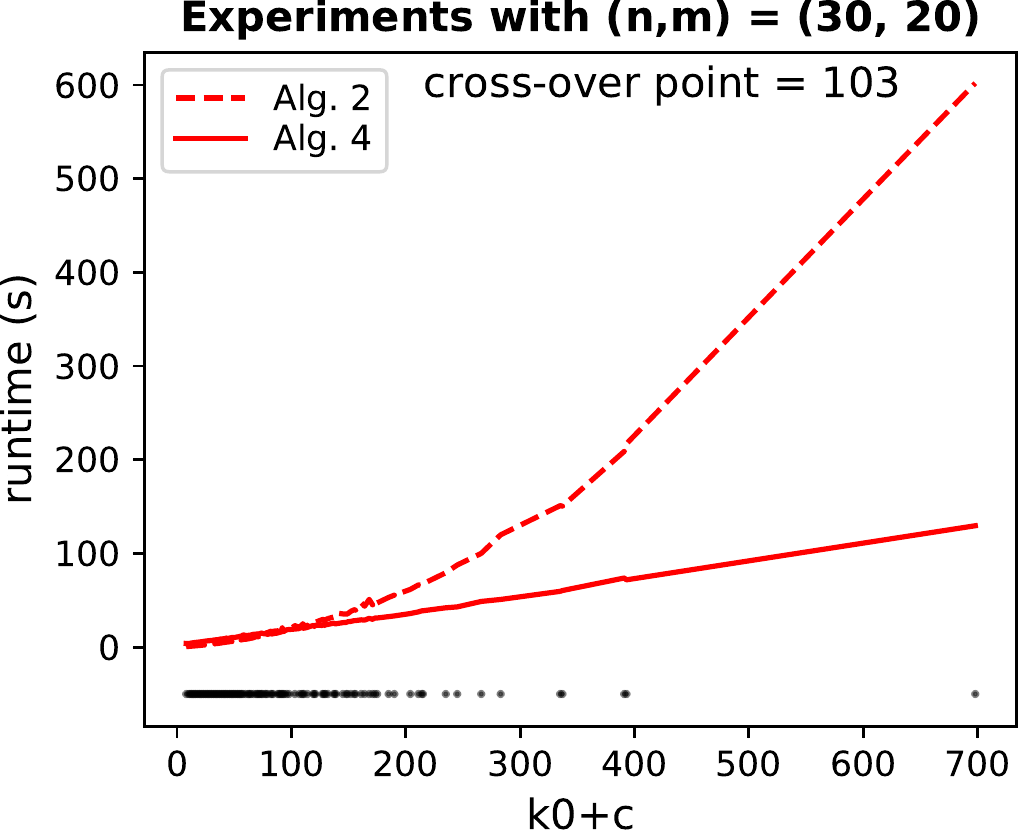}
\end{subfigure}
\vspace{0ex}\\
\begin{subfigure}[t]{.5\textwidth}
  \centering
  \includegraphics[scale=0.55]{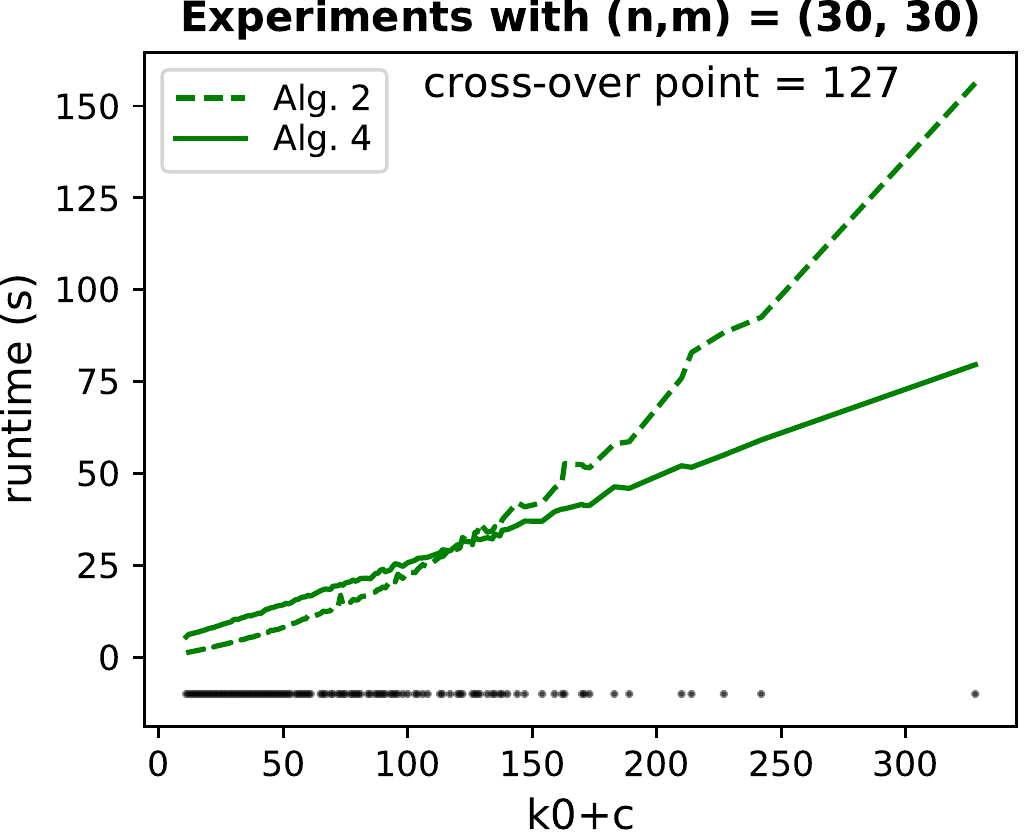}
\end{subfigure}
\begin{subfigure}[t]{.5\textwidth}
  \centering
  \includegraphics[scale=0.55]{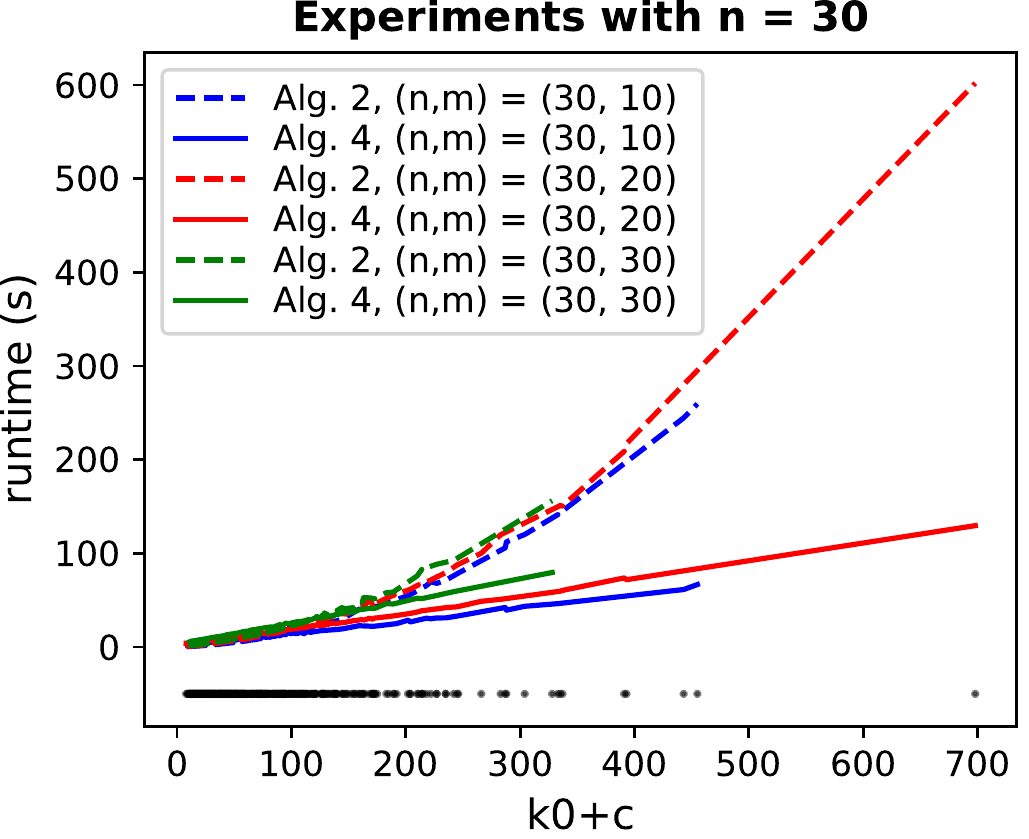}
\end{subfigure}
\vspace{0ex}\\
\begin{subfigure}[t]{.5\textwidth}
  \centering
  \includegraphics[scale=0.55]{plot40a.pdf}
\end{subfigure}
\begin{subfigure}[t]{.5\textwidth}
  \centering
  \includegraphics[scale=0.55]{plot40b.pdf}
\end{subfigure}
\vspace{0ex}\\
\begin{subfigure}[t]{.5\textwidth}
  \centering
  \includegraphics[scale=0.55]{plot40c.pdf}
\end{subfigure}
\begin{subfigure}[t]{.5\textwidth}
  \centering
  \includegraphics[scale=0.55]{plot40d.pdf}
\end{subfigure}
\end{figure}
\end{subappendices}
\end{document}